\def\eqref#1{equation~\ref{#1}}
\def\1{\bm{1}}
\DeclareMathAlphabet{\mathsfit}{\encodingdefault}{\sfdefault}{m}{sl}
\SetMathAlphabet{\mathsfit}{bold}{\encodingdefault}{\sfdefault}{bx}{n}
\newcommand{\iid}{\text{i.i.d.}}
\theoremstyle{plain}
\theoremstyle{definition}
\theoremstyle{remark}
\newcommand{\bE}{\mathbb E}
\newcommand{\bP}{\mathbb P}
\newcommand{\bR}{\mathbb R}
\newcommand{\bfI}{\mathbf I}
\newcommand{\cA}{\mathcal A}
\newcommand{\cC}{\mathcal C}
\newcommand{\cL}{\mathcal L}
\newcommand{\cN}{\mathcal N}
\newcommand{\cO}{\mathcal O}
\newcommand{\cS}{\mathcal S}
\newcommand{\cZ}{\mathcal Z}
\newcommand{\rvar}{\mathrm{Var}}
\newcommand{\as}{\text{a.s.}}
\newcommand{\bbm}[1]{\mathbbm{#1}}
\newcommand{\mse}{\mathrm{MSE}}
\newcommand{\dx}{\mathrm{d}}
\newcommand{\TV}{\mathrm{TV}}
\title{One-Bit Distributed Mean Estimation with Unknown \mbox{Variance}}
\author{\name Ritesh Kumar \email ritesharyan622@gmail.com \\
      \addr Department of Electrical Engineering \\
      Indian Institute of Technology Hyderabad, India
      \AND
      \name Shashank Vatedka \thanks{The work of Shashank Vatedka was supported by a Core Research Grant (CRG/2022/004464) from the Anusandhan National Research Foundation (ANRF).} \email shashankvatedka@ee.iith.ac.in \\
      \addr Department of Electrical Engineering \\
      Indian Institute of Technology Hyderabad, India}
\newcommand{\rkcomment}[1]{\textcolor{blue}{RK: #1}}
\def\BibTeX{{\rm B\kern-.05em{\sc i\kern-.025em b}\kern-.08em
    T\kern-.1667em\lower.7ex\hbox{E}\kern-.125emX}}
\begin{document}

\maketitle

\begin{abstract}
In this work, we study the problem of distributed mean estimation with $1$-bit communication constraints when the variance is unknown. We focus on the setting where each user has access to one $\iid$ sample drawn from a distribution belonging to a \emph{location–scale family}, and is limited to sending just a single bit of information to a central server whose goal is to estimate the mean. We propose simple non-adaptive and adaptive protocols and show that both achieve asymptotic normality. We derive bounds on the asymptotic (in the number of users) Mean Squared Error (MSE) achieved by these protocols. For a class of symmetric log-concave distributions, we derive matching lower bounds for the MSE of adaptive protocols, establishing the optimality of our scheme. Furthermore, we develop a lower bound on the MSE for non-adaptive protocols that applies to any symmetric strictly log-concave distribution, using a refined squared Hellinger distance analysis. Through this, we show that for many common distributions, including a subclass of the generalized Gaussian family, the asymptotic minimax MSE achieved by the best non-adaptive protocol is strictly larger than that achieved by our simple adaptive protocol. We also demonstrate that increasing the number of bits per user can only marginally reduce the asymptotic MSE of adaptive protocols. Our simulation results confirm a positive gap between the adaptive and non-adaptive settings, aligning with the theoretical bounds.
\end{abstract}

\section{Introduction}
The rise of big data has made distributed data analysis essential, as large datasets are often spread across multiple locations. Traditional data processing and learning techniques are often insufficient in such scenarios, motivating advances in distributed learning and estimation~\citep{mcmahan2017communication,chen2021distributed,kairouz2021advances}. In many problems such as distributed optimization and federated learning, communication often emerges as the main bottleneck, creating the need to develop communication-efficient algorithms for distributed learning, estimation, and optimization~\citep{alistarh2017qsgd,chen2021distributed,li2020federated,bernstein2018signsgd,kairouz2021advances,chen2018lag,wang2018atomo}.  

Extreme communication efficiency is also crucial in many signal processing applications, including IoT and sensor networks, where we typically have tens to hundreds of low-power devices that make noisy measurements which must be communicated to a fusion center for subsequent decisions or estimation. In such scenarios, severe power and bandwidth constraints have motivated substantial recent work on understanding the trade-off between communication cost and estimation accuracy~\citep{heinzelman2000energy,ribeiro2006bandwidth,luo2005universal,BenBasat2024QuickDME}.  

In particular, there has been growing interest in distributed algorithms for machine learning, optimization, and statistical inference when each of a large number of users is allowed to communicate only one bit per sample~\citep{seide20141,zhu2021onebit,fan20221,tang20211}. While such requirements may seem extreme, they offer significant benefits in terms of system design, particularly when using low-cost, resource-constrained devices.  

There is a large body of recent work on distributed mean estimation (DME) under such communication constraints. DME in the high-dimensional setting forms a basic building block for several distributed learning and optimization algorithms~\citep{suresh2017distributed,amiraz2022distributed,liu2023communication}. 

On a similar vein, there has been significant progress in understanding the fundamental limits of parametric estimation in location families—particularly Gaussian~\citep{cai2022nonparametric,cai2024distributed} and symmetric log-concave~\citep{kipnis2022mean,ritesh} location families—as well as non-parametric density~\citep{acharya2023optimal} and mean~\citep{lau2025sequential1bitmeanestimation} estimation under communication constraints. Estimation with more general information constraints has been studied in~\citep{acharya2020inference1,acharya2020inference2,acharya2021inference3,Koltchinskii}, while lower bounds on minimax risk for distributed statistical estimation~\citep{zhang2013information} and communication-efficient algorithms for distributed optimization on large-scale data~\citep{zhang2012communication} have been established. Practical algorithms have been proposed in~\citep{battey2018distributed,deisenroth2015distributed}. The distributed setting with one-bit information sharing is particularly challenging and has been extensively studied~\citep{venkitasubramaniam2007quantization,chen2009performance}. 
More recently,~\citep{han2021geometric,barnes2020lower,rahmani2024fundamental,shah2025generalized} derived lower bounds for distribution learning in parametric and nonparametric settings. The framework applies to both fixed and sequential message-passing protocols for various problems including communication-constrained mean and covariance estimation for Gaussian distributions. 
Understanding the parametric setting can yield new insights and aid the design of algorithms for problems such as distributed optimization and federated learning with communication constraints. 


\par 
Location-scale models have been widely studied in statistics, economics, and sociology. In economics, such models form the basis of price, demand, and welfare estimation, from classical discrete-choice models~\cite{deaton1980economics, mcfadden1974conditional} to more recent work addressing arbitrage-free forecasting and heavy-tailed market behavior 
\cite{carvalho2018arbitrage, guggenberger2022robust}. In quantitative finance, location–scale assumptions underpin models of returns and volatility, including ARCH/GARCH and their modern Bayesian and data-driven extensions~\cite{bollerslev1987conditionally, campbell1997econometrics, yu2019bayesian, sirignano2019univer}. Similarly, in sociology and psychometrics, latent-trait models such as the Rasch and logistic item-response formulations depends on scale--location structures~\cite{rasch1960probabilistic, birnbaum1968logistic, jeon2019bayesian}. 

In many of these problems, data is often generated across heterogeneous, geographically dispersed, or privacy-sensitive sources-retail outlets, financial exchanges, or survey respondents—where transmitting full-resolution observations is impractical due to 
bandwidth, latency, or confidentiality constraints~\cite{suresh2017distributed, konevcny2016federated, duchi2018minimax}. In such settings, one-bit and low-bit communication primitives could provide an attractive alternative by reducing communication, provide privacy (with additional differentially private mechanisms), and enable lightweight protocols that aggregate information without requiring centralized raw data access~\cite{zhang2013information, braverman2016communication, dana2022onebit}. Thus, the one-bit estimators developed in this work could potentially have applications in many of these areas as well.


\subsection{Contributions}

\par In this paper, we consider distributed mean estimation in a setting where each user has a single \iid\ sample and is allowed to share only one bit of information with a central server. Many of the existing works on DME, including~\citep{kipnis2022mean,cai2024distributed,ritesh}, assume that the mean is the only unknown parameter, which may not hold in practice. We consider scale-location families, where both the mean and variance are unknown, and the variance is treated as a nuisance parameter. We design simple non-adaptive and adaptive protocols for this problem, and derive bounds on the asymptotic mean squared error (MSE) achieved by our protocols. 
\par Despite the recent surge of interest in such problems, much less is known about the performance gap between adaptive and non-adaptive protocols when both the mean and variance are unknown. 
For many communication-constrained estimation/testing problems, it is an interesting open question as to whether non-adaptive protocols are strictly suboptimal compared to adaptive protocols. 
\par The main contribution of our work is to show that for a broad class of distributions, including the hyperbolic secant and generalized Gaussian with shape parameter $\beta < 1.85$, the asymptotic minimax MSE achieved by the best non-adaptive estimator is provably higher than that achieved by our simple 2-round adaptive protocol.
We do so by deriving a general lower bound on the minimax MSE of non-adaptive estimators for arbitrary symmetric strictly log-concave distributions. We then compare this against the MSE achieved by our adaptive protocol. We also derive a lower bound on the MSE achieved by adaptive estimators and prove that for a broad class of distributions, our simple 2-round protocol is in fact optimal. We also also compare our results with a lower bound on asymptotic MSE for the case where no communication constraints are imposed on the users, and observe that for many distributions our adaptive protocol achieves an MSE that is close to this lower bound.
\par Although we mainly study this problem through the lens of parametric estimation, the algorithms and insights derived here could aid the design of improved algorithms for distributed optimization and federated learning when there are severe communication constraints.

A short recorded video is provided as supplementary material here \cite{PPT_Video}.

\section{Problem Setup and Summary of Results}
\subsection{Problem Setup}
\begin{definition}\label{scal_loc_def}
Let \(f_X\) be a probability density function (pdf) on $\bR$ with zero mean and unit variance, and let $F_X$ be the corresponding cumulative distribution function (cdf).  

The \emph{location family} associated with \(f_X\) is defined as:
\begin{align}
    \cL (f_X) \triangleq \{ f_{X,\mu}(x) = f_X(x-\mu):~\mu \in \bR \}.
    \label{eq:loc-fam-def}
\end{align}

The \emph{scale-location family} associated with \(f_X\) is defined as:
\begin{align}
    \cL_{\text{SL}} (f_X) \triangleq \left\{ f_{X, \mu, \sigma}(x) = \frac{1}{\sigma} f_X \left(\frac{x-\mu}{\sigma} \right):~\mu \in \bR,~\sigma >0 \right\}.
    \label{eq:scal-loc-fam-def}
\end{align}
\end{definition}

Throughout the paper, we assume that $f_X$ is non-zero, $F_X$ is invertible, and $F_X^{-1}$ is differentiable at all points in $\bR$.

The main goal of this paper is to establish fundamental bounds on the achievable mean squared error (MSE) for estimating $\mu \in \bR$ in a distributed setting when $\sigma > 0$ is unknown and treated as a nuisance parameter.

\par Let us formally describe the problem. Suppose \(X_1, X_2, \dots, X_n\) are \(\iid\) samples drawn from a distribution with density function $f_{X, \mu, \sigma}$ belonging to $\cL_{\text{SL}}(f_X)$. We consider a central server and \( n \) users, where user $i$ has access to only one sample $X_i$ (also called the $i$'th input/source) and is allowed to send a single bit of information $Y_i\in\{0,1\}$ to the server.

\begin{figure}[tb]
    \centering
    \includegraphics[height=0.2\textheight]{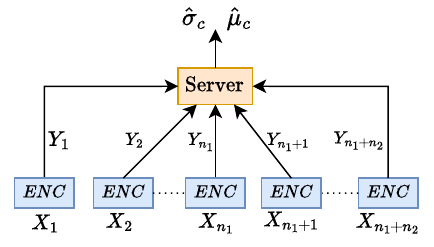}
    \caption{Illustration of our non-adaptive protocol. We partition the $n$ users into subsets of $n_1$ and $n_2$ users. Each user independently encodes their sample using a $1$-bit quantizer, with threshold $\theta_1$ for the first $n_1$ users and $\theta_2$ for the remaining $n_2$ users.}
    \label{fig1}
\end{figure}

We assume that $f_X$ is exactly known to the server, while $(\mu,\sigma) \in \bR \times \bR_{>0}$ are unknown. The goal is to design a communication protocol $\Pi$ that enables the server to reliably estimate \( \mu \) from \( Y_1,\ldots,Y_n \).  

Designing $\Pi$ amounts to specifying $n$ encoders (one for each user) that map $X_i$ to $Y_i$, and a decoder (at the server) that maps $(Y_1,\ldots,Y_n)$ to an estimate $\hat{\mu}$ of $\mu$. Importantly, we do not restrict the encoders to be identical, i.e., we can design a different map for each user if required.

\par We call a protocol \emph{non-adaptive} if each \( Y_i \) depends only on \( X_i \); it is \emph{adaptive} if \( Y_i \) can  depend on $X_i$ as well as \( Y_1,\ldots,Y_{i-1} \).  

\par The performance of a protocol is quantified via the asymptotic mean squared error:
\begin{align}\label{main:goal}
    \lim_{n\to\infty}n \cdot \mathrm{MSE}(\hat{\mu})\triangleq   \lim_{n \rightarrow \infty} n \cdot  \bE\left[ |\mu - \hat{\mu}|^2 \right],   
\end{align}
which may, in general, depend on $\mu$ and $\sigma$.

\par In this work, we consider both adaptive and non-adaptive protocols for scale-location families with symmetric log-concave base density $f_X$. We derive bounds on the optimal asymptotic MSE in these settings and show that for many distributions of interest, there is a positive and quantifiable gap between the optimal adaptive and non-adaptive protocols.

\subsubsection{Notation}
For a sequence of random variables $X_1,X_2,\ldots$ and a random variable $X$ all defined over the same probability space, we denote $X_n\xrightarrow{\as} X$ if the sequence of random variables converges almost surely to $X$. Likewise, $X_n\xrightarrow{d}X$ if the sequence converges in distribution to $X$.

We say that a protocol is strongly consistent if the final estimate $\hat{\mu}\xrightarrow{\as}\mu$ as $n\to\infty$ for every $\mu,\sigma$. Similarly, we say that $\hat{\mu}$ is asymptotically normal if $\sqrt{n}(\hat{\mu}-\mu)$ converges in distribution to a Gaussian as $n\to\infty$ for every $\mu,\sigma$.

\subsection{Summary of Results}
We first study a simple threshold-based $1$-bit non-adaptive protocol that yields strongly consistent and asymptotically normal estimates for jointly estimating both $\mu$ and $\sigma$. 
In Theorem~\ref{thm:mean_varinace_estimation}, we derive the achievable MSE for estimating $\mu$ and $\sigma$ using this scheme.

We then derive a lower bound on the asymptotic MSE for non-adaptive protocols, and this is one of the main contributions of this work. Specifically, we show that  if $\log\frac{1}{f_X}$ is strictly convex, then for any non-adaptive protocol, 
\begin{equation}\label{eq:mse_lb_nonadaptive_case}
\lim_{n\to\infty} \sup_{\mu}  \Big(n \cdot \mathrm{MSE}(\hat{\mu}) \Big)\geq \frac{0.1034 \,\sigma^2}{T(f_X)}, 
\end{equation}
where $T(f_X)$ is a parameter that depends on $f_X$. See Theorem~\ref{thm:lowerbound_nonadaptive} for more details.

\par We also study an intuitive multi-threshold non-adaptive protocol, where each user is assigned a different threshold. In Theorem~\ref{thm:multi-threshold-bias-var-full}, we establish bounds on the bias and variance of the resulting estimator. 

\par We then design a simple adaptive protocol for estimating $\mu$. This requires minimal adaptation and can be summarized at a high level as follows: We first run our non-adaptive protocol using only the first $n' = o(n)$ users, and the server produces coarse estimates $(\hat{\mu}_c, \hat{\sigma}_c)$ of the mean and variance respectively. The estimate $\hat{\mu}_c$ is broadcast to the remaining $n-n'$ users. These users then independently encode their samples to produce $Y_{n'+1},\ldots,Y_n$, which are used by the server to produce the final estimate $\hat{\mu}$.
In Theorem~\ref{thm:mse_adaptive_estimator}, we show that this simple adaptive scheme is strongly consistent, asymptotically normal, and achieves 
\begin{equation}\label{eq:mse_lb_adaptive_1}
\lim_{n\to\infty}n\cdot \mathrm{MSE}(\hat{\mu}) = \frac{\sigma^2}{4f_X^2(0)}.
\end{equation} 
\par We then show that this is optimal for a large subclass of symmetric, strictly log-concave distributions, by proving a matching lower bound on the MSE of any adaptive protocol (Theorem~\ref{thm:lowerbound_adaptive}).  
As a consequence of our results, we are able to show the strict suboptimality of non-adaptive protocols compared to adaptive protocols.  
For several important examples---including generalized Gaussian distributions (GGD):
\begin{equation}
f_X(x) = \frac{\beta}{2\alpha \Gamma(1/\beta)} \exp \left(-(|x|/\alpha)^\beta\right), 
\qquad \alpha \triangleq \sqrt{\tfrac{\Gamma(1/\beta)}{\Gamma(3/\beta)}}, \label{def:GGD}
\end{equation}
with $1 < \beta < 1.85$,
and the hyperbolic secant---we show that the lower bound in \eqref{eq:mse_lb_nonadaptive_case} 
is strictly larger than the $\tfrac{\sigma^2}{4f_X^2(0)}$ achieved by our simple adaptive protocol, thereby establishing a positive and quantifiable gap between the optimal adaptive and non-adaptive performance.

\par We compare these bounds with one obtained using the Fisher information, corresponding to the centralized setting (equivalently, where there are no communication constraints). For many distributions, our simple adaptive protocol achieves an asymptotic MSE that is close to this lower bound, which shows that increasing the number of bits communicated per user can only marginally improve the performance.

\par We finally conclude with some comments and open questions.

\subsection{Related Work}

Perhaps the closest work to this paper is~\citep{kipnis2022mean}, which studied 1-bit DME for location families, where the mean is the only unknown parameter. Our protocols are inspired by their schemes for location families, and we also borrow some high-level ideas in calculating the asymptotic MSE. However, their work assumes that the mean is the only unknown parameter, and their results and techniques do not directly extend to the case where the variance is unknown. The analysis of our adaptive protocol is also significantly more challenging than theirs, as we must also obtain a coarse estimate of the variance, which introduces additional dependencies in the MSE derivation. Moreover, we take a totally different approach towards lower bounding the minimax MSE for non-adaptive protocols.  

The works~\citep{cai2022adaptive,cai2024distributed} studied DME of Gaussian distributions under more general multi-bit communication constraints, and derived bounds on the minimax MSE for finite (but large) $n$. However, both works assume that the range of $\mu$ is bounded (e.g., $\mu \in [0,1]$), and the results do not extend if this assumption is relaxed. While~\citep{cai2024distributed} assumes that the variance is known exactly,~\citep{cai2022adaptive} assumes that the variance is unknown but requires the protocol to know a lower bound on $\sigma$ in advance (with the achievable MSE depending on this bound). In fact,~\citep{cai2022adaptive} prove a gap between non-adaptive and adaptive protocols for Gaussian scale-location families. However, their approaches are tailored specifically to the Gaussian setting with restrictions on the range of $\mu,\sigma$, their protocols require more than one bit per user, and their methods do not extend easily to more general distributions. In contrast, we make no boundedness assumptions on $\mu$ or $\sigma$, and our results hold for a broad class of symmetric, strictly log-concave distributions.

It is worth noting that~\citep{kipnis2022mean} claimed strict suboptimality of non-adaptive protocols when the mean is the only unknown parameter. However, their lower bound for non-adaptive protocols relies on restrictive assumptions (see~\cite[Assumption A2]{kipnis2022mean}), and it is possible to design schemes that violate these. Indeed,~\citep{cai2024distributed} provide such a non-adaptive protocol for the Gaussian location family that achieves order-optimal MSE even under communication constraints. In our work, we prove strict suboptimality of non-adaptive protocols in the $\sigma$-unknown setting for a broad class of symmetric, strictly log-concave distributions, thereby generalizing the Gaussian-specific results of~\citep{cai2022adaptive}.

Very recently,~\citep{shah2025generalized} analyzed properties of the maximum likelihood (ML) estimator using threshold-based non-adaptive protocols for parametric estimation of a class of exponential families.  Specifically, they study protocols where $Y_i=\boldsymbol{1}_{\{X_i\geq \tau_i\}}$, where $\boldsymbol{1}_{\{\cdot\}}$ denotes the indicator function and for $i=1,2,\ldots,n$, $\tau_i\in\bR$ is a predefined threshold. The authors show that the ML estimator is consistent and asymptotically normal. Using this, they are able to compute the asymptotic mean squared error for Gaussian scale-location families as a function of $\tau_1,\tau_2,\ldots$, but further optimization of the thresholds and deriving simple closed-form upper/lower bounds was left for future work. In comparison, we design non-adaptive and adaptive protocols for mean estimation over general scale-location families, derive closed-form achievable and converse bounds, and prove matching lower bounds for adaptive protocols over symmetric, strictly log-concave distributions. Our lower bound for non-adaptive protocols holds in the broadest sense (subsuming threshold-based protocols).

More recently,~\citep{lau2025sequential1bitmeanestimation} studied nonparametric estimation of the mean with 1-bit communication constraints, but under a probably approximately correct (PAC) framework. They derived upper and lower bounds on the sample complexity, and notably showed that a two-stage adaptive protocol is order optimal.  
In contrast, we derive bounds on the asymptotic minimax MSE, with an aim to characterize the exact constants.

A number of works have derived lower bounds for various estimation problems with communication constraints~\citep{han2021geometric,zhang2012communication},~\citep{zhang2013information},~\citep{barnes2019fisher,barnes2020lower}.  However, these results do not directly yield lower bounds for our problem. Many of these works employ either the Van Trees inequality or Le Cam's method~\citep{van2004detection, bj/1186078362,le2012asymptotic,polyanskiy2024information}, and we use both approaches in deriving our bounds. Our adaptive lower bound uses the Van Trees inequality, similar to~\citep{kipnis2022mean}, while our non-adaptive lower bound is obtained by upper-bounding the squared Hellinger distance between transcripts under two different mean values, followed by an application of Le Cam's method~\citep{lecam1973convergence,polyanskiy2024information}. 

A parallel line of work (e.g.,  \citep{Konecny2018RDME,Mayekar2021WynerZiv,BenBasat2024QuickDME,Babu2025L1Unbiased}) studied distributed \emph{empirical mean estimation} for high-dimensional vectors, with applications to distributed optimization and federated learning. Unlike our work where the samples are assumed to be $\iid$, here the samples are fixed but arbitrarily chosen (typically to be of unit norm). The goal is to design communication-efficient compressors and sketches and study accuracy–communication tradeoffs. While the setup is slightly different, many high-level ideas between these works and ours (e.g., randomized quantizers and sketching ideas) are conceptually related.

Along similar lines, ~\cite{vardhan2025collaborative} have recently  
investigated communication-efficient distributed mean estimation with collaborative one-bit compressors, in which clients use private and shared randomness to compress their source. 
Their basic scheme operates by deliberately perturbing the source with independent gaussian noise, and the server uses a mean estimator that looks similar to the one described in Section~\ref{consistentestimation}. The authors also extend this to high-dimensional sources and derive bounds on the worst-case MSE assuming that the input samples lie within a (known) bounded set. 


\section{Non-Adaptive Protocols}\label{sec:nonadaptivescheme}

\subsection{A Consistent Protocol for Location Families}\label{consistentestimation}
 To provide some intuition for the design of our protocols, let us begin with the simple non-adaptive protocol considered in~\citep{kipnis2022mean} for estimating location families. For now, assume that each $X_i$ is an i.i.d.\ sample drawn according to $f_{X,\mu}(x)=f_X(x-\mu)$, where $\mu\in\mathbb{R}$ is the only unknown parameter. 

 User $i$ transmits 
\begin{align*}
    Y_i = \bbm{1}_{\{X_i < \theta\}},
\end{align*}
where \( \theta\in\bR \) is a pre-set threshold and $\bbm{1}_E$ is the indicator of event $E$. At the server, we compute \( \bar{Y}_n\) as follows:
\begin{align}
\bar{Y}_n = \frac{1}{n} \sum_{i=1}^{n} Y_i \xrightarrow{\as} F_{X}(\theta - \mu), \label{empmeanofbarY}
\end{align}
where convergence follows from the strong law of large numbers.  Using this, the server can estimate $\mu$ by inverting the quantile~\citep{kipnis2022mean,lehmann2006theory} as follows
\[
\hat{\mu} = \theta - F^{-1}_{X}(\bar{Y}_n).
\]

From the continuous mapping theorem \citep{casella2002statistical},  \(\hat{\mu} \xrightarrow{\as} \mu\) as \(n \rightarrow \infty\), and hence \(\hat{\mu}\) is a strongly consistent protocol of \(\mu\).
\par As discussed in \citep{kipnis2022mean}, $\sqrt{n}(\bar{Y}_n-F_X(\theta-\mu ))$ converges in distribution to a Gaussian with zero mean and variance $F_X(\theta-\mu)(1-F_X(\theta-\mu))$ as $n\to\infty$. Using the delta method, they show that $\sqrt{n}(\hat{\mu}-\mu)$ is asymptotically normal with 

\begin{align*}
 \lim_{n\to\infty}n \cdot \mse(\hat{\mu})=   \frac{F_X\left( \theta - \mu \right)\left( 1 - F_X\left( \theta - \mu \right)\right) }{f_X^{2}(\theta - \mu)}.
\end{align*}

\subsection{A Consistent Non-Adaptive Protocol for Scale-Location Families}\label{subsec:nonadaptivescheme}
As a warmup, let us extend the ideas above to design a 1-bit non-adaptive protocol for estimating $\mu,\sigma$.
\par  We partition the users into two subsets of size $n_1$ and $n_2$ respectively. 
Here\footnote{To keep the calculations simple, we assume that $K_1n,K_2n$ are integers. However, the results do not change if we choose $n_1,n_2$ to be the closest integers to $K_1n,K_2n$ respectively.} \( n_1 = K_1 n \) and \( n_2 = K_2 n \), where \( K_1 \) and \( K_2 \) are constants independent of \( n \), and \( K_1 + K_2  =1 \).  We select two different thresholds $\theta_1$ and $\theta_2$, and the encoders operate as follows: \\ 
For \( 1 \leq i \leq n_1 \),
\[ Y_i = 
\begin{cases} 
1 & \text{if } X_i < \theta_1, \\
0 & \text{otherwise}.
\end{cases} 
\]
For \( n_1 +1 \leq i \leq n \), 
\[
 Y_i = 
\begin{cases} 
1 & \text{if } X_i < \theta_2, \\
0 & \text{otherwise}.
\end{cases}
\]  
 The users share the encoded samples with server. The server first computes 
\begin{align*}
    F_1 \triangleq  \frac{1}{n_1} \sum_{i=1}^{n_1}  Y_{i},  \qquad  F_2 \triangleq \frac{1}{n_2} \sum_{i=n_1 +1}^{n_2}  Y_{i}, 
\end{align*}
and then
\begin{align}\label{eq:alpha1alpha2}
    \alpha_1  \triangleq F_{X}^{-1}(F_1)  \hspace{8pt} \text{and} \hspace{8pt} \alpha_2  \triangleq F_{X}^{-1}(F_2). 
\end{align}

By the strong law of large numbers, \( F_1 \) and \( F_2 \) almost surely converge to \( F_{X} \left(\frac{\theta_1 - \mu }{\sigma} \right) \) and \( F_{X} \left(\frac{\theta_2 - \mu }{\sigma} \right) \) respectively. Now, we define the estimates of \( \sigma \) and \( \mu\) as follows:

\begin{align}\label{eq:mu_sigma_converg} 
\hat{\sigma}_c \triangleq \frac{\theta_1 -\theta_2}{ \alpha_1 - \alpha_2} \xrightarrow{\as} \sigma \hspace{5pt} \text{and} \hspace{5pt} 
\hat{\mu}_c \triangleq \frac{\alpha_1\theta_2 - \alpha_2 \theta_1}{ \alpha_1 - \alpha_2} \xrightarrow{\as} \mu, 
\end{align}
where convergence follows from the continuous mapping theorem~\cite[Sec.\ 5.5]{casella2002statistical}. With this setup, the MSE achieved by this protocol is presented below.

\begin{theorem}\label{thm:mean_varinace_estimation}
For every $\mu \in \bR$ and $\sigma>0$, the protocol in Sec.~\ref{sec:nonadaptivescheme} is strongly consistent, asymptotically normal, and satisfies
\begin{align*}
\lim_{n \to \infty} n \cdot \mse(\hat{\mu}_c) &=
\frac{\sigma^2}{(\theta_1 - \theta_2)^2} \left[ (\theta_2 - \mu)^2 \frac{K_1\sigma_1^2}{f_X^2 \left(\frac{\theta_1 - \mu}{\sigma}\right)}
+ (\theta_1 - \mu)^2 \frac{K_2\sigma_2^2}{f_X^2 \left(\frac{\theta_2 - \mu}{\sigma}\right)} \right], \\
\lim_{n \to \infty} n \cdot \mse(\hat{\sigma}_c) &=
\frac{\sigma^4}{(\theta_1 - \theta_2)^2} \left[ \frac{K_1\sigma_1^2}{f_X^2 \left(\frac{\theta_1 - \mu}{\sigma}\right)}
+ \frac{K_2\sigma_2^2}{f_X^2 \left(\frac{\theta_2 - \mu}{\sigma}\right)} \right],
\end{align*}
where for $i=1,2$,
\[
\sigma_i^2 \triangleq F_X \left(\frac{\theta_i - \mu}{\sigma}\right) \left(1 - F_X \left(\frac{\theta_i - \mu}{\sigma}\right)\right).
\]
\end{theorem}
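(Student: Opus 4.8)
The plan is to treat both $\hat\mu_c$ and $\hat\sigma_c$ as smooth functions of the two \emph{independent} empirical frequencies $F_1$ and $F_2$, and to push a joint central limit theorem through these maps via the multivariate delta method. Strong consistency is already in hand: the text establishes $\hat\mu_c\xrightarrow{\as}\mu$ and $\hat\sigma_c\xrightarrow{\as}\sigma$ from the strong law of large numbers applied to $F_1,F_2$ followed by the continuous mapping theorem, so it remains only to establish asymptotic normality and to identify the limiting variances.

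First I would record the base CLT. For $j\in\{1,2\}$, $F_j$ is an average of $n_j=K_jn$ i.i.d.\ Bernoulli variables with success probability $p_j=F_X\!\left(\frac{\theta_j-\mu}{\sigma}\right)$, so $\mathrm{Var}(F_j)=\sigma_j^2/(K_jn)$ with $\sigma_j^2=p_j(1-p_j)$, and $\sqrt{n}(F_j-p_j)\xrightarrow{d}\mathcal N\!\left(0,\sigma_j^2/K_j\right)$. Crucially, the two user groups are disjoint, so $F_1$ and $F_2$ are independent and the joint vector $\sqrt{n}\big((F_1-p_1),(F_2-p_2)\big)$ converges to a centered bivariate Gaussian with \emph{diagonal} covariance $\mathrm{diag}(\sigma_1^2/K_1,\ \sigma_2^2/K_2)$.

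Next I would compose two smooth maps. Write $g=F_X^{-1}$, so $\alpha_j=g(F_j)$; by the standing assumptions $g$ is differentiable with $g'(p_j)=1/f_X\!\left(\frac{\theta_j-\mu}{\sigma}\right)$, since $g(p_j)=\frac{\theta_j-\mu}{\sigma}$. Then $(\hat\mu_c,\hat\sigma_c)=\Phi(\alpha_1,\alpha_2)$ where $\Phi(u_1,u_2)=\Big(\frac{u_1\theta_2-u_2\theta_1}{u_1-u_2},\ \frac{\theta_1-\theta_2}{u_1-u_2}\Big)$. This composite map is smooth in a neighborhood of $(p_1,p_2)$ provided $\theta_1\neq\theta_2$ and $f_X$ does not vanish at the relevant points, the key point being that the limiting denominator satisfies $\alpha_1-\alpha_2\to(\theta_1-\theta_2)/\sigma\neq0$. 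Applying the multivariate delta method to $\Phi\circ(g,g)$ then makes $\sqrt{n}(\hat\mu_c-\mu)$ and $\sqrt{n}(\hat\sigma_c-\sigma)$ jointly asymptotically normal, and the stated variances follow by evaluating the Jacobian at $(p_1,p_2)$: the partials $\partial_{u_1}\Phi,\partial_{u_2}\Phi$ produce the factors $(\theta_2-\mu)$ and $(\theta_1-\mu)$ (respectively $\pm1$ for $\hat\sigma_c$) together with a common $\sigma^2/(\theta_1-\theta_2)$, and the diagonal covariance assembles these into the two displayed sums with their $\sigma_j^2/f_X^2(\cdot)$ weightings.

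The two-stage Jacobian bookkeeping—together with checking that the independence of $F_1,F_2$ kills the cross term, so that only the two squared-partial contributions survive—is the routine-but-error-prone part. The one genuine subtlety I would flag is the passage from convergence in distribution to convergence of the \emph{scaled second moment}: since $F_j$ can in principle equal $0$ or $1$ with (exponentially small) probability, $\alpha_j=g(F_j)$ may be unbounded, so $n\cdot\mathbb E[(\hat\mu_c-\mu)^2]$ should be read as the variance of the limiting Gaussian (equivalently, one works on the event $\{F_1,F_2\in(0,1)\}$, whose probability tends to $1$, and invokes uniform integrability there). Under that standard convention the delta-method variance is exactly the claimed asymptotic MSE.
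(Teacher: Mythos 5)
Your proposal follows essentially the same route as the paper's proof: a Bernoulli CLT for $F_1,F_2$, the delta method through $g=F_X^{-1}$ to obtain the asymptotic law of $(\alpha_1,\alpha_2)$, and then a further delta method through the rational map of $(\alpha_1,\alpha_2)$ --- the paper performs this last step via the same $\xi(u_1,u_2)=\frac{u_1\theta_2-u_2\theta_1}{u_1-u_2}$ for $\hat{\mu}_c$ and a univariate delta method through $\psi(x)=(\theta_1-\theta_2)/x$ applied to $\alpha_1-\alpha_2$ for $\hat{\sigma}_c$, which is the same computation as your single joint map $\Phi$. The independence of the two user groups giving a diagonal covariance is likewise exactly the paper's argument.

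One concrete discrepancy you must resolve: with your (correct) normalization $\sqrt{n}(F_j-p_j)\xrightarrow{d}\mathcal{N}(0,\sigma_j^2/K_j)$, the assembled limiting variances carry factors $\sigma_j^2/K_j$, not $K_j\sigma_j^2$ as displayed in the theorem statement. Your closing claim that ``the stated variances follow'' is therefore inconsistent with your own covariance matrix. A sanity check supports your normalization: as $K_1\to 0$ the first group contributes vanishingly few samples, so its error contribution must blow up, which $\sigma_1^2/K_1$ does and $K_1\sigma_1^2$ does not. (The paper's own proof has the same issue: its display normalized by $\sqrt{n_j}$ gives limiting variance $\sigma_j^2/f_X^2(\cdot)$ for $\sqrt{n_j}(\alpha_j-\cdot)$, yet the matrix $\Sigma$ then used with the $\sqrt{n}$-normalization places $K_j\sigma_j^2/f_X^2(\cdot)$ on the diagonal rather than $\sigma_j^2/(K_j f_X^2(\cdot))$, so the constant in the statement appears to be inherited from this slip.) You should carry the $1/K_j$ through explicitly rather than deferring to the stated formula. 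Separately, the subtlety you flag --- that convergence in distribution of $\sqrt{n}(\hat{\mu}_c-\mu)$ does not by itself give convergence of $n\cdot\mathrm{MSE}$, since $\alpha_j=g(F_j)$ is unbounded on the exponentially rare events $\{F_j\in\{0,1\}\}$ --- is a genuine point: the paper makes this jump silently, and your truncation/uniform-integrability remark is what an airtight version of the argument requires.
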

See Appendix~\ref{sec:prf_nonadaptive_estimator} for the proof.

We do not claim that the above protocol is optimal. 
A careful observation reveals that the asymptotic value of $n\cdot\mse(\hat{\mu}_c)$ and $n\cdot\mse(\hat{\sigma}_c)$ depend on $\mu$, and that for fixed $\theta_1,\theta_2$, this grows unbounded as $\mu\to\infty$ or $\mu\to -\infty$.
However, we will use this as a basic building block for our adaptive scheme. While it yields strongly consistent estimates for both $\mu$ and $\sigma$, its performance can be significantly improved when a small amount of feedback can be provided to a subset of the users.  

\subsection{Lower Bound for Non-Adaptive Protocols}
We now present a general lower bound on the performance of non-adaptive one-bit protocols.

\begin{theorem}\label{thm:lowerbound_nonadaptive}
Assume that $f_{X}(x)=e^{-\phi(x)}$ where $\phi(x)$ is symmetric, differentiable, strictly convex, and is upper bounded by a polynomial\footnote{We only require that the degree of the polynomial be a constant, independent of $n$.}. Then, for every $1$-bit non-adaptive protocol,
\begin{equation}\label{eq:final-optimized-mse}
 \sup_{\mu}\lim_{n\to\infty} 
 n \cdot\bE\bigl[(\hat\mu-\mu)^2\bigr]
\ \ge\ \frac{\alpha^\star\sigma^2}{T(f_X)}
\end{equation}
where
\[
T(f_X) \triangleq \int_{0}^{h^*} \phi' \bigl(h^{-1}(t)\bigr)\, h^{-1}(t)\, \dx t ,
\]
\[
h(x) \triangleq 2\phi'(x)f_X(x), \qquad h^*\triangleq \max_{x\geq 0}h(x),
\]
 are quantities that only depend on $f_X$, and $h^{-1}:[0,h^*]\to\bR$. Here
\[
\alpha^\star \triangleq \max_{t\ge 0} \; t\Big(1-\sqrt{1-e^{-2t}}\Big)\;\approx\;0.1034.
\]
\end{theorem}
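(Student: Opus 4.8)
The plan is to prove the bound by a two-point (Le Cam) argument in which the hardness of the problem is quantified through the squared Hellinger distance between the two induced transcript laws, the key work being a refined, quantizer-uniform bound on that distance.

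First I would fix an arbitrary non-adaptive protocol and, for a mean $\mu$ and a separation $\delta>0$ to be optimized later, consider the two hypotheses $\mu_0=\mu$ and $\mu_1=\mu+\delta$ (keeping $\sigma$ fixed). Writing $P_0,P_1$ for the laws of the transcript $(Y_1,\dots,Y_n)$ under the two hypotheses, Le Cam's two-point inequality gives $\max_{j\in\{0,1\}}\bE[(\hat\mu-\mu_j)^2]\ge \tfrac{\delta^2}{4}\bigl(1-\TV(P_0,P_1)\bigr)$, so it suffices to upper bound $\TV(P_0,P_1)$. Because the protocol is non-adaptive, the $Y_i$ are independent given the hypothesis, so the Bhattacharyya coefficient tensorizes: $\mathrm{BC}(P_0,P_1)=\prod_{i=1}^n \mathrm{BC}_i$, where $\mathrm{BC}_i=\sqrt{p_i^0 p_i^1}+\sqrt{(1-p_i^0)(1-p_i^1)}$ and $p_i^j=\bE[Y_i\mid \mu_j]$. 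Using $1-\TV\ge 1-\sqrt{1-\mathrm{BC}^2}$ reduces everything to controlling $-\log\mathrm{BC}=\sum_i\bigl(-\log(1-H_i^2)\bigr)$, i.e.\ the total squared Hellinger $\sum_i H_i^2$ between the per-user output Bernoullis.

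The crux is to bound $\sum_i H_i^2$ from above in a way that is (i) uniform over all one-bit quantizers and (ii) worst-cased over the unknown $\mu$, since the adversary selects the hypotheses after the encoders are fixed. I would first use that for a pure location shift of a log-concave $f_X$ the likelihood ratio is monotone in $x$, so by the extreme-point/data-processing structure of the achievable region of pairs $(p_i^0,p_i^1)$ the per-user Hellinger is maximized by a threshold quantizer; this reduces the problem to thresholds $\{\theta_i\}$. Expanding $H^2\bigl(\mathrm{Ber}(F_X(\tfrac{\theta_i-\mu}{\sigma})),\mathrm{Ber}(F_X(\tfrac{\theta_i-\mu-\delta}{\sigma}))\bigr)$ and summing, the per-threshold discriminating rate is governed by the steepness $h(x)=2\phi'(x)f_X(x)=-2f_X'(x)$, whose maximum over $x\ge 0$ is exactly $h^*$. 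For fixed thresholds the adversary then picks $\mu$ so that as little Hellinger mass as possible accumulates; optimizing the threshold placement against this adversary (an equalization/Euler--Lagrange calculation over the induced threshold density) shows that the best guarantee any protocol can make is captured by $T(f_X)=\int_0^{h^*}\phi'(h^{-1}(t))\,h^{-1}(t)\,\dx t$, yielding a bound of the form $\sum_i H_i^2 \le \tfrac{n\delta^2}{4\sigma^2}\,T(f_X)\,(1+o(1))$ at the adversarial $\mu$.

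Finally I would optimize the free separation $\delta$. Setting the dimensionless variable $t=\sum_i H_i^2$, so that $\tfrac{n\delta^2 T(f_X)}{4\sigma^2}\ge t$ and $\mathrm{BC}^2\ge e^{-2t}$, the normalized risk $n\cdot\bE[(\hat\mu-\mu)^2]$ is lower bounded by $\tfrac{\sigma^2}{T(f_X)}\,t\bigl(1-\sqrt{1-e^{-2t}}\bigr)$; maximizing $t(1-\sqrt{1-e^{-2t}})$ over $t\ge 0$ produces the universal constant $\alpha^\star\approx 0.1034$ and the claimed bound $\tfrac{\alpha^\star\sigma^2}{T(f_X)}$. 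The main obstacle is the combination of (i) and (ii): establishing the quantizer-uniform, worst-case-over-$\mu$ Hellinger bound with the \emph{exact} constant $T(f_X)$. The reduction to thresholds must be made rigorous for randomized one-bit encoders, and one must show that no threshold placement can keep the accumulated Hellinger large at every $\mu$ simultaneously — this is precisely where the maximal-slope quantity $h^*$ caps the per-threshold contribution and where $T(f_X)$ arises as the value of the resulting min--max. The hypotheses that $\phi$ is symmetric and strictly convex guarantee monotone likelihood ratios (hence the threshold reduction) and that $h$ is unimodal with a single invertible branch $h^{-1}:[0,h^*]\to\bR$, while the polynomial-growth assumption on $\phi$ ensures the tail integrals defining $T(f_X)$ converge and that the local expansion of $H_i^2$ is uniform enough to pass to the limit; additional care is needed to interchange the limit in $n$ with the supremum over $\mu$ and to handle the product $\mathrm{BC}=\prod(1-H_i^2)$ via $-\log(1-H_i^2)=H_i^2+O(H_i^4)$ in the regime $\delta=\Theta(\sigma/\sqrt{n})$.
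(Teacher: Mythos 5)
Your outer scaffolding coincides exactly with the paper's proof of this theorem: Le Cam's two-point method at separation $\Theta(\sigma/\sqrt{n})$, tensorization of the Bhattacharyya coefficient over the independent $Y_i$, the inequality $\TV \le \sqrt{1-\rho^2}$, and the final maximization of $t\bigl(1-\sqrt{1-e^{-2t}}\bigr)$ producing $\alpha^\star \approx 0.1034$. The genuine gap is in the only hard step, the per-user bound $H^2(P_{Y_i}^+,P_{Y_i}^-)\le \epsilon^2\bigl(T(f_X)+o(1)\bigr)$, which in the paper is a separate result (Theorem~\ref{thm:final_hellinger_bound}) proved \emph{pointwise in $\mu$ and uniformly over arbitrary randomized one-bit encoders}, with no reduction to thresholds and no adversarial choice of $\mu$ anywhere. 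The paper's mechanism, following \citep{cai2022adaptive}, decomposes the two shifted densities into likelihood-ratio ``layers'' $g_\pm(\cdot,a)$ with $\int_0^{a_\epsilon^*} g_\pm(x,a)\,\dx a = f_X(x\pm\epsilon)$, bounds each layer's Hellinger contribution by $\frac{\sqrt{R_a}-1}{\sqrt{R_a}+1}$ times a total-variation term controlled by the measure of the set $A_\epsilon(a)$ (Lemmas~\ref{lem:hellinger_bound_general} and~\ref{lem:upper_bound_ratio}), and splits the integral into the regimes $a\in[0,\epsilon^{2.5}]$ and $a\in[\epsilon^{2.5},a_\epsilon^*]$; the constant $T(f_X)=\int_0^{h^*}\phi'\bigl(h^{-1}(t)\bigr)h^{-1}(t)\,\dx t$ is literally the value of $\int_0^{a_\epsilon^*}\frac{\sqrt{R_a}-1}{\sqrt{R_a}+1}\,x_a(\epsilon)\,\dx a$ after the change of variable $t\approx a/(2\epsilon)$, where $x_a(\epsilon)=\sup A_\epsilon(a)$.

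Your replacement for this step does not work as sketched. The reduction to deterministic threshold encoders is fine and rigorizable (joint convexity of $H^2$ in the pair of Bernoulli parameters, extreme points of the risk set being likelihood-ratio tests, and monotone likelihood ratio for location shifts of a log-concave density), but the subsequent ``equalization/Euler--Lagrange'' min--max over threshold placement against an adversarial $\mu$ is asserted, not derived, and it is precisely where all the difficulty of the theorem lives. Moreover its natural outcome is the wrong functional: the local expansion of the per-threshold Hellinger at relative position $u=(\theta-\mu)/\sigma$ gives rate $\tfrac{\epsilon^2}{2}\cdot\tfrac{f_X^2(u)}{F_X(u)F_X(-u)}$, i.e.\ the Fisher-type quantity $\eta(u)$ appearing in Theorem~\ref{thm:lowerbound_adaptive}, and any min--max over placements of sums of $\eta$-values produces expressions built from $\eta$, with no visible route to $\int_0^{h^*}\phi'\bigl(h^{-1}(t)\bigr)h^{-1}(t)\,\dx t$, whose specific form comes from the $\sqrt{R_a}$-factor and the set-size $x_a(\epsilon)$ in the layered decomposition. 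There is also a quantifier mismatch: in your plan the hard $\mu$ is chosen adversarially \emph{after} the $n$ encoders are fixed, hence depends on $n$, which at best yields a bound on $\liminf_n \sup_\mu n\cdot\bE[(\hat\mu-\mu)^2]$; the theorem as stated has $\sup_\mu$ outside the limit, and the paper obtains this stronger ordering for free because its Hellinger bound holds for \emph{every} fixed $\mu$, so the two-point argument can be run at any $\mu$ without reference to where the protocol placed its thresholds.
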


\begin{proof}
We use Le~Cam's two-point method by identifying a suitable hypothesis testing problem and lower bounding the MSE in terms of the total variation distance between distributions of the transcript corresponding to the two hypotheses for any non-adaptive protocol.

Specifically, fix $\mu \in \mathbb{R}$ and $\sigma>0$, choose $\epsilon>0$ (possibly depending on $n$), and define the symmetric alternatives
\begin{equation}\label{eq:muplus_muminus}
\mu_- := \mu - \epsilon\sigma, 
\qquad 
\mu_+ := \mu + \epsilon\sigma.
\end{equation}
Each user obtains an i.i.d.\ sample $X_i$ drawn according to one of two distributions based on the true hypothesis:
\[
\mathcal{H}_-:\ X_i \sim f_{X,\mu_-,\sigma}(\cdot),
\qquad
\mathcal{H}_+:\ X_i \sim f_{X,\mu_+,\sigma}(\cdot).
\]
Let $P^-_{Y^n}$ and $P^+_{Y^n}$ be the distributions of the encoded sequence
$Y^n = [Y_1,\ldots,Y_n]$ for any given non-adaptive protocol corresponding to the two hypotheses.

Using Le Cam's method~\citep[Theorem 10.2]{wu2020information},
\begin{align}
 \sup_{\mu \in \{ \mu_-, \mu_+ \} }\bE \left[(\hat{\mu}-\mu)^2\right] 
 &\geq \frac{(\mu_+ - \mu_-)^2}{4} 
 \Bigl(1 - \mathrm{TV}(P^-_{Y^n},P^+_{Y^n})\Bigr) &\label{eq:Le_cam_method}\\
&= \epsilon^2\sigma^2\Bigl(1 - \mathrm{TV}(P^-_{Y^n},P^+_{Y^n})\Bigr),
\end{align}
where $\mathrm{TV}$ denotes the total variation distance between the two joint distributions.

Let\footnote{Note that we have a factor of $1/2$ in the definition of the squared Hellinger distance, which is absent in many standard texts.} 
\[
H^2(P_{Y^n}^+,P_{Y^n}^-) \triangleq \frac{1}{2} \sum_{y^n\in\{0,1\}^n }\left(\sqrt{P_{Y^n}^+(y^n)}-\sqrt{P_{Y^n}^-(y^n)}\right)^2 
\]
denote the squared Hellinger distance between the two distributions. 

For a non-adaptive protocol, $Y_1,\ldots,Y_n$ are independent (but need not be identically distributed).  
Hence the Bhattacharyya coefficients factorize~\citep[Proposition 1.4]{wright2024lecture5}:
\[
\rho (P_{Y^n}^+,P_{Y^n}^-)  = \prod_{i=1}^n \rho (P_{Y_i}^+,P_{Y_i}^-).
\]
From Theorem~\ref{thm:final_hellinger_bound}, for each $i$ we have
\[
H^2(P_{Y_i}^+,P_{Y_i}^-) \leq \epsilon^2\big(T(f_X)+o(1)\big),
\]
so
\[
\rho (P_{Y_i}^+,P_{Y_i}^-)  \geq  1-\epsilon^2\big(T(f_X)+o(1)\big).
\]
Therefore,
\[
\rho (P_{Y^n}^+,P_{Y^n}^-)  \geq  (1-\epsilon^2T(f_X)+o(\epsilon^2))^n
= e^{-n\epsilon^2 T(f_X)+o(1)}.
\] 
By the standard inequality relating total variation and the Bhattacharyya coefficient 
(see, e.g., \citep[ cf. Lemma~2.6]{tsybakov2009introduction};  also the derivation of the Bretagnolle--Huber bound~\citep{bretagnolle1979estimation}), we have
\[
\mathrm{TV}(P^-_{Y^n},P^+_{Y^n})
 \leq  \sqrt{1-\rho (P_{Y^n}^+,P_{Y^n}^-)^2}
 \leq  \sqrt{1-e^{-2t+o(1)}},
\]

where $t=n\epsilon^2 T(f_X)$. Therefore,
\[
n\cdot \sup_{\mu \in \{ \mu_-, \mu_+ \} } \bE \big[(\hat{\mu}-\mu)^2\big]
 \geq  \frac{\sigma^2}{T(f_X)} \left( t\Big(1-\sqrt{1-e^{-2t}}\Big) + o(1) \right).
\]

Maximizing the right-hand side over $t\ge 0$ yields the constant $\alpha^\star$, which is approximately $0.10340$ and $o(1) \to 0$ as $n \to \infty$. This completes the proof.
\end{proof}
\subsection{Multi-Threshold Non-Adaptive Protocol for Non-Parametric Mean Estimation}\label{subsec:multi_threshold_encode}

We now provide a natural multi-threshold protocol that works for general distributions (not just scale-location families), yielding a non-parametric estimator for the mean. 

Assuming $\bE[|X|] < \infty$, the expectation of $X$ can be written as
\begin{equation}
\bE[X] =  \int_{0}^{\infty} \bP(X > z) \, \dx z - \int_{-\infty}^{0} \bP(X < z) \, \dx z 
\label{eq:def_Expectation}
\end{equation}

Define a partition of $\mathbb{R}$ as
\[
\theta_{-m} < \theta_{-m+1} < \cdots < \theta_{-1} < \theta_0 < \theta_1 < \cdots < \theta_m,
\]
with $\theta_0=0$, and 
$$\theta_j-\theta_{j-1}=\Delta \text{ for all } j\in \{-m+1,\ldots,m\} $$
for some parameter $\Delta>0$.
In particular, we choose the parameters so that $\theta_m=-\theta_{-m}\to\infty$ and $\Delta\to 0$ as $n\to\infty$.

Define the midpoints
\[
\bar{\theta}_j = \frac{\theta_j + \theta_{j+1}}{2}, \quad j \in \{-m, \ldots, -1\},
\]
and 
\[
\bar{\theta}_j = \frac{\theta_j + \theta_{j-1}}{2}, \quad j \in \{1 \ldots, m\},
\]

A natural idea is to approximate~\eqref{eq:def_Expectation} by a Riemann sum:
\begin{equation}
\bE[X] \approx \hat{\mu}_{\Theta}\triangleq  \sum_{j=1}^{m} \bP(X > \bar{\theta}_j) (\theta_j - \theta_{j-1}) - \sum_{j=-m}^{-1} \bP(X < \bar{\theta}_j) (\theta_{j+1} - \theta_j).
\label{eq:Riemann}
\end{equation}

\subsubsection{The Multi-Threshold Protocol}
Our protocol can be described as follows: We partition the $n$ users into groups of $K=n/(2m)$ users each. For convenience, let us index the users by $(i,j)$, where $i\in \{1,2,\ldots,K\}$ labels the group and $j\in \{-m,\ldots,-1,1,\ldots,m\}$ indexes a user within the group.

The $(i,j)$'th user sends 
\[
Y_{i,j} = \begin{cases}
1_{\{X_{i,j}< \bar{\theta}_j\}},&\text{ if }j<0\\
1_{\{X_{i,j}> \bar{\theta}_j\}},&\text{ if }j>0
\end{cases}
\]
and the server forms empirical averages
\[
\bar{I}_j = \frac{1}{K} \sum_{i=1}^{K} Y_{i,j}, 
\qquad \text{ for }j\in\{-m,\ldots,-1,1,\ldots,m\}
\]

The server outputs
\[
\hat{\mu}_{MT} = \sum_{j=1}^{m} \bar{I}_j\Delta -\sum_{j=-m}^{-1} \bar{I}_j\Delta  \tag{3}
\]

\begin{theorem}
\label{thm:multi-threshold-bias-var-full}
Let $X $ be any random variable with $\bE[|X|] < \infty$, distribution function $F_X$, and bounded density $f_X \le L$.  
Consider the multi-threshold protocol defined above. For every $\mu\in\bR$ and $\sigma>0$,
\[
|\bE[\hat{\mu}_{MT}] - \bE[X]|\leq \frac{Lm\Delta^2}{\sigma}.
\]
If we choose the parameters such that $\Delta\to 0$ and $\theta_m=-\theta_{-m}\to\infty$, then
\[
\rvar(\hat{\mu}_{MT}) = \frac{\Delta}{K} \left(\int_{-\infty}^\infty F_{X}(x)(1-F_{X}(x)) \dx x + o(1)\right)
\]
provided that the above integral exists.
\end{theorem}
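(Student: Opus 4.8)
The plan is to treat the bias and the variance by two essentially separate Riemann-sum arguments, using that $\bE[\hat{\mu}_{MT}]$ is a purely deterministic midpoint quadrature and that the bits $\{Y_{i,j}\}$ are mutually independent across \emph{all} index pairs $(i,j)$. For the bias, I would first note that by linearity of expectation together with the Bernoulli means $\bE[Y_{i,j}]=\bP(X>\bar{\theta}_j)$ for $j>0$ and $\bE[Y_{i,j}]=\bP(X<\bar{\theta}_j)$ for $j<0$, one gets $\bE[\hat{\mu}_{MT}]=\hat{\mu}_{\Theta}$, the Riemann sum in \eqref{eq:Riemann}. Since $\bar{\theta}_j=(j-\tfrac12)\Delta$ for $j>0$ and symmetrically for $j<0$, this $\hat{\mu}_{\Theta}$ is exactly the midpoint-rule approximation of the layer-cake formula \eqref{eq:def_Expectation} for $\bE[X]$ over the truncated range $[\theta_{-m},\theta_m]$. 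Writing $g(z)=\bP(X>z)$, its derivative is minus the density of $X$; since $X$ is a member of the scale-location family, $|g'(z)|=f_{X,\mu,\sigma}(z)\le L/\sigma$, so on each length-$\Delta$ interval the midpoint error is at most $\tfrac{L}{\sigma}\int|z-\bar{\theta}_j|\,\dx z=\tfrac{L}{4\sigma}\Delta^2$. Summing over the $2m$ intervals (positive and negative sides) gives the claimed $O(Lm\Delta^2/\sigma)$ bound; the remaining truncation tails $\int_{\theta_m}^{\infty}g$ and $\int_{-\infty}^{\theta_{-m}}\bP(X<z)$ vanish as $\theta_m\to\infty$ because $\bE[|X|]<\infty$, and are absorbed into the stated constant.

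For the variance, the crucial structural observation is that distinct users $(i,j)$ hold independent samples, so all $Y_{i,j}$ are independent; in particular $\bar{I}_j$ and $\bar{I}_{j'}$ are independent whenever $j\ne j'$. Writing $\hat{\mu}_{MT}=\Delta\sum_{j\ne 0}s_j\bar{I}_j$ with signs $s_j=\pm1$ and using $s_j^2=1$, independence yields $\rvar(\hat{\mu}_{MT})=\Delta^2\sum_{j\ne 0}\rvar(\bar{I}_j)=\tfrac{\Delta^2}{K}\sum_{j\ne 0}\rvar(Y_{1,j})$, the last step because each $\bar{I}_j$ averages $K$ i.i.d.\ Bernoulli variables. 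A short case check shows that in both the $j>0$ and the $j<0$ case the Bernoulli variance equals $F_X(\bar{\theta}_j)\bigl(1-F_X(\bar{\theta}_j)\bigr)$, since $p(1-p)$ is invariant under $p\mapsto 1-p$. Hence $\rvar(\hat{\mu}_{MT})=\tfrac{\Delta}{K}\bigl[\Delta\sum_{j\ne 0}F_X(\bar{\theta}_j)(1-F_X(\bar{\theta}_j))\bigr]$, and the bracketed quantity is precisely the midpoint Riemann sum of $\int F_X(x)(1-F_X(x))\,\dx x$ over $[\theta_{-m},\theta_m]$.

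The last step, and the part I expect to be the main obstacle, is to justify that this bracketed sum equals $\int_{-\infty}^{\infty}F_X(x)(1-F_X(x))\,\dx x+o(1)$ under the \emph{simultaneous} limits $\Delta\to0$ and $\theta_m\to\infty$, assuming only that the improper integral exists. This is a genuine double limit in which the discretization error and the truncation tail must be controlled together, without a uniform derivative bound on the merely-integrable integrand far from the origin. I would handle it by splitting into a fixed compact core, where midpoint convergence is standard because $F_X(1-F_X)$ is continuous and bounded, and two tails, where I bound $0\le F_X(1-F_X)\le 1-F_X$ for $x\ge 0$ and $\le F_X$ for $x\le 0$; since $\bE[|X|]<\infty$ these dominating functions are integrable, and their monotonicity makes the tail portion of the midpoint sum comparable to the corresponding tail integral, so both the truncation and discretization contributions on the tails can be driven to zero. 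Combining this with the bias estimate then yields the theorem.
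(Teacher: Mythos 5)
Your proof is correct and structurally the same as the paper's: you establish $\bE[\hat{\mu}_{MT}]=\hat{\mu}_{\Theta}$ by linearity, bound the bias as a midpoint-quadrature error using the Lipschitz constant $L/\sigma$ of the scale-location CDF summed over $2m$ intervals (your per-interval constant $\tfrac{L\Delta^2}{4\sigma}$ is in fact sharper than the paper's $\tfrac{L\Delta^2}{2\sigma}$), and compute the variance via independence of all $Y_{i,j}$ and the Bernoulli variance $F_X(\bar{\theta}_j)\bigl(1-F_X(\bar{\theta}_j)\bigr)/K$, recognizing the resulting sum as a midpoint Riemann sum of $\int F_X(1-F_X)$.

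Where you genuinely go beyond the paper is the final double-limit step: the paper simply \emph{asserts} that the Riemann sum converges to the improper integral when $\Delta\to 0$ and $\theta_m\to\infty$ simultaneously, whereas your compact-core-plus-tails argument --- dominating $F_X(1-F_X)$ by $1-F_X$ on the right tail and by $F_X$ on the left, both integrable since $\bE[|X|]<\infty$, and using monotonicity to compare the tail portion of the midpoint sum with the tail integral --- actually justifies the $o(1)$ and is a worthwhile refinement. One caution on the bias: the truncation tails $\int_{\theta_m}^{\infty}\bP(X>z)\,\dx z$ and $\int_{-\infty}^{\theta_{-m}}\bP(X<z)\,\dx z$ cannot be ``absorbed into the stated constant'' for fixed $m,\Delta$; they are an additional additive error, unrelated in size to $Lm\Delta^2/\sigma$, which vanishes only in the limit $\theta_m\to\infty$. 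The paper's own proof has the same blemish (its first display equates $\bE[X]$ with the layer-cake integral truncated to $[\theta_{-m},\theta_m]$, silently dropping the tails), so you are, if anything, more careful for having flagged them --- but the honest statement is that the bias bound holds up to an explicit additional tail term (or asymptotically, once $\theta_m\to\infty$), not by absorption.
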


The proof is provided in Appendix~\ref{sec:prf_multi_threshold}.

Observe that 
$$\frac{\Delta}{K}=\frac{(\theta_m+\theta_{-m})}{2mK}=\frac{(\theta_m-\theta_{-m})}{2n}.$$
Therefore, if we require $\theta_{m}=-\theta_{-m}\to\infty$, then we cannot have the variance of $\hat{\mu}_{MT}=O(1/n)$.
However, the careful reader may note that if $X$ is bounded, then this requirement is waived, and we can indeed obtain MSE that decays as $O(1/n)$.

This illustrates that nonparametric estimation of the mean can be more challenging. From Theorem~\ref{thm:mean_varinace_estimation}, we see that for a scale-location family, we can do significantly better than the above even with just two thresholds. However, the above theorem holds even when the mean and variance are not the only unknown parameters.

\par 
Please note that Section~\ref{subsec:multi_threshold_encode} develops a multi-threshold non-adaptive estimator as a 
general-purpose, nonparametric construction. As discussed in that , such estimators may not achieve an $O(1/n)$ MSE unless additional conditions  (such as boundedness of $X$) are imposed, because the variance term scales with the range of the threshold grid. In contrast, for location–scale families, Theorem~\ref{thm:mean_varinace_estimation} shows that the two-threshold estimator of Section~\ref{subsec:nonadaptivescheme} already achieves significantly 
better performance using only a pair of carefully chosen thresholds. For this reason, the adaptive scheme in Section~\ref{adaptivescheme} builds exclusively on  the two-threshold protocol of Section~\ref{subsec:nonadaptivescheme}, while the multi-threshold  estimator of Section~\ref{subsec:multi_threshold_encode} is included only as a general extension and is  not used in the adaptive procedure. All references to \emph{the non-adaptive protocol} in what follows should therefore be understood as referring to the method of Section~\ref{subsec:nonadaptivescheme}.


\section{Adaptive Protocol}\label{adaptivescheme}

\subsection{Simple Two-Round Adaptive Protocol}
We now describe a simple adaptive scheme that can be implemented with two rounds of communication\footnote{Here, one round of communication consists of bits sent from a subset of the users to the server, followed by a broadcast message sent by the server to the remaining users.}. We will subsequently show that for many common distributions, this outperforms the best non-adaptive estimator.

\par We partition the $n$ users into three disjoint subsets of sizes $n_1$, $n_2$, and $n_3$, where $n_1 = n_2 = n_3/(\log n_3)$ and $n_1 + n_2 + n_3 = n$. The first $n_1 + n_2$ users participate in the \emph{first round}, while the remaining $n_3$ users participate in the \emph{second round}. The protocol is illustrated in Fig.~\ref{fig2}.

\par \textbf{First round:} We run the non-adaptive protocol of Sec.~\ref{subsec:nonadaptivescheme} using the first $n_1 + n_2$ users to obtain coarse estimates $\hat{\mu}_c$ and $\hat{\sigma}_c$. The estimate $\hat{\mu}_c$ is then broadcast to the remaining $n_3$ users.\footnote{If the protocol is sequential and user $i$ has access to $Y_1,\ldots,Y_{i-1}$, then $\hat{\mu}_c$ can be  directly computed by the last $n_3$ users by observing the first $n_1+n_2$ transmissions.} 

\begin{figure}[tb]
    \centering
    \includegraphics[height=0.25\textheight]{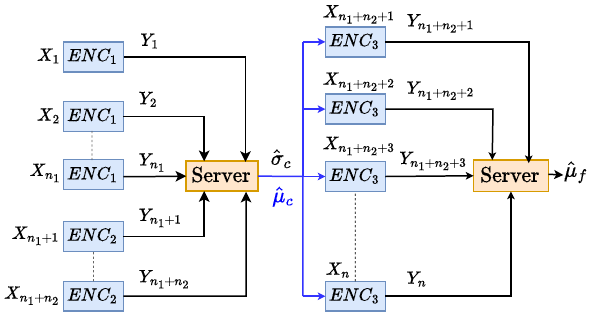}
    \caption{Our two-round adaptive protocol: The first $n_1 + n_2$ users use the non-adaptive scheme to produce coarse estimates $\hat{\mu}_c, \hat{\sigma}_c$. The value of $\hat{\mu}_c$ is broadcast to the remaining $n_3$ users, who then send 1-bit messages using $\hat{\mu}_c$ as their quantization threshold.}
    \label{fig2}
\end{figure}

\par \textbf{Second round:} Each user $i$ with $n_1 + n_2 + 1 \leq i \leq n$ transmits
\[
Y_i = 
\begin{cases} 
1 & \text{if } X_i < \hat{\mu}_c, \\
0 & \text{otherwise}.
\end{cases}
\] 

\par The server computes
\begin{align*}
    F_3 \triangleq \frac{1}{n_3} \sum_{i=n_1+n_2+1}^{n} Y_i 
    \xrightarrow{\as \,|\, \hat{\mu}_c} F_X \left( \frac{\hat{\mu}_c - \mu}{\sigma} \right),
\end{align*}
where almost sure convergence follows from the strong law of large numbers, conditional on $\hat{\mu}_c$, as $n_3 \to \infty$. Let
\[
    \alpha_3 \triangleq F_X^{-1}(F_3).
\]
The final estimate of $\mu$ is then
\begin{align}
    \hat{\mu}_f \triangleq \hat{\mu}_c - F_X^{-1}(F_3) \cdot \hat{\sigma}_c, 
    \label{expression_mu_f}
\end{align}
which converges almost surely to $\mu$ as $n \to \infty$ by the continuous mapping theorem.

\begin{theorem}\label{thm:mse_adaptive_estimator}
Consider the adaptive protocol of Sec.~\ref{adaptivescheme} with 
$n_1 = n_2 = \frac{n_3}{\log n_3}$  and $n_1 + n_2 + n_3 = n$. Then, for all $\mu \in \mathbb{R}$ and $\sigma > 0$,
\begin{align*}
    \sqrt{n} \left( \hat{\mu}_f - \mu \right) \xrightarrow{d} \mathcal{N} \left( 0, \frac{\sigma^2}{4 f_X(0)^2} \right),
\end{align*}
and the asymptotic MSE satisfies
\begin{align}
    \lim_{n \to \infty} n \cdot \mathrm{MSE}(\hat{\mu}_f) 
    = \frac{\sigma^2}{4 f_X(0)^2}. \label{eq:lower_bound_adaptive}
\end{align}
\end{theorem}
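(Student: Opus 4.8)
The plan is to analyze $\hat{\mu}_f$ by conditioning on the first round, Taylor-expanding the quantile inversion $F_X^{-1}$ about the random center determined by the coarse estimate, and showing that every contribution except the second-round sampling fluctuation is negligible after scaling by $\sqrt{n}$. First I would quantify the first-round errors. Writing $\delta \triangleq (\hat{\mu}_c-\mu)/\sigma$ and $\eta \triangleq (\hat{\sigma}_c-\sigma)/\sigma$, Theorem~\ref{thm:mean_varinace_estimation} applied to the $n_1+n_2$ first-round users shows $\hat{\mu}_c,\hat{\sigma}_c$ are strongly consistent and asymptotically normal with MSE of order $1/n_1$. Since $n_1=n_2=n_3/\log n_3$ with $n_1+n_2+n_3=n$ forces $n_3/n\to 1$ and $n_1\sim n/\log n$, we get $\delta,\eta=O_P(\sqrt{\log n/n})$. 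This is precisely the regime the parameter choice is designed to produce: $n_1\to\infty$ (so the first round is consistent and a CLT applies), together with $\sqrt{n}\ll n_1\ll n$, the first inequality making cross terms vanish and the second making $n_3/n\to 1$ so the constant is exact.

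Next I would condition on $\mathcal{F}_1$, the information generated by the first round. Given $\mathcal{F}_1$, the bits $Y_{n_1+n_2+1},\dots,Y_n$ are i.i.d.\ $\mathrm{Bernoulli}(p)$ with $p=F_X(\delta)$, so with $W\triangleq F_3-p$ a conditional Berry--Esseen bound (uniform on the event $p\in[\tfrac14,\tfrac34]$, which has probability $\to 1$ since $\delta\to 0$) gives $\sqrt{n_3}\,W/\sqrt{p(1-p)}\xrightarrow{d}\mathcal{N}(0,1)$; because $p\to F_X(0)=\tfrac12$ and $p(1-p)\to\tfrac14$, Slutsky yields $\sqrt{n}\,W\xrightarrow{d}\mathcal{N}(0,\tfrac14)$ using $n_3/n\to 1$. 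Taylor-expanding the quantile map about $p$, with $(F_X^{-1})'(p)=1/f_X(\delta)$, gives $\alpha_3=F_X^{-1}(F_3)=\delta+W/f_X(\delta)+O_P(W^2)$. Substituting into \eqref{expression_mu_f} and using $\hat{\mu}_c=\mu+\sigma\delta$, the deterministic $\delta$-terms cancel and I obtain
\[
\hat{\mu}_f-\mu=\delta(\sigma-\hat{\sigma}_c)-\hat{\sigma}_c\,\frac{W}{f_X(\delta)}-\hat{\sigma}_c\,O_P(W^2).
\]
The first term equals $-\sigma\delta\eta=O_P(\log n/n)=o_P(1/\sqrt{n})$ since it is a product of two first-round errors; the last is $O_P(1/n)=o_P(1/\sqrt{n})$; and in the middle term $\hat{\sigma}_c\to\sigma$ and $f_X(\delta)\to f_X(0)$. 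Hence $\sqrt{n}(\hat{\mu}_f-\mu)=-(\sigma/f_X(0))\sqrt{n}\,W+o_P(1)$, and the limit $\mathcal{N}(0,\sigma^2/(4f_X(0)^2))$ follows by Slutsky.

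The MSE claim requires upgrading convergence in distribution to convergence of second moments, i.e.\ uniform integrability of $n(\hat{\mu}_f-\mu)^2$, and this is the step I expect to be the main obstacle. The difficulty is that $\alpha_3=F_X^{-1}(F_3)$ is unbounded and blows up when $F_3$ approaches $0$ or $1$, so the pleasant local expansion above is not valid on large-deviation events. I would control these by Hoeffding's inequality, $\mathbb{P}(|F_3-p|>t\mid\mathcal{F}_1)\le 2e^{-2n_3 t^2}$ with $p$ eventually bounded away from $\{0,1\}$, combined with the at-most-logarithmic growth of $F_X^{-1}$ near $\{0,1\}$ guaranteed by log-concavity of $f_X$ (whose tails decay at least exponentially). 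Since the exponential concentration dominates the logarithmic growth, the contribution of the events $\{F_3\notin(\tfrac14,\tfrac34)\}$ to $n\,\mathbb{E}[(\hat{\mu}_f-\mu)^2]$ is $o(1)$, which gives uniform integrability and hence $n\cdot\mathrm{MSE}(\hat{\mu}_f)\to\sigma^2/(4f_X(0)^2)$. A secondary technical point to handle carefully is that the ``center'' $p=F_X(\delta)$ of the conditional CLT is itself random and converging to $\tfrac12$, so the two sources of randomness must be combined through the conditional (Berry--Esseen) argument rather than a naive independent delta method.
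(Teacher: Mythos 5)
Your proposal is correct and follows essentially the same route as the paper's proof: condition on the first round, apply a conditional Berry--Esseen bound to the second-round Bernoulli average (the paper's $T_{n_3}$ with the normalization $\sqrt{A_{n_3}(1-A_{n_3})}$), Taylor-expand the quantile inversion about the random center, and observe that the only non-negligible remainder is the cross term $\sqrt{n_3}(\sigma-\hat{\sigma}_c)(\mu-\hat{\mu}_c)/\sigma$ (your $-\sigma\delta\eta$), which the split $n_1=n_2=n_3/\log n_3$ is designed to kill. Your decomposition $\hat{\mu}_f-\mu=\delta(\sigma-\hat{\sigma}_c)-\hat{\sigma}_c W/f_X(\delta)-\hat{\sigma}_c\,O_P(W^2)$ is exactly the paper's identity $M_f = H\hat{\sigma}_c+\beta$ in \eqref{termmu_f}, and your closing Slutsky step corresponds to the paper's Lipschitz-of-$\Phi$ plus Jensen argument in \eqref{tringle_result}--\eqref{final_mf_bound}. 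Two genuine differences are worth noting. First, you dispatch the cross term with a one-line $O_P(\sqrt{\log n/n})$ bookkeeping argument from the asymptotic normality in Theorem~\ref{thm:mean_varinace_estimation}, whereas the paper proves the same fact via explicit Chernoff concentration (Lemmas~\ref{lem:sigmamuconvergence}, \ref{lem:mu_bound}, \ref{lem:sigma_bound}); your route is cleaner for the distributional statement, the paper's buys quantitative exponential tail bounds. Second, you explicitly flag that the MSE claim needs uniform integrability of $n(\hat{\mu}_f-\mu)^2$ beyond convergence in distribution, and you sketch a repair (Hoeffding concentration of $F_3$ against the at-most-logarithmic growth of $F_X^{-1}$ near $\{0,1\}$ for log-concave tails). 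The paper simply asserts the moment convergence from the distributional limit, so on this point your write-up is more careful than the published proof.

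One caveat on your uniform-integrability sketch: truncating only the second-round event $\{F_3\notin(\tfrac14,\tfrac34)\}$ is not enough, because the first-round estimators also blow up on rare events --- $\hat{\sigma}_c=(\theta_1-\theta_2)/(\alpha_1-\alpha_2)$ diverges when $\alpha_1\approx\alpha_2$, and $\alpha_i=F_X^{-1}(F_i)$ is literally infinite on the positive-probability events $F_i\in\{0,1\}$, so without a truncation convention the exact finite-$n$ MSE can even fail to be finite. A complete argument must therefore also confine $(\hat{\mu}_c,\hat{\sigma}_c)$ to a good event with exponentially small complement (this is precisely what the paper's Chernoff-based Lemmas~\ref{lem:mu_bound} and~\ref{lem:sigma_bound} provide, so the tools you would need are already the ones you bypassed in the first step) and bound the estimator's growth, or work with a suitably truncated estimator. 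Since the same exponential-versus-logarithmic comparison you invoke for $F_3$ applies to $F_1,F_2$ (away from the measure-zero-style saturation events, which need an explicit convention), this is a repairable omission rather than a failure of the approach --- but as stated your UI argument, like the paper's unstated one, is incomplete on exactly this point.
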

See Appendix~\ref{sec:prf_adaptive_estimator} for the proof.

\begin{remark}
 It may seem that the above protocol is not making full use of the power of adaptive protocols, as the last $n_3$ users are not making use of each others' transmissions.
 Similar to the observation made in~\citep{kipnis2022mean} for location families, we can show that increasing the number of rounds does not give us a lower asymptotic mean squared error. 
 Moreover, for a large subclass of distributions, we can derive a matching lower bound, thereby establishing the optimality of our simple two-round protocol. 
\end{remark}

\subsection{Lower Bounds for Adaptive Protocols }\label{sec:lower_bounds}

 The following theorem gives a lower bound for adaptive protocols and is proved in Appendix~\ref{sec:prf_lowerbound_adaptive}.

\begin{theorem}\label{thm:lowerbound_adaptive}
    Let $f_X$ be a symmetric log-concave distribution for which 
    \begin{equation}\label{eq:eta_condn}
    \eta(x)\triangleq \frac{f_X^2(x)}{F_X(x)F_X(-x)}
    \end{equation}
    is non-increasing in $|x|$ and is uniquely maximized at $x=0$.
    Consider any prior density $g$ on $\mu$ such that $\bE \left[\left(g'(\mu)/g(\mu)\right)^2\right]$ is bounded, where $g'$ denotes the derivative of $g$.
    Then,  any $1$-bit adaptive protocol must satisfy
    \begin{align}
    \lim_{n\to\infty}n \cdot \bE[(\hat{\mu}-\mu)^2]\geq \frac{\sigma^2} {4f_X^2(0)}, \label{eq:thm_result_adaptive}
    \end{align}
    for every $\sigma>0$.
    The expectation is taken with respect to the joint density $g(\mu)f_{X,\mu,\sigma}(x)$ on $(\mu,\sigma)$.
\end{theorem}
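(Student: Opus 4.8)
The plan is to mimic the Bayesian Cram\'er--Rao (Van Trees) route flagged in the related-work discussion: reduce the whole statement to a uniform upper bound on the Fisher information that a one-bit adaptive transcript can carry about $\mu$, and then let the prior term wash out as $n\to\infty$. Treating $\sigma$ as known throughout only enlarges the estimator's information and hence keeps the resulting bound valid for the $\sigma$-unknown problem (revealing $\sigma$ can only decrease the achievable MSE). With $\mu\sim g$, the Van Trees inequality reads
\[
\bE[(\hat\mu-\mu)^2]\ \ge\ \frac{1}{\bE_\mu\!\left[I_{Y^n}(\mu)\right] + \mathcal I(g)},
\qquad \mathcal I(g)=\bE\!\left[\bigl(g'(\mu)/g(\mu)\bigr)^2\right],
\]
where $\mathcal I(g)<\infty$ by hypothesis. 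So the entire argument collapses to proving $I_{Y^n}(\mu)\le n\cdot 4f_X^2(0)/\sigma^2$ for every $\mu$, after which multiplying by $n$ and sending $n\to\infty$ (so that $\mathcal I(g)/n\to 0$) yields the claim.

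First I would bound the information in a single bit. Fixing the past transcript $y^{i-1}$, the bit $Y_i$ is a possibly randomized binary map $w(\cdot)\in[0,1]$ applied to the fresh sample $X_i\sim f_{X,\mu,\sigma}$, which is independent of $Y^{i-1}$. Writing $p(\mu)=\bE[w(X_i)]$, the one-bit Fisher information is $(p'(\mu))^2/\bigl(p(\mu)(1-p(\mu))\bigr)$, and $p'(\mu)=\int w(x)\,S(x)\,f_{X,\mu,\sigma}(x)\,\dx x$ with score $S(x)=\tfrac1\sigma\phi'\!\bigl((x-\mu)/\sigma\bigr)$, $\phi=-\log f_X$. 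For a fixed value of $p$, maximizing $|p'(\mu)|$ over $w$ is a Neyman--Pearson problem whose bang-bang solution puts $w=1$ exactly on a super-level set of $S$; since $f_X$ is log-concave, $\phi'$ is nondecreasing, so $S$ is monotone in $x$ and that super-level set is a half-line. Hence threshold quantizers $Y_i=\1_{\{X_i\gtrless\tau\}}$ are optimal, and a direct computation gives, with $z=(\tau-\mu)/\sigma$, a one-bit Fisher information equal to $\eta(z)/\sigma^2$ for $\eta$ as in \eqref{eq:eta_condn}. Under the hypothesis that $\eta$ is maximized at the origin, the supremum over all one-bit encoders is $\eta(0)/\sigma^2=4f_X^2(0)/\sigma^2=:I^\star$, and crucially this bound is uniform in $\mu$.

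Next I would accumulate over users by the chain rule for Fisher information. Because the conditional scores $S_i=\partial_\mu\log P_{Y_i\mid Y^{i-1}}$ satisfy $\bE[S_i\mid Y^{i-1}]=0$, while for $j<i$ the score $S_j$ is $Y^{i-1}$-measurable, all cross terms vanish and $I_{Y^n}(\mu)=\sum_{i=1}^n \bE_{Y^{i-1}}\bigl[I_{Y_i\mid Y^{i-1}=y^{i-1}}(\mu)\bigr]$. Each conditional term is precisely the one-bit information bounded in the previous step, so $I_{Y^n}(\mu)\le nI^\star$ for every $\mu$. Substituting into Van Trees and multiplying by $n$ gives $n\cdot\bE[(\hat\mu-\mu)^2]\ge\bigl(I^\star+\mathcal I(g)/n\bigr)^{-1}$, and taking $n\to\infty$ produces $\sigma^2/(4f_X^2(0))$.

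The hard part will be the single-bit optimization: proving that a threshold quantizer maximizes the Fisher information of a location parameter from one bit, and that the optimal threshold sits at $\tau=\mu$ (i.e.\ $z=0$). The optimality of thresholds is exactly where log-concavity of $f_X$ enters, through monotonicity of the score, and the placement at the origin is precisely the assumption that $\eta$ is maximized there. Secondary care is needed to make Van Trees rigorous---justifying differentiation under the integral, checking that the boundary contributions from $g$ vanish, and confirming that the $\sigma$-known reduction is a legitimate lower bound for the $\sigma$-unknown setting.
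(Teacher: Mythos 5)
Your proposal is correct and follows the same overall skeleton as the paper's proof: the Van Trees inequality with the prior term $\mathcal{I}(g)/n \to 0$, combined with the chain rule for Fisher information over the sequential transcript (the paper cites Zamir's chain rule where you give the martingale cross-term argument; these are equivalent, and both rely, as you do, on $X_i$ being independent of $Y_1,\ldots,Y_{i-1}$). The genuine difference lies in how the key single-bit lemma is established. The paper (Lemma~\ref{lem:Gamma_bound}) fixes the past transcript, writes $Y_i=\1_{\{X_i\in\cA_i\}}$ for a Borel set $\cA_i$, approximates $\cA_i$ by a finite union of intervals, and then bounds the resulting Fisher-information ratio by invoking a corollary of \cite[Lemma 11]{kipnis2022mean} --- a route that only yields the bound with exponent $2+\delta$ and requires letting $\delta\downarrow 0$ afterwards. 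You instead solve the single-bit problem variationally: for a fixed response probability $p$, a Neyman--Pearson/bang-bang argument together with monotonicity of the score (convexity of $\phi$, i.e.\ log-concavity of $f_X$) shows that a half-line (threshold) test maximizes $|p'|$, whence the one-bit Fisher information is exactly $\eta(z)/\sigma^2$ with $z=(\tau-\mu)/\sigma$, and the hypothesis on $\eta$ gives the uniform bound $\eta(0)/\sigma^2 = 4f_X^2(0)/\sigma^2$. Your route is self-contained, handles randomized encoders for free (slightly more general than the paper's deterministic-set formulation), avoids both the measure-theoretic interval approximation and the $2+\delta$ limit trick, and makes transparent exactly where log-concavity and the $\eta$-condition enter --- the two places you correctly flag as the hard part. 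One small note: your opening reduction (``revealing $\sigma$ can only decrease the achievable MSE'') is harmless but unnecessary, since the Fisher information of the transcript with respect to $\mu$ is computed at a fixed $\sigma$ regardless of what the protocol knows, which is precisely how the paper proceeds.
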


 It should also be noted that the above result assumes that the protocol is sequential --- where $Y_i$ is allowed to depend only on $X_i$ and $Y_1,\ldots,Y_{i-1}$ --- and does not hold for more complicated (e.g., blackboard) protocols.

\par Our lower bound in the adaptive case (\eqref{eq:thm_result_adaptive}) holds for symmetric, log-concave distributions such as Gaussian, logistic, hyperbolic secant, and generalized Gaussian. Several distributions satisfy the assumption in Theorem~\ref{thm:lowerbound_adaptive}, including generalized normal (generalized Gaussian) distributions with shape parameter $1<\beta\le 2$ (this includes, in particular, the Laplace distribution ($\beta=1$ as a limiting case) and the normal distribution ($\beta=2$)). Symmetric log-concave distributions that fail this assumption include the uniform distribution and the generalized normal distribution with shape parameter $\beta>2$. Theorem~\ref{thm:lowerbound_adaptive} requires the shape condition (defined in~\eqref{eq:eta_condn}) to be non-increasing in $|x|$ and uniquely maximized at $x=0$. This ensures that the Fisher information from one-bit messages is maximized when the quantization threshold is placed at the true mean.


\subsection{Suboptimality of Non-Adaptive Protocols}

For any symmetric strictly log-concave $f_X$ satisfying~\eqref{eq:eta_condn}, the benchmark upper bound on the asymptotic value of $n\mathrm{MSE}(\hat{\mu})/\sigma^2$ for adaptive protocols is
\[
 C_{\text{adapt}}\triangleq\frac{1}{4 f_X^2(0)},
\]
and this can be achieved by our simple two-round protocol.
The corresponding lower bound for non-adaptive protocols is
\[
C_{\text{non}} =  \frac{0.1034}{T(f_X)}.
\]
These correspond to the normalized (by $1/\sigma^2$) upper and lower bounds derived in Theorem~\ref{thm:lowerbound_adaptive} and Theorem~\ref{thm:lowerbound_nonadaptive} respectively.  Although the term $T(f_X)$ does not have a simple closed-form expression, this depends only on $f_X$ and hence can be evaluated numerically. 

We show that for many distributions of interest, 
$C_{\text{adapt}}<C_{\text{non}}$, implying that non-adaptive protocols are strictly suboptimal compared to adaptive protocols.

\par As a toy example, consider the following strictly log-concave distribution,

\begin{align}
f_X(x)
&= \frac{1}{Z_{\text{std}}}\,e^{-\psi(x)}.
\label{eq: custom_dist_definition}
\end{align}
where
\begin{align}
\psi(x)
&= 1.48\left(\frac{|x|}{2.023076}\right)^{1.5}
+ 0.5\left(\frac{x}{2.023076}\right)^4
 + 0.0675\,\sin^2 \left(4\,\frac{x}{2.023076}\right)
 + 1, \notag
\end{align}

and
\begin{align}
Z_{\text{std}}
&= \int_{-\infty}^{\infty} \exp \left( -\Bigg[ 1.48 \left(\tfrac{|x|}{2.023076}\right)^{1.5} + 0.5 \left(\tfrac{x}{2.023076}\right)^4 + 0.0675\,\sin^2 \left(4\,\tfrac{x}{2.023076}\right) + 1  \Bigg] \right)\,dx. \notag
\end{align}

Here, the constants are chosen so that $f_X$ is a valid density and has unit variance.

Substituting for $\phi'$ and $h'$, we numerically evaluate these integrals, and the results are listed in Table~\ref{tab:computed_values_strict}. 
For this distribution, we find that $C_{\text{adapt}}=1.3868$ whereas $C_{\text{non}}=4.1982$, thereby demonstrating that in this case the best non-adaptive protocol has strictly \emph{larger} asymptotic MSE than adaptive protocols.
\begin{table}
\centering
\begin{tabular}{|c|c|c|c|c|}
\hline
$Z_{\text{std}}$ & $f_X(0)$ & $x^\star$ & $h^\star$ & $T(f_X)$ \\
\hline
0.8665 & 0.4246 & 0.4854 & 0.4607 & 0.0246 \\
\hline
\end{tabular}
\caption{Numerical evaluation of the normalizing constant $Z_{\text{std}}$, the density at the origin $f_X(0)$, the maximizer $x^\star$ of $h(x)$, the corresponding value $h^\star$, and the distribution-dependent constant $T(f_X)$ for the example~\eqref{eq: custom_dist_definition}}
\label{tab:computed_values_strict}
\end{table}

\begin{table}
\centering
\begin{tabular}{|l|c|c|c|}
\hline
\textbf{Distribution} & $C_{\mathrm{non}}$ & $C_{\mathrm{adapt}}$ & $C_{\mathrm{non}}/C_{\mathrm{adapt}}$ \\
\hline
Generalized Gaussian $(\beta=1.5)$ & $2.5806$ & $1.1035$ & $2.3385$ \\
\hline
Logistic & $1.1619$ & $1.2159$ & $0.9556$ \\
\hline
Hyperbolic secant & $1.1239$ & $1.0000$ & $1.1239$ \\
\hline
Sin2 (custom example~\eqref{eq: custom_dist_definition}) & $4.1982$ & $1.3868$ & $3.0272$ \\
\hline
\end{tabular}
\caption{Adaptive vs.\ non-adaptive constants for unit-variance distributions.
Generalized Gaussian uses $\beta = 1.5$. Values computed using $C_{\mathrm{non}} = 0.1034/T(f_X)$ and $C_{\mathrm{adapt}} = 1/(4f_X(0)^2)$.}
\label{tab:const_gap}
\end{table}

\noindent
\begin{minipage}[t]{0.45\textwidth}
  \centering
  \includegraphics[width=1\linewidth]{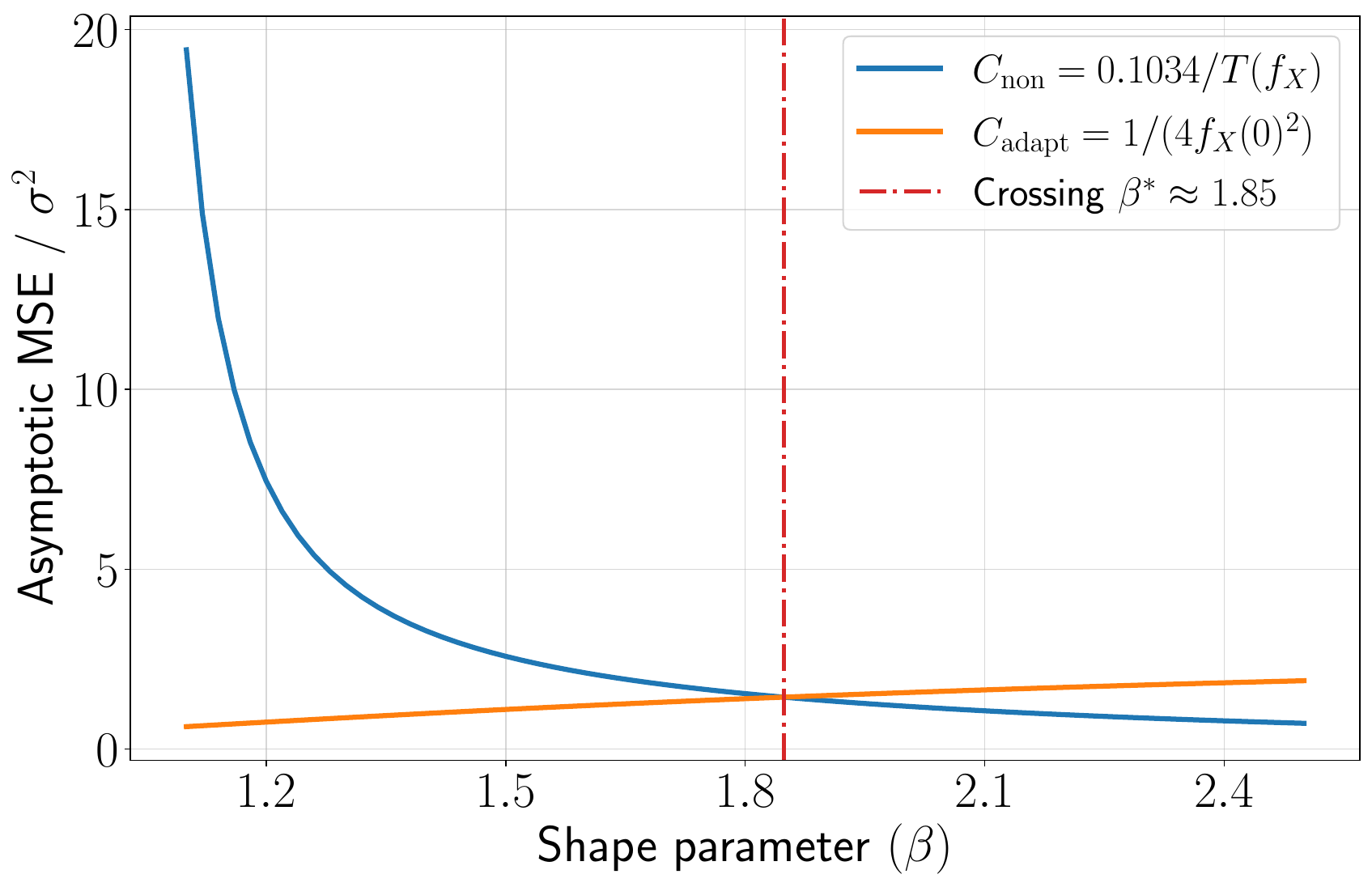}
  \captionof{figure}{
  Illustration of bounds on $\frac{1}{\sigma^2}\lim_{n\to\infty}\mse(\hat{\mu})$ for the generalized Gaussian family parameterized by $\beta>1$. Here, $C_{\text{non}}$ denotes the lower bound on this quantity for non-adaptive protocols, whereas $C_{\text{adapt}}$ denotes the corresponding upper bound for our simple adaptive protocol. As $\beta$ increases, $C_{\mathrm{non}}$ decreases while $C_{\mathrm{adapt}}$ increases.
  The curves intersect at $\beta\approx 1.8488$. This allows us to conclude that non-adaptive protocols are \emph{provably suboptimal} compared to adaptive protocols for $\beta<1.8488$.
  }
  \label{fig:gg-constants}
\end{minipage}\hfill
\begin{minipage}[t]{0.45\textwidth}
  \centering
  \includegraphics[width=\linewidth]{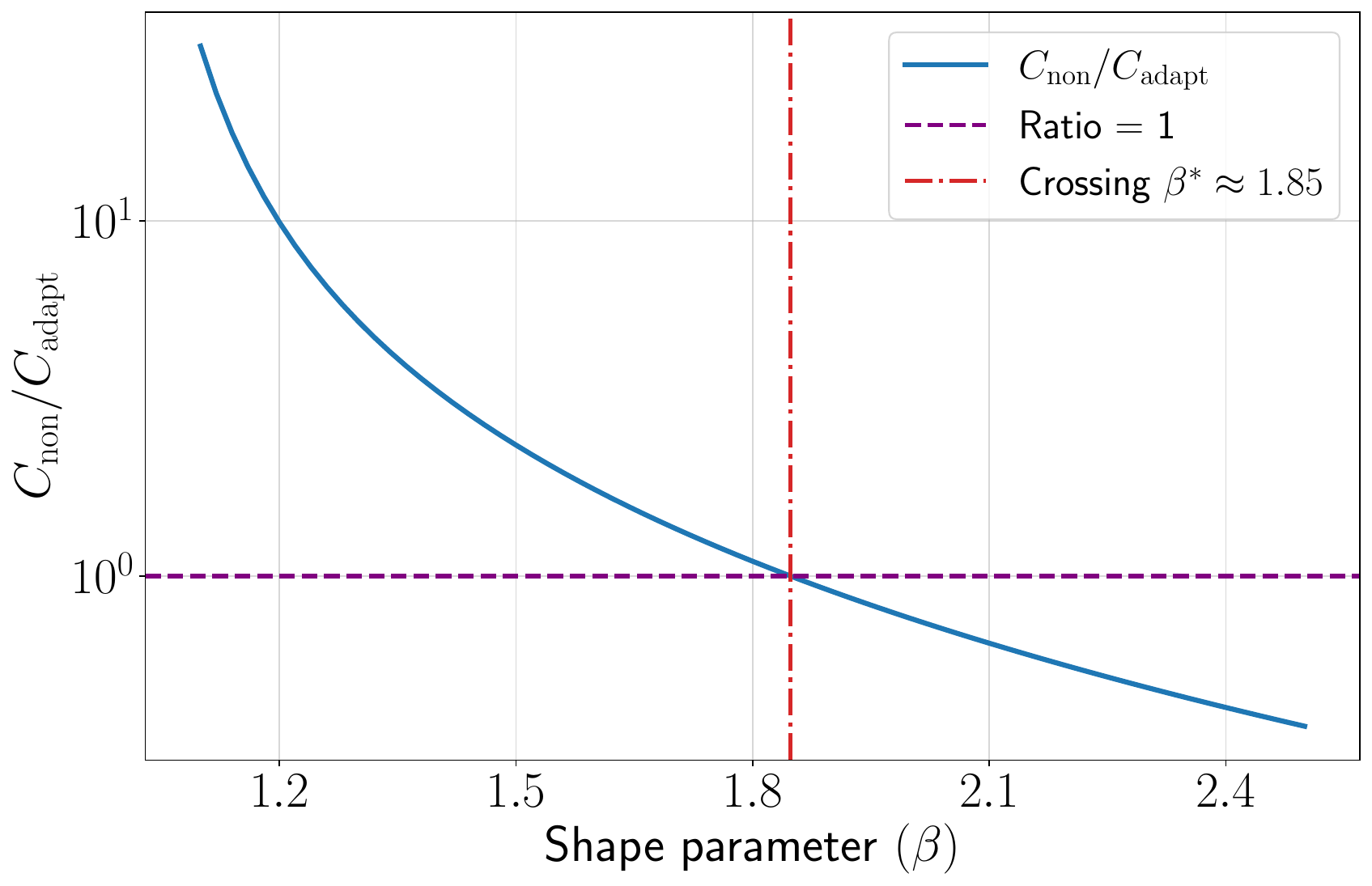}
  \captionof{figure}{
  Ratio of constants $C_{\mathrm{non}}/C_{\mathrm{adapt}}$ for the unit-variance Generalized Gaussian Distribution (GGD) as a function of the shape parameter $\beta \in[1.1,1.85)$.
  We compute $C_{\mathrm{non}}=\tfrac{0.1034}{T(f_X)}$ (non-adaptive lower-bound constant) and $C_{\mathrm{adapt}}=\tfrac{1}{4f_X(0)^2}$ (adaptive constant) from the unit-variance density $f_X$. The curves cross at $\beta^\star \approx 1.85$. For $\beta<\beta^\star$ the non-adaptive lower bound exceeds the adaptive constant (ratio $>1$), while for $\beta>\beta^\star$ the adaptive constant is larger (ratio $<1$).}
  \label{fig:ggd_ratio_beta}
\end{minipage}

\par For the Generalized Gaussian family of distributions (GGD)  
\begin{equation}
f_X(x) = \frac{\beta}{2\alpha \Gamma(1/\beta)} \exp \left(-(|x|/\alpha)^\beta\right), 
\qquad \alpha \triangleq \sqrt{\tfrac{\Gamma(1/\beta)}{\Gamma(3/\beta)}}, \label{def:GGD2}
\end{equation}
the density has zero mean and unit variance for all $\beta>0$, and is strictly log-concave for $\beta>1$. Note that $\beta=2$ gives the standard normal.
 For this family, we find that the non-adaptive asymptotic constant is guaranteed to be larger than the adaptive constant throughout the range $1<\beta< 1.85$. For lighter-tailed shapes with $\beta > 1.85$, this inequality no longer holds, and the adaptive constant becomes smaller. See Fig.~\ref{fig:gg-constants} and Fig.~\ref{fig:ggd_ratio_beta}. We conjecture that the lower bound is weak for $\beta>1.85$ and there is scope for improvement.
\par Note that Theorem~\ref{thm:lowerbound_nonadaptive} explicitly requires strict log-concavity. The Laplace distribution is log-concave but not strictly log-concave. Therefore, the adaptive lower bound remains valid for this distribution, while the non-adaptive lower bound does not apply since it requires strict log-concavity.
\subsection{Communication and computational complexity}
Both the non-adaptive and adaptive protocols require only one bit of communication per user, yielding a total communication cost of $n$ bits. The adaptive protocol involves two rounds.  In the first round, $n_1+n_2$ users transmit one-bit messages based  on local thresholding operations. The last $n_3$ users require access to the estimate $\hat{\mu}_c$ formed from the earlier transmissions.  This can be implemented in several standard ways.  Under a sequential model, the last $n_3$ users may view the previously posted one-bit messages and compute $\hat{\mu}_c$ directly. In settings without such shared access, the server may broadcast a single real number to these users, which incurs only a constant amount of additional communication (assuming a broadcast model of communication). The computational cost is also minimal. Each user performs an $O(1)$ thresholding step, and the server aggregates $O(n)$ one-bit messages and then performs a constant number of real-valued operations. Hence the overall computational complexity is $O(1)$ for each user and $O(n)$ for the server.


\subsection{Increasing bit budgets provide marginal gains}\label{sec:increase_bitbudget}

While the main focus of our work is on one-bit compressors, it is helpful to compare our bounds on the asymptotic $\mse$ with a corresponding lower bound under the \emph{centralized setting}, or the case where \emph{no communication constraints} are imposed on the users. This helps assess whether allowing more than one bit per sample can yield a substantial reduction in $\mse$. 

The Fisher information for the location in the location-scale family for a given $\sigma>0$ is
\[
I_{F}(f_X;\sigma)
=
\frac{1}{\sigma^{2}}
\int_{\mathbb{R}}
\frac{[f_X'(z)]^{2}}{f_X(z)}\,dz.
\]
For any regular estimator based on $n$ $\iid$ samples, the Cram\'{e}r–Rao bound implies that the MSE of any unbiased estimator is 
\[
\mathrm{MSE}(\hat\mu_n)
\ge
\frac{\sigma^{2}}{n\,J(f_X)},
\qquad
J(f_X)
= \int_{\mathbb{R}} \frac{[f_X'(z)]^{2}}{f_X(z)}\,dz.
\]
Equivalently, the Van-Trees inequality~\citep{bj/1186078362} yields an asymptotic lower bound for any (potentially biased) estimator  
\[
\lim_{n\to\infty}n\mathrm{MSE}(\hat\mu_n)
\ge
\frac{\sigma^{2}}{\,J(f_X)},
\]
where the expectation is taken with respect to $f_{X,\mu,\sigma}f_{\theta}$ for any prior density on $f_{\theta}$ for which the above exists.

Thus $1/J(f_X)$ can be thought of as the best $1/n$ constant achievable in the absence of quantization.


\begin{figure}[htb!]
    \centering

    \begin{subfigure}[t]{0.48\linewidth}
        \centering
        \includegraphics[width=\linewidth]{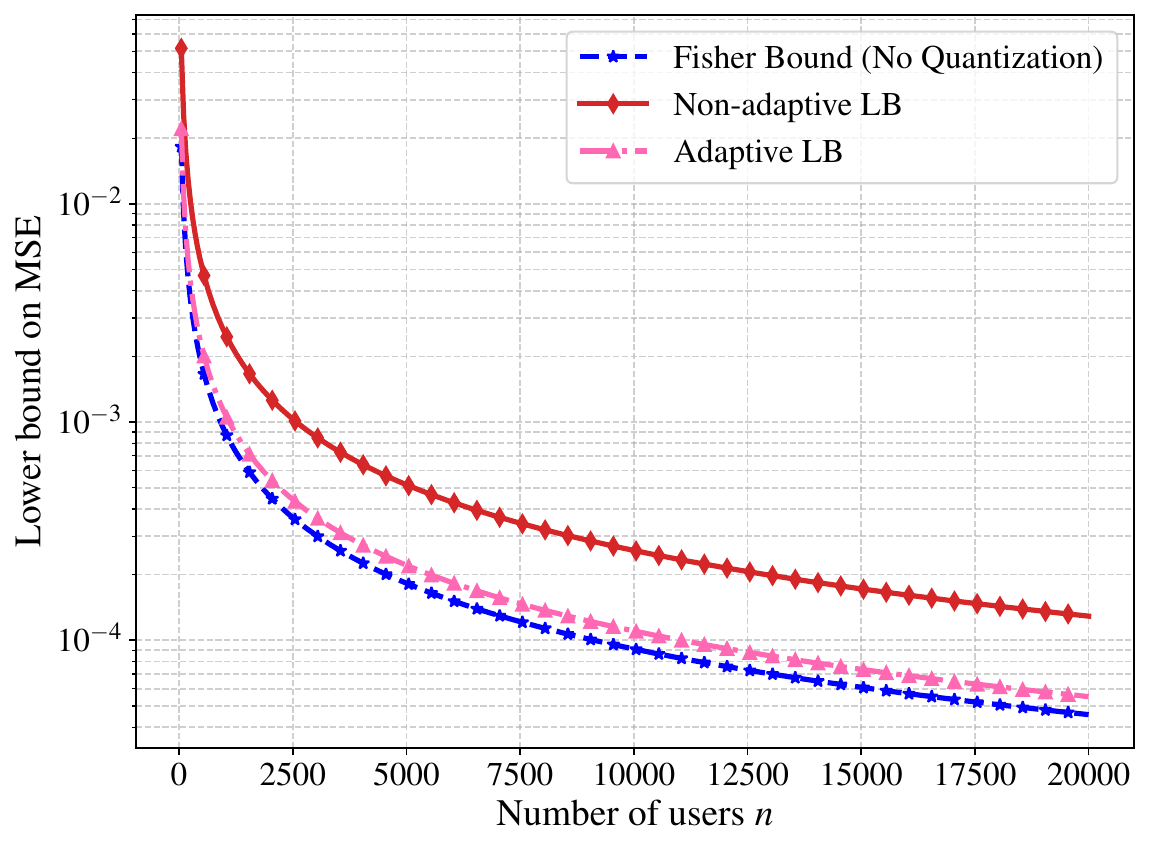}
        \caption{Generalized Gaussian ($\beta = 1.5$)}
        \label{fig:fisher-gaussian}
    \end{subfigure}\hfill
    \begin{subfigure}[t]{0.48\linewidth}
        \centering
        \includegraphics[width=\linewidth]{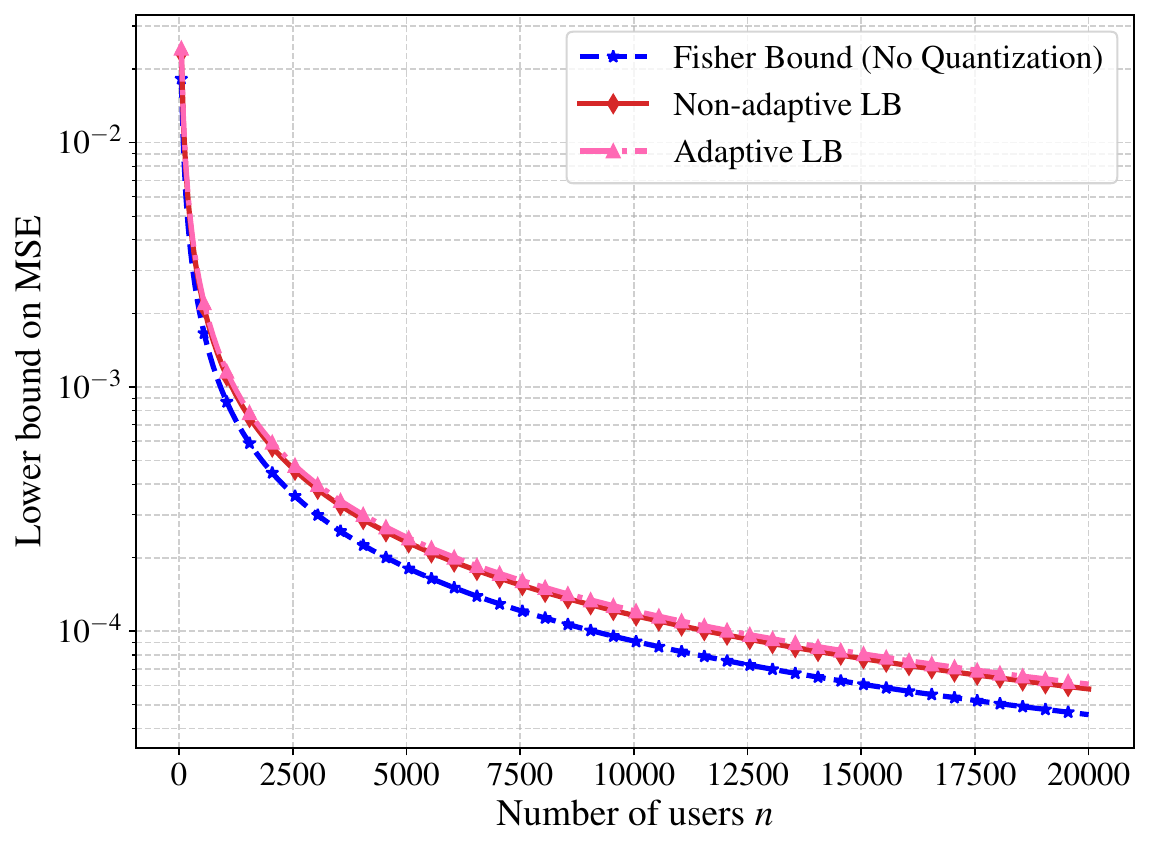}
        \caption{Logistic}
        \label{fig:fisher-logistic}
    \end{subfigure}

    \begin{subfigure}[t]{0.48\linewidth}
        \centering
        \includegraphics[width=\linewidth]{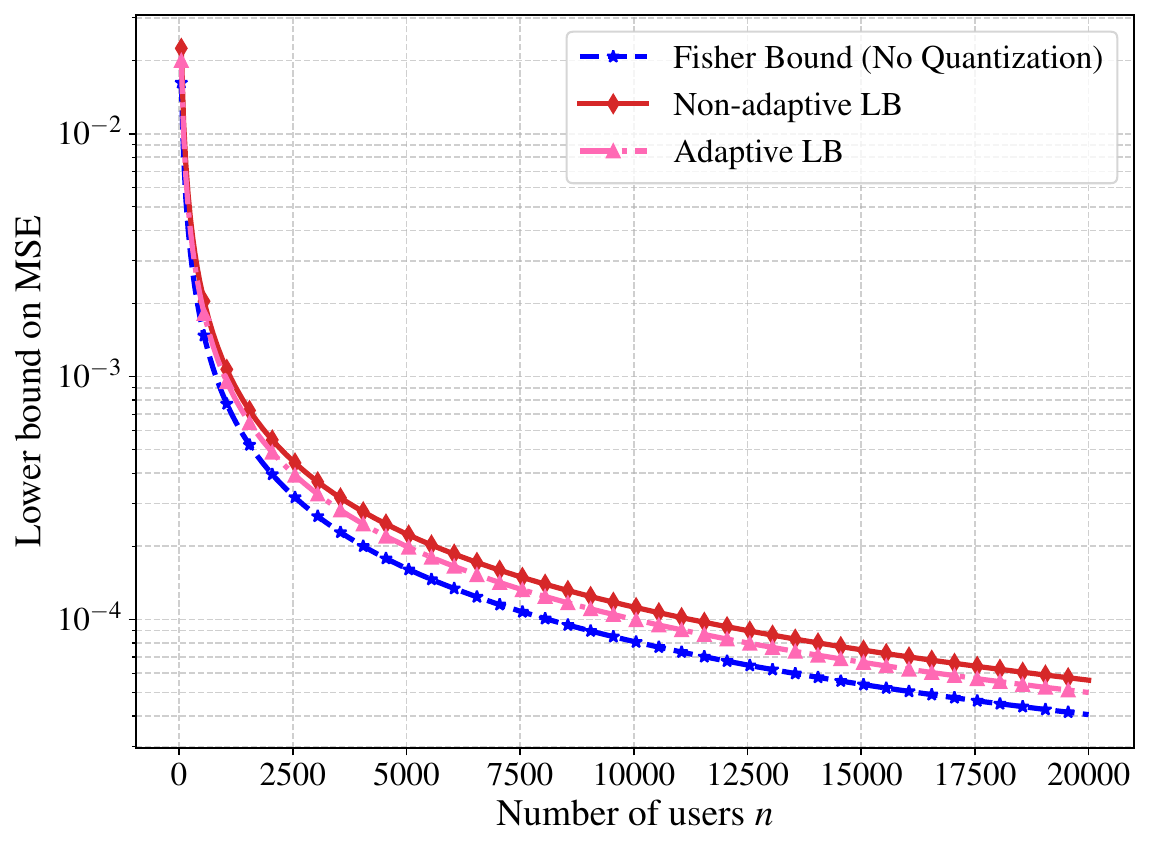}
        \caption{Hypsecant}
        \label{fig:fisher-hypsecant}
    \end{subfigure}\hfill
    \begin{subfigure}[t]{0.48\linewidth}
        \centering
        \includegraphics[width=\linewidth]{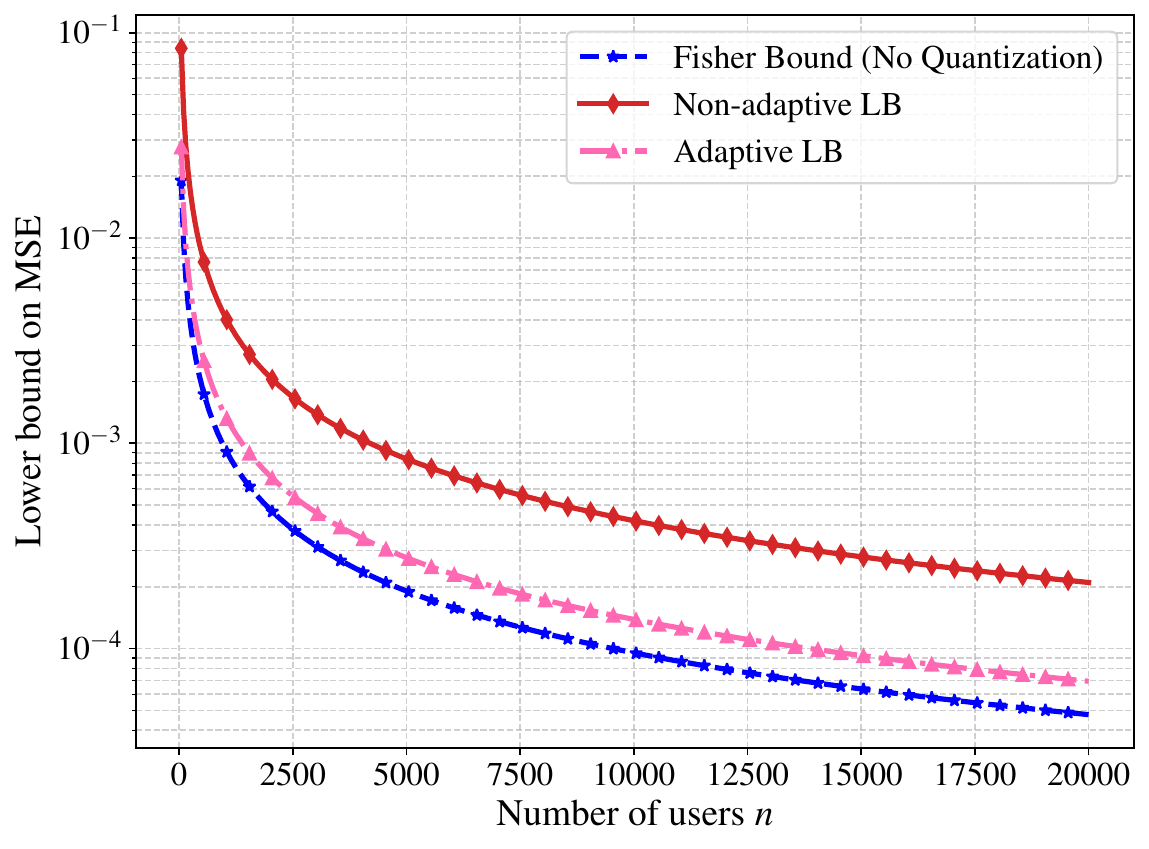}
        \caption{SIN2}
        \label{fig:fisher-sin2}
    \end{subfigure}

    \caption{
    Fisher-information lower bound (no quantization) compared with the adaptive and
    non-adaptive one-bit lower bounds for four distributions.
    All curves decay as $1/n$; differences arise only through the proportionality constants. The adaptive lower bound is tight, which implies that increasing the number of bits per user can only provide a minor improvement in the asymptotic MSE.
    }
    \label{fig:fisher-4grid}
\end{figure}
Figure~\ref{fig:fisher-4grid} illustrates the Fisher-information bound (centralized setting) together with the adaptive and non-adaptive one-bit lower bounds we derived.  
All bounds exhibit the same dependence on $n,\sigma$. The adaptive one-bit curve remains close to the centralized benchmark. 
We also report
\[
C_{\mathrm{FI}}
= \frac{1}{J(f_X)}, \qquad
C_{\mathrm{adp}}
= \frac{1}{4f_X^2(0)}, \qquad
C_{\mathrm{non}}
= \frac{\alpha^*}{T(f_X)},
\]
in Table~\ref{tab:constants}.  

\begin{table}[tb!]
\centering
\begin{tabular}{|l|c|c|c|}
\hline
Distribution & $C_{\mathrm{FI}}$ & $C_{\mathrm{adp}}$ & $C_{\mathrm{non}}$ \\
\hline
Gaussian (GGD, $\beta=1.5$) & 0.9126 & 1.1035 & 2.5806 \\
\hline
Logistic                    & 0.9119 & 1.2159 & 1.1619 \\
\hline
Hypersecant                 & 0.8106 & 1.0000 & 1.1239 \\
\hline
SIN2                        & 0.9534 & 1.3868 & 4.1982 \\
\hline
\end{tabular}
\caption{Normalized constants $(n\,\mathrm{MSE})/\sigma^{2}$ for the Fisher-information
bound, the adaptive one-bit bound, and the non-adaptive bound, evaluated using
standardized densities with $\mu=0$ and $\sigma=1$.}
\label{tab:constants}
\end{table}

The results exhibit a consistent pattern across all distributions.  
The adaptive one-bit method attains a constant close to the Fisher-information limit,
which represents the best performance achievable without quantization.  
Consequently, the potential gain from transmitting more than one bit per sample is
limited in this setting.

The behavior of the non-adaptive bound is determined by the functional $T(f_X)$, and we do
not claim optimality of this bound. As seen in Table~\ref{tab:constants}, the constant
\emph{$C_{\mathrm{non}}$} may be either larger or smaller than the adaptive constant, depending on
the underlying distribution. The comparison with the Fisher-information bound should
therefore be understood as a consequence of the particular form of our non-adaptive bound,
rather than as a fundamental limitation of non-adaptive schemes.


\section{Simulation Results}\label{sec:simulation_results}
\begin{figure}[htb!]
    \centering
    \begin{subfigure}[t]{0.48\linewidth}
        \centering
        \includegraphics[width=\linewidth]{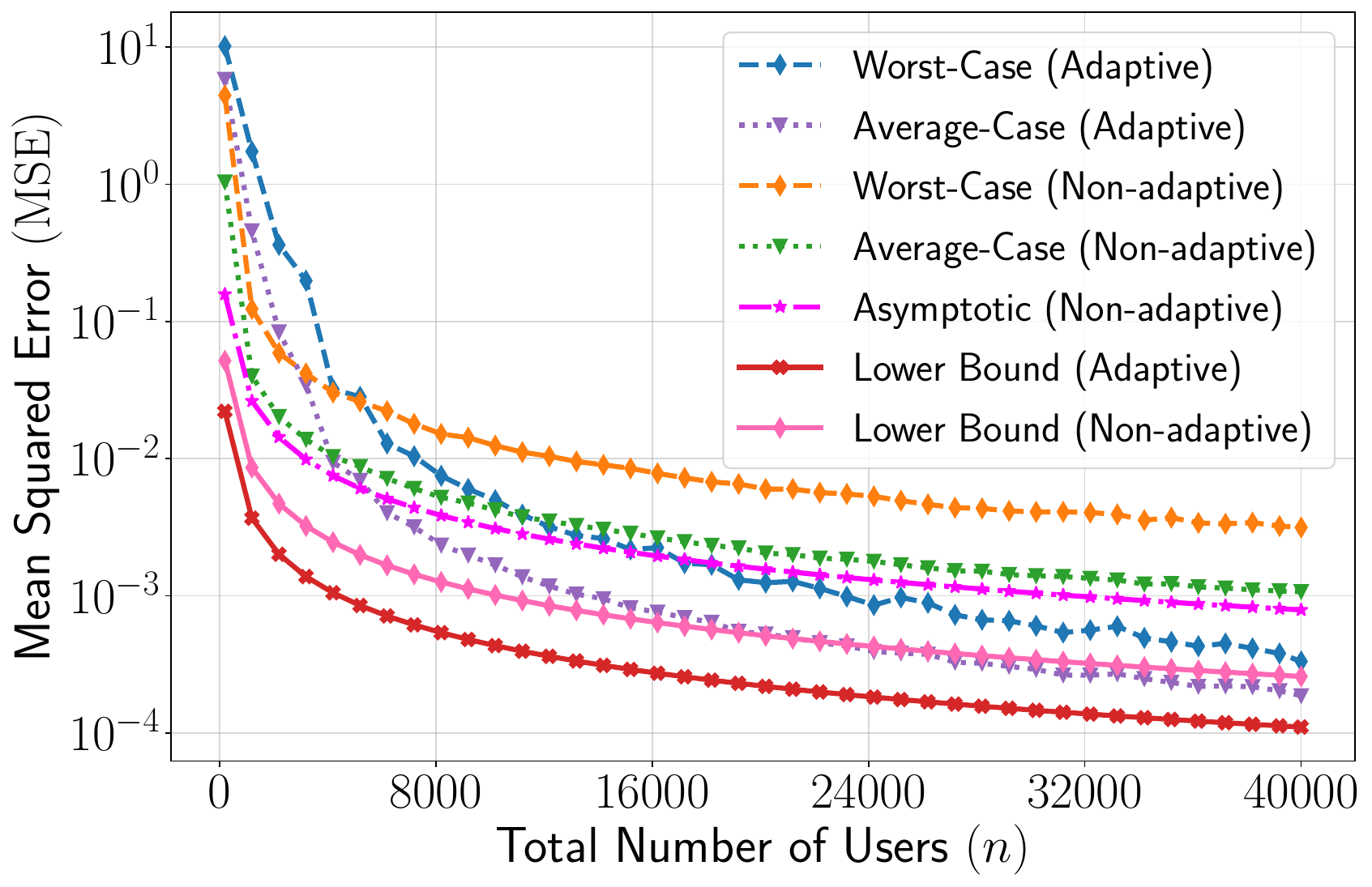} 
        \caption{ Generalized Gaussian with $\beta = 1.5$}
        \label{fig:wc-gaussian}
    \end{subfigure}\hfill
    \begin{subfigure}[t]{0.48\linewidth}
        \centering
        \includegraphics[width=\linewidth]{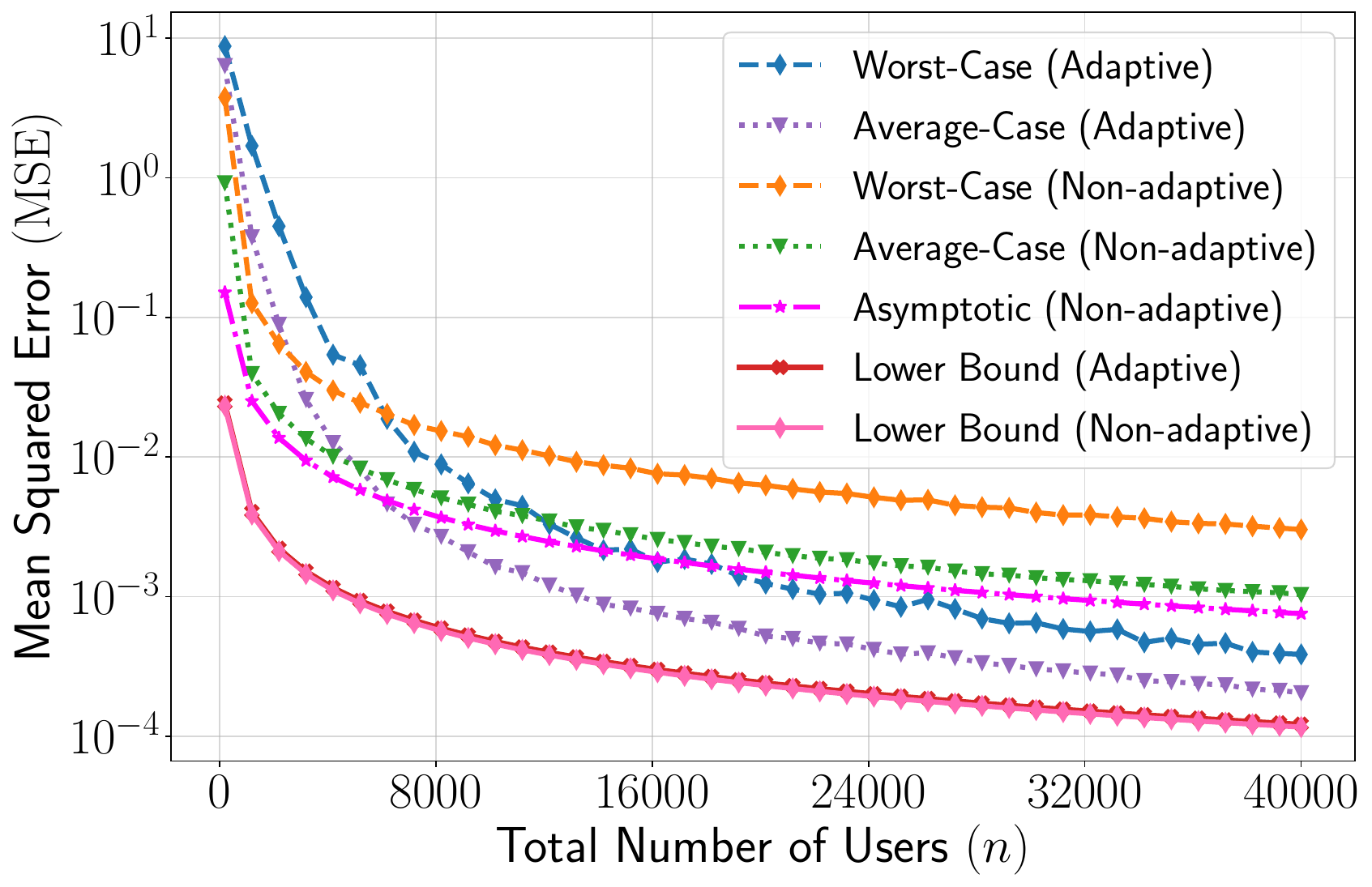}
        \caption{Logistic}
        \label{fig:wc-logistic}
    \end{subfigure}
    \begin{subfigure}[t]{0.48\linewidth}
        \centering
        \includegraphics[width=\linewidth]{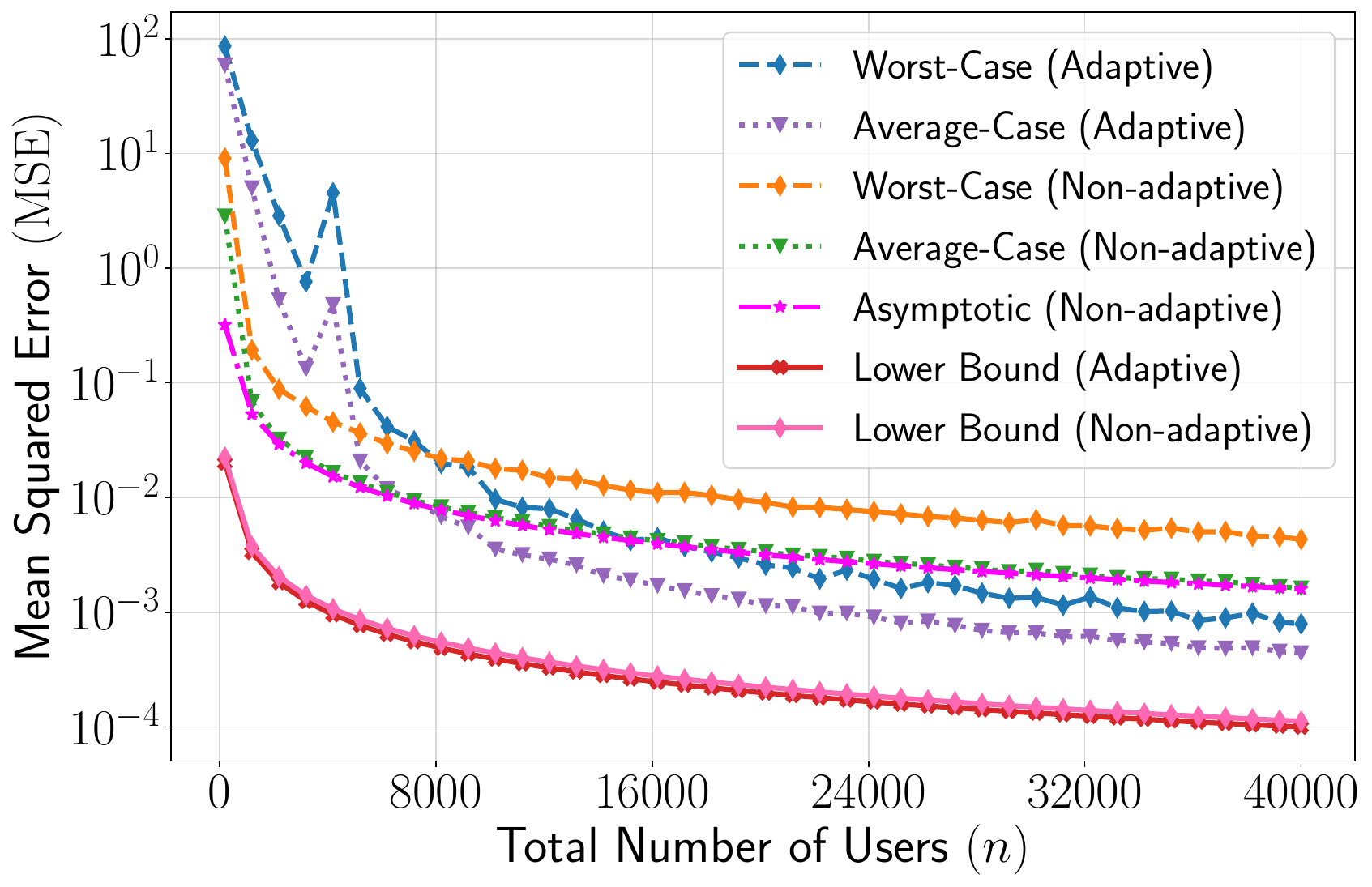}
        \caption{ Hypsecant}
        \label{fig:wc-hypsecant}
    \end{subfigure}\hfill
    \begin{subfigure}[t]{0.48\linewidth}
        \centering
        \includegraphics[width=\linewidth]{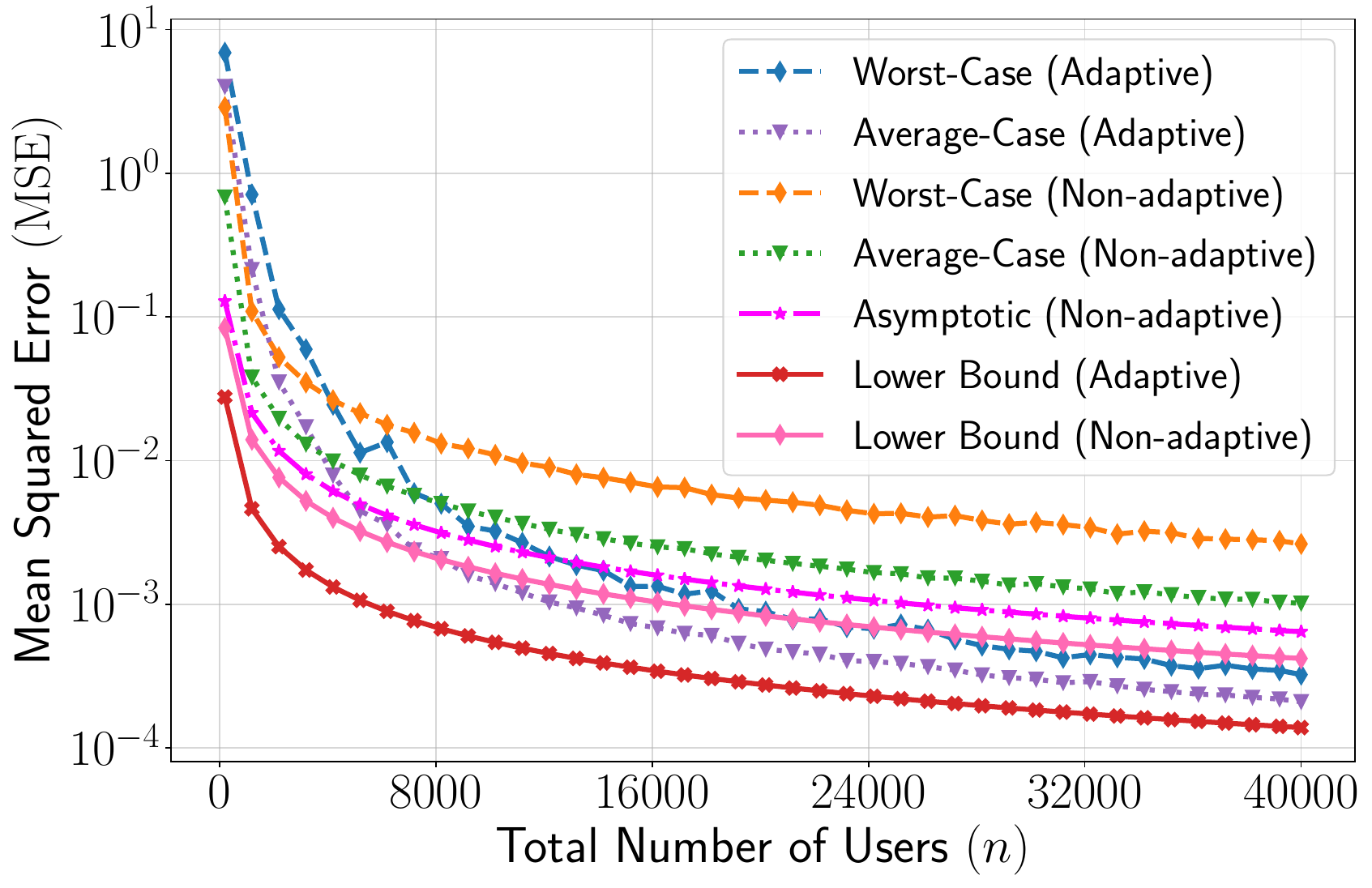}
        \caption{Sin2}
        \label{fig:wc-sin2}
    \end{subfigure}
    \caption{
    Worst-case and average (over $\mu$) MSE across four source distributions under one-bit protocols. Benchmarks are computed using $f_X(0)$ for each family; adaptive and non-adaptive curves are shown according to the legend. 
    The curve labeled \emph{Asymptotic (Non-adaptive)} represents the asymptotic value of $\mse$ predicted by Theorem~\ref{thm:mean_varinace_estimation}.
    }
    \label{fig:wc-4grid}
\end{figure}
\begin{figure*}[htb!]
    \centering

    \begin{subfigure}[t]{0.48\linewidth}
        \centering
        \includegraphics[width=\linewidth]{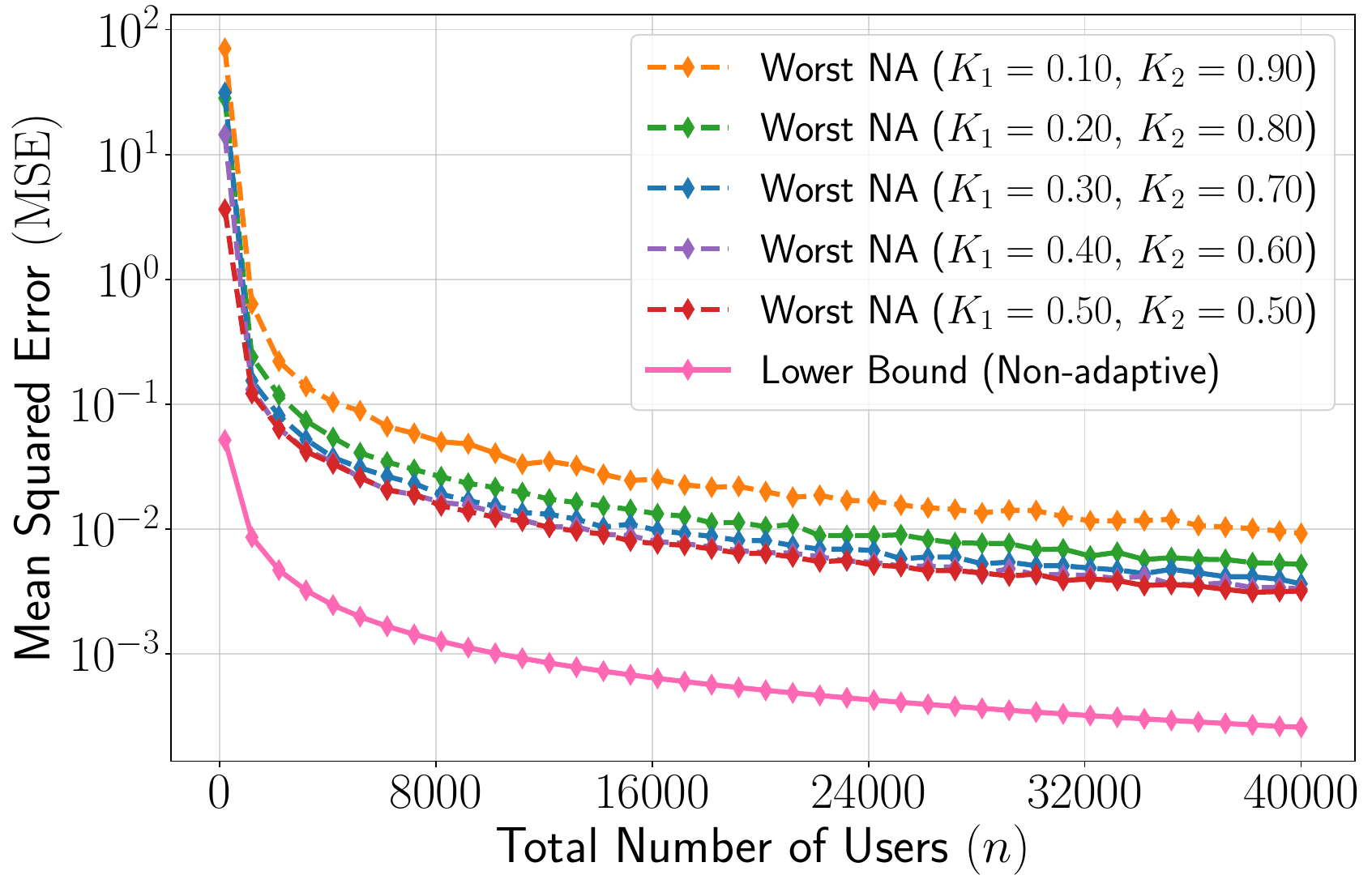}
        \caption{Generalized Gaussian with $\beta = 1.5$}
        \label{fig:na-gaussian}
    \end{subfigure}
    \hfill
    \begin{subfigure}[t]{0.48\linewidth}
        \centering
        \includegraphics[width=\linewidth]{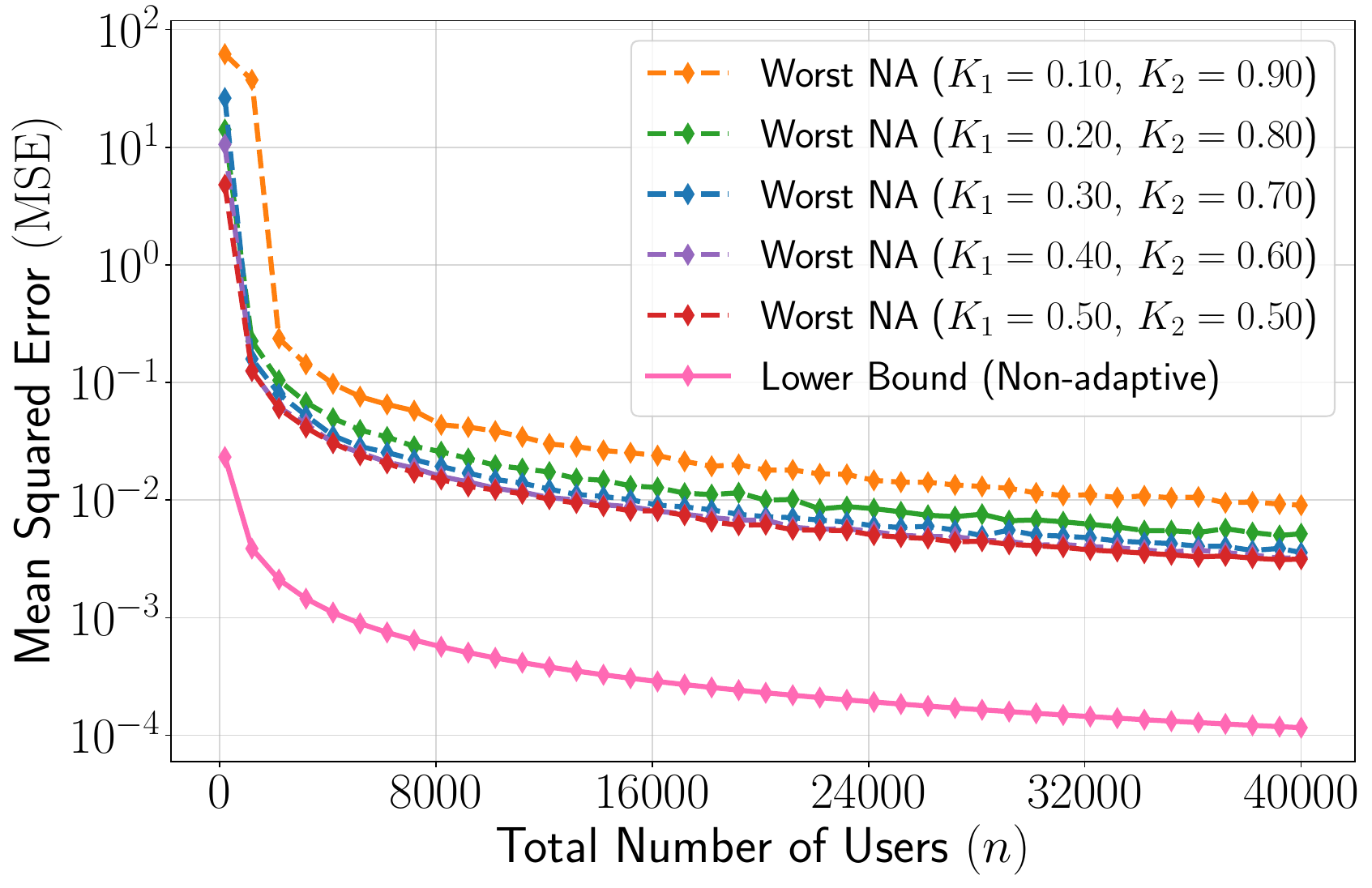}
        \caption{Logistic}
        \label{fig:na-logistic}
    \end{subfigure}

    \begin{subfigure}[t]{0.48\linewidth}
        \centering
        \includegraphics[width=\linewidth]{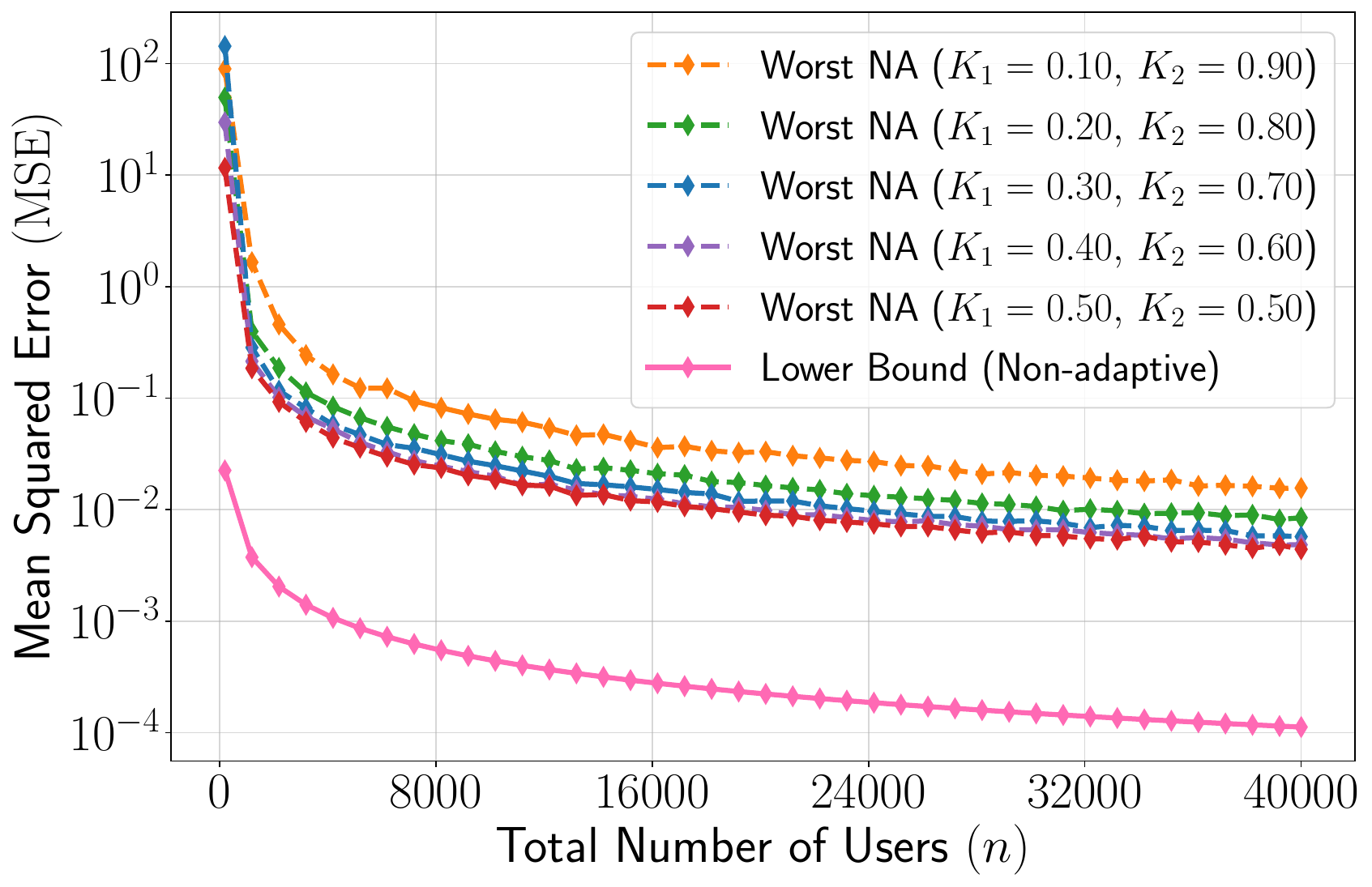}
        \caption{Hypsecant}
        \label{fig:na-hypsecant}
    \end{subfigure}
    \hfill
    \begin{subfigure}[t]{0.48\linewidth}
        \centering
        \includegraphics[width=\linewidth]{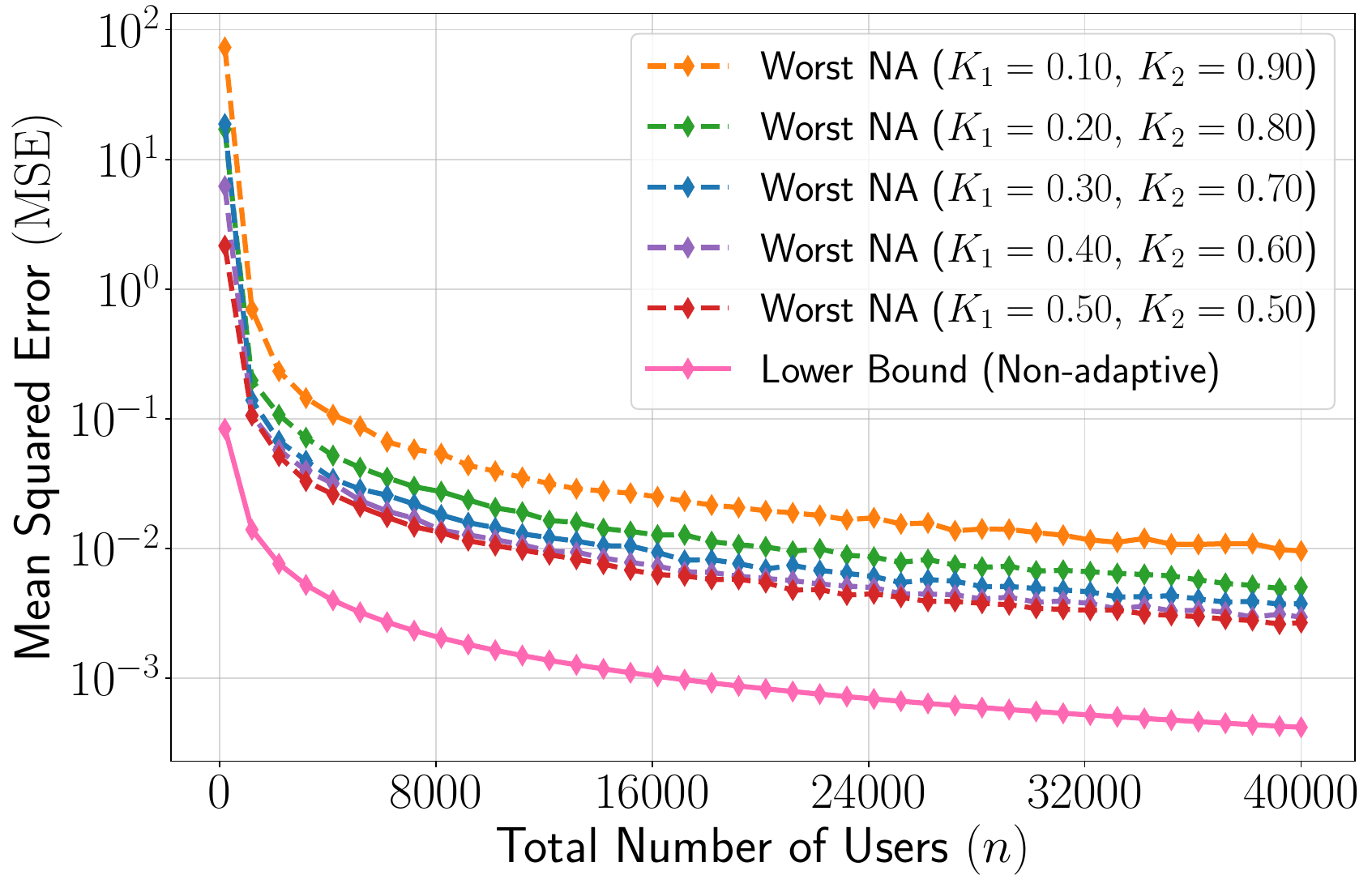}
        \caption{Sin2}
        \label{fig:na-sin2}
    \end{subfigure}
    \caption{Non-adaptive MSE for symmetric, strictly log-concave distributions with different $K_1,K_2$ allocations. Each subfigure shows worst-case  MSE alongside the lower bound $\tfrac{0.1043}{T(f_X)} \tfrac{\sigma^2}{n}$. A consistent positive gap is observed across all cases.}
    \label{fig:nonadaptive-4plots}
\end{figure*}

\begin{figure*}[htb!]
    \centering

    \begin{subfigure}[t]{0.48\linewidth}
        \centering
        \includegraphics[width=\linewidth]{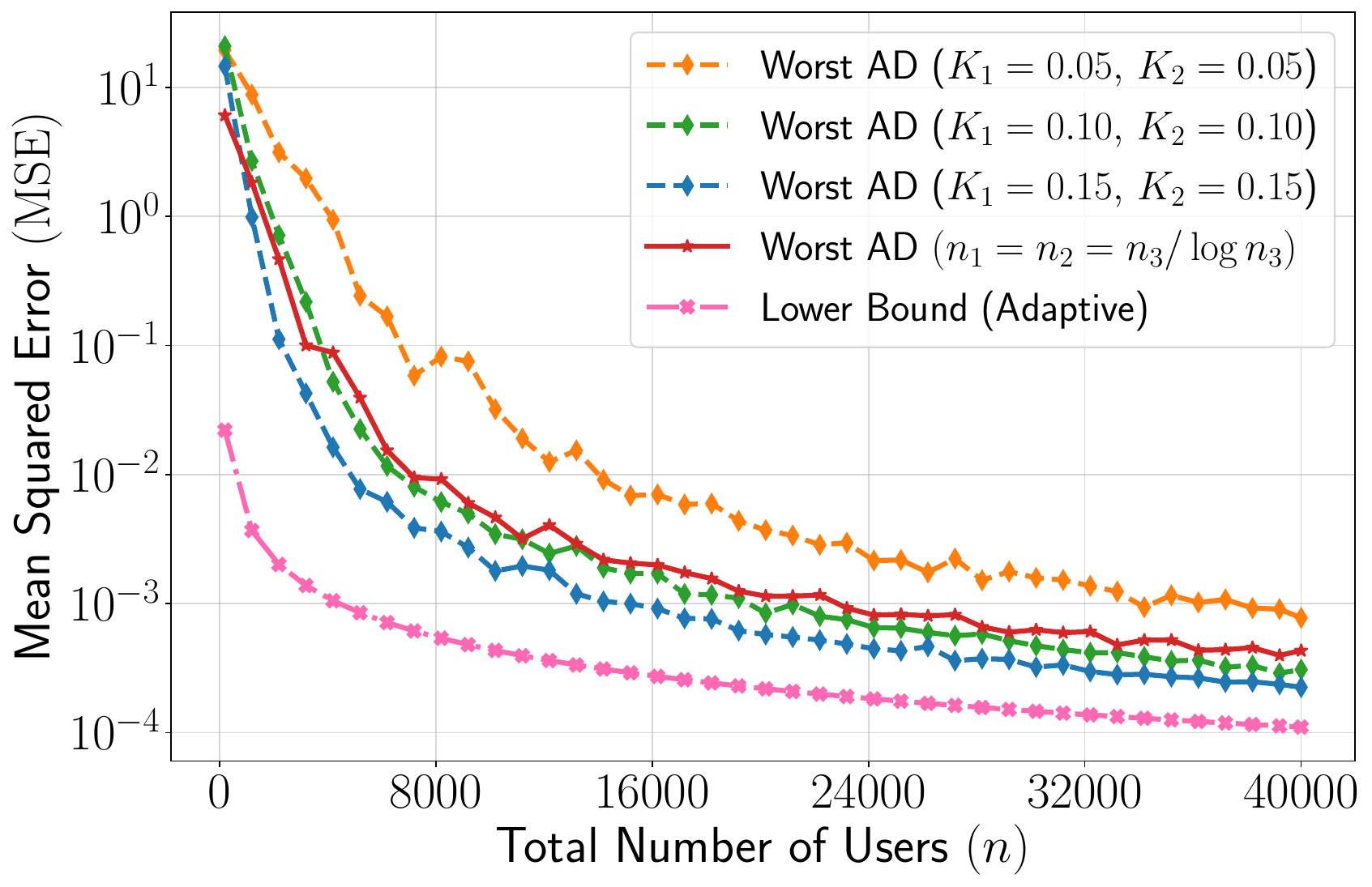}
        \caption{Generalized Gaussian with $\beta = 1.5$}
        \label{fig:ad-gaussian}
    \end{subfigure}
    \hfill
    \begin{subfigure}[t]{0.48\linewidth}
        \centering
        \includegraphics[width=\linewidth]{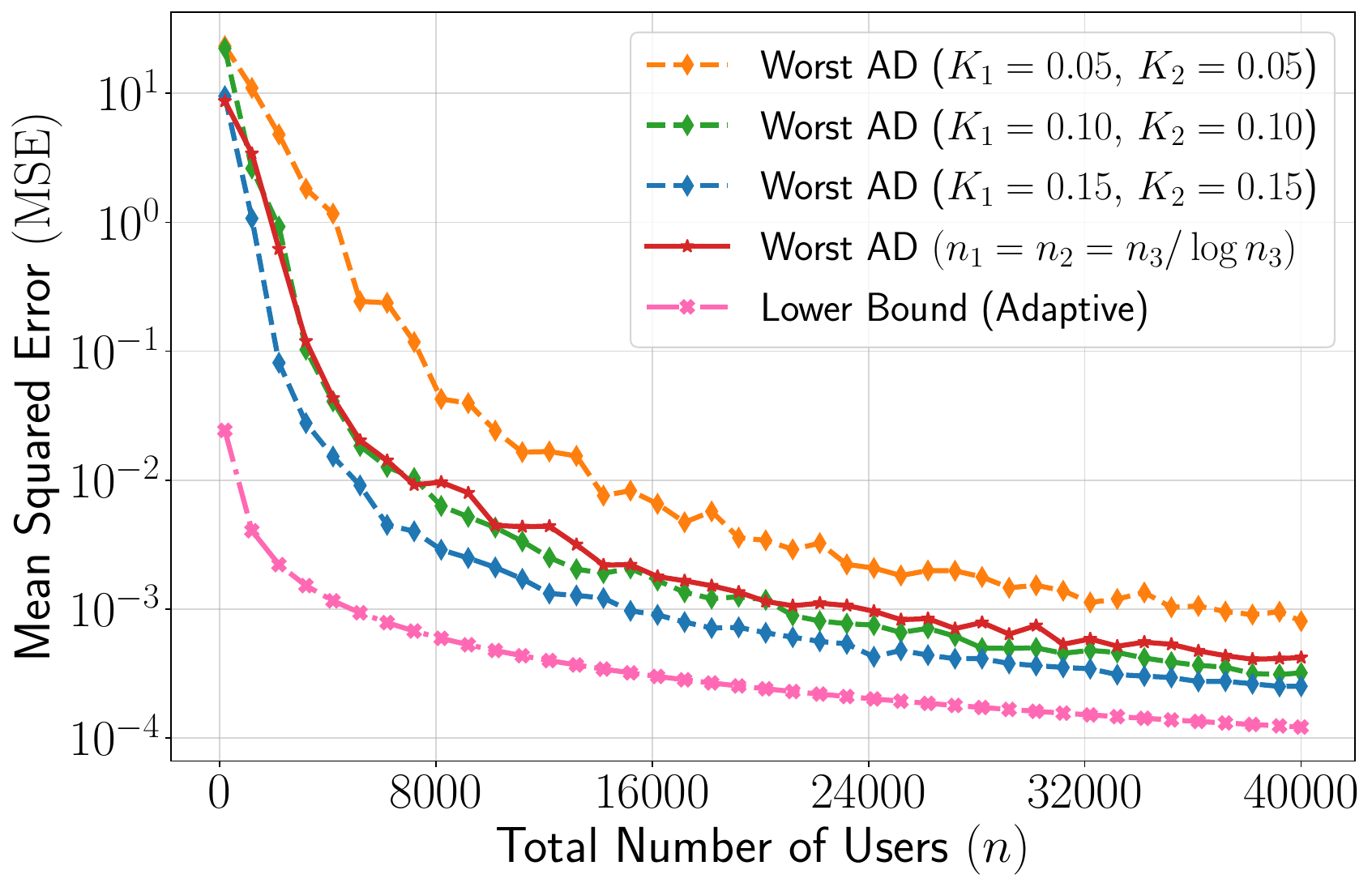}
        \caption{Logistic}
        \label{fig:ad-logistic}
    \end{subfigure}

    \begin{subfigure}[t]{0.48\linewidth}
        \centering
        \includegraphics[width=\linewidth]{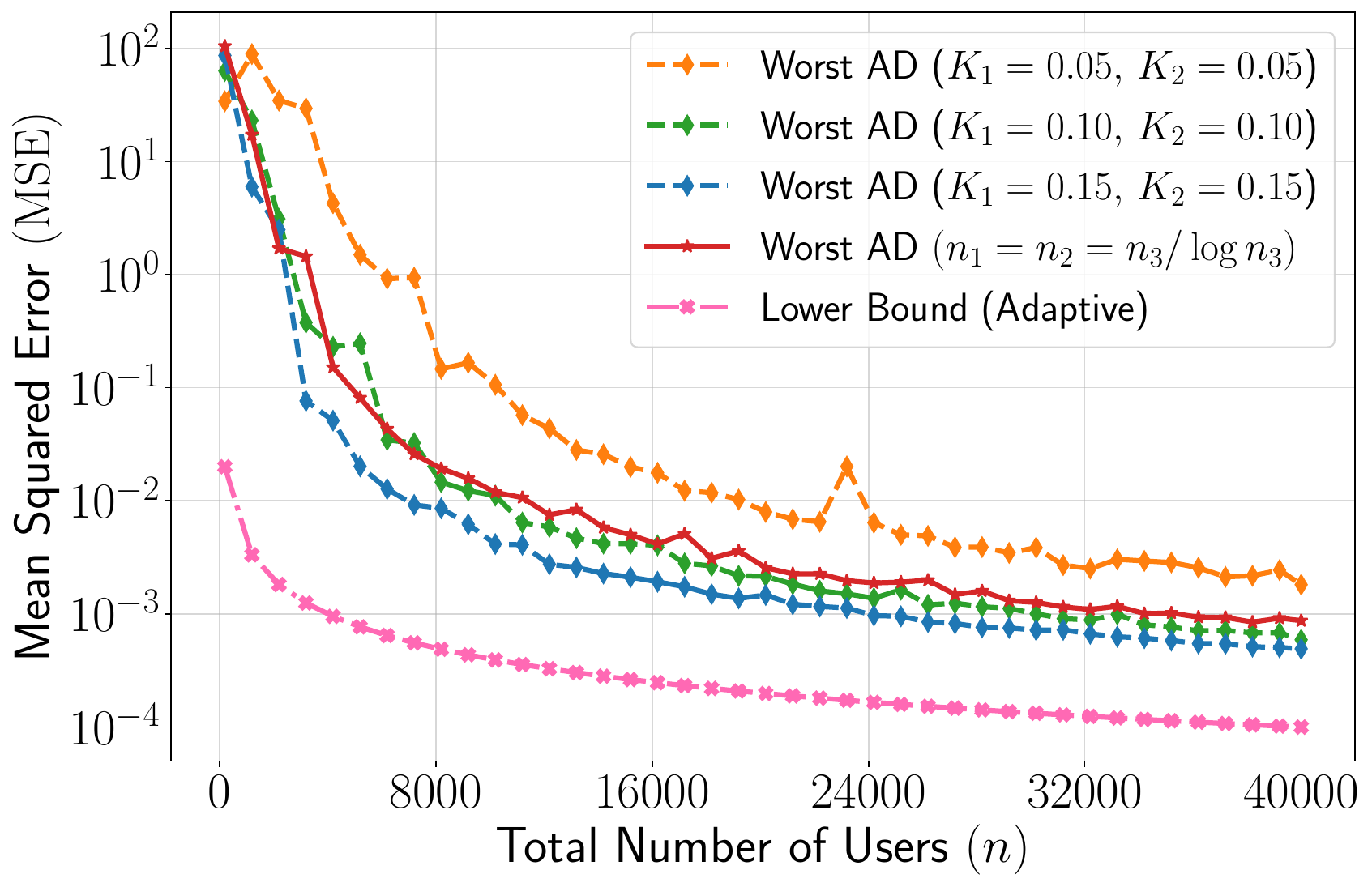}
        \caption{Hypsecant}
        \label{fig:ad-hypsecant}
    \end{subfigure}
    \hfill
    \begin{subfigure}[t]{0.48\linewidth}
        \centering
        \includegraphics[width=\linewidth]{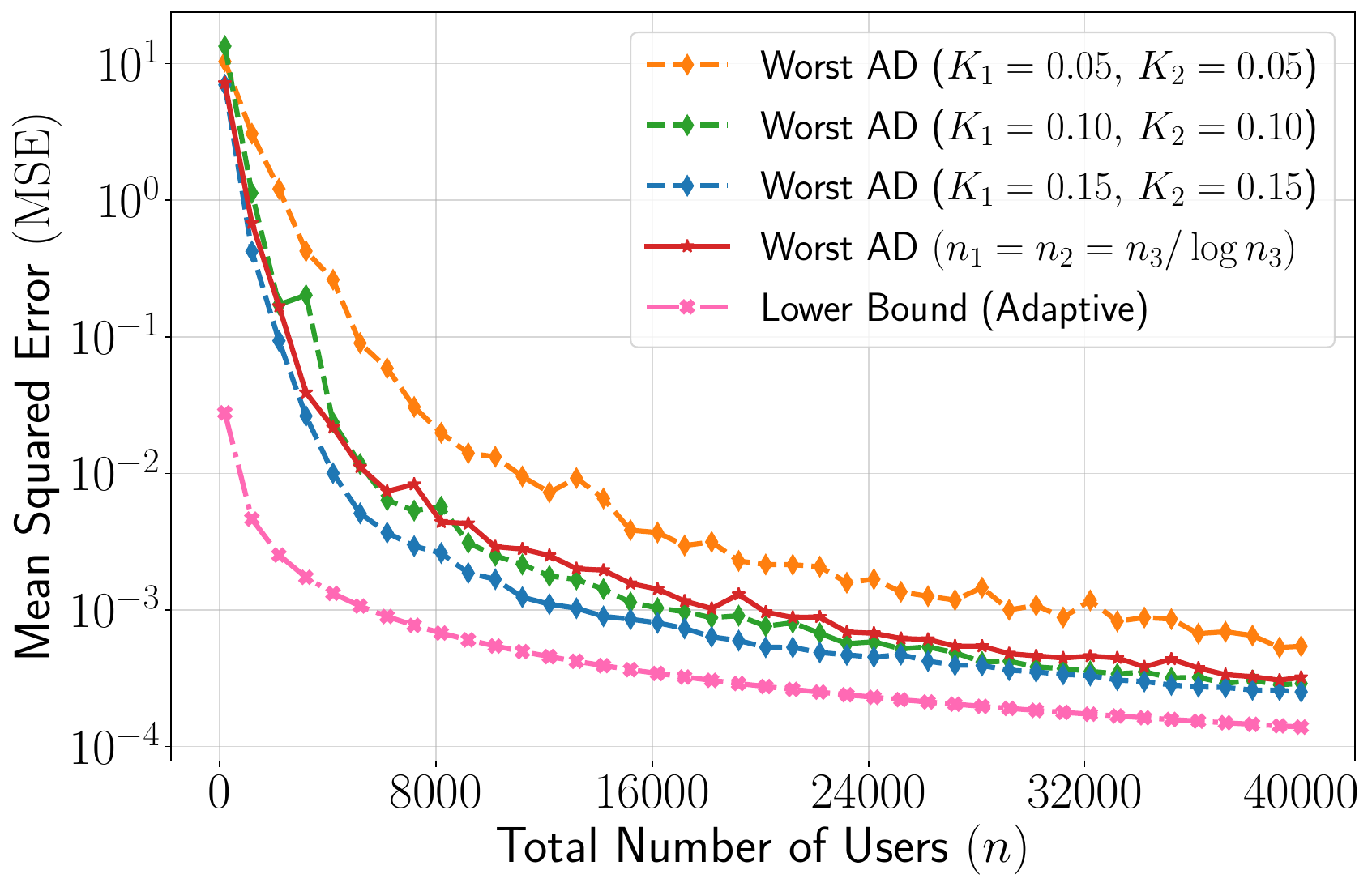}
        \caption{Sin2}
        \label{fig:ad-sin2}
    \end{subfigure}

    \caption{Adaptive MSE across four symmetric, strictly log-concave distributions. Worst-case closely follow the benchmark $\tfrac{\sigma^2}{4 f_X(0)^2 n}$, confirming asymptotic optimality.}
    \label{fig:combined-4plots_adaptive}
\end{figure*}
\begin{figure*}[htb!]
    \centering

    \begin{subfigure}[t]{0.48\linewidth}
        \centering
        \includegraphics[width=\linewidth]{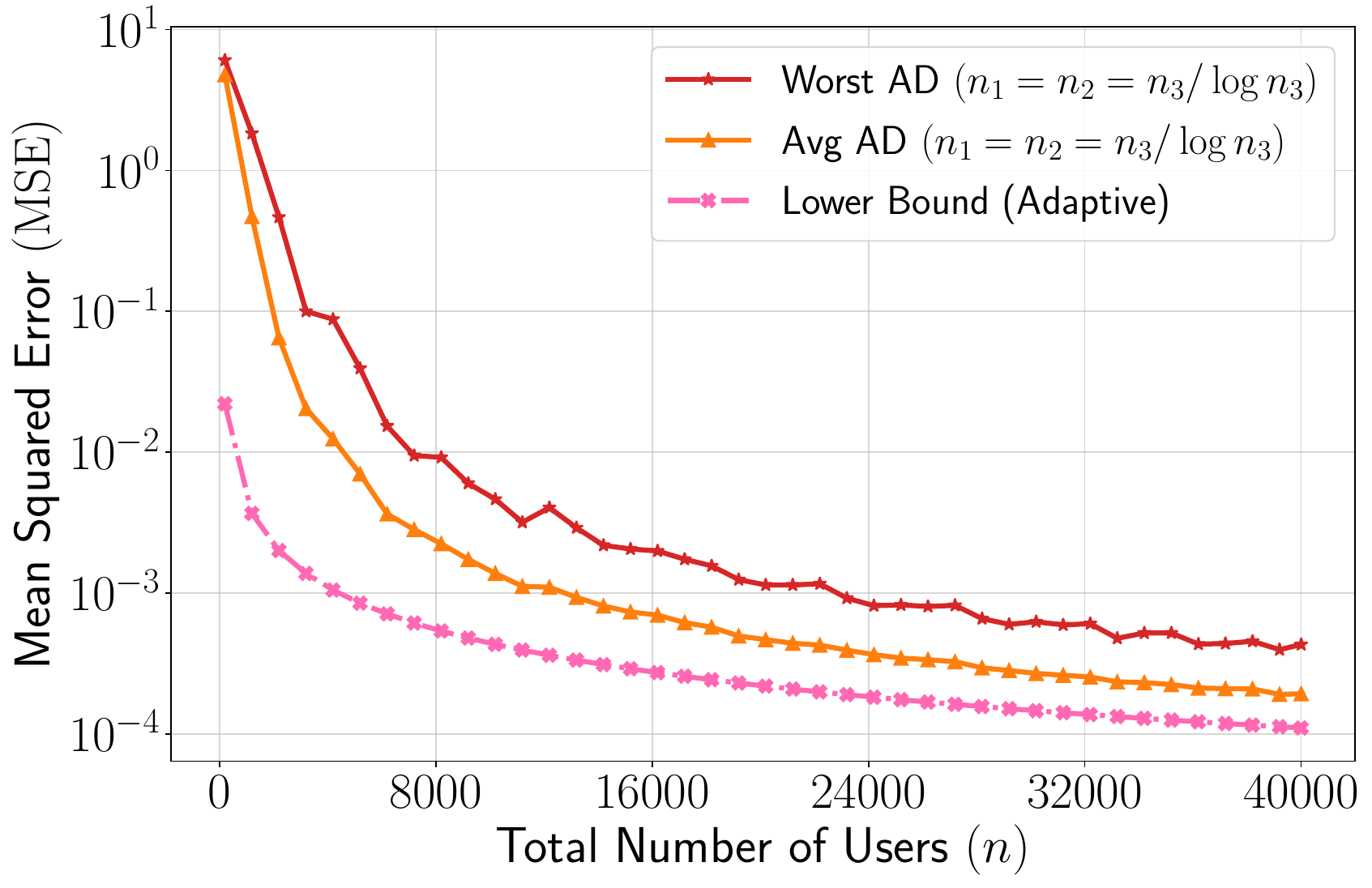}
        \caption{Generalized Gaussian with $ \beta = 1.5$}
        \label{fig:ad-gaussian_spl}
    \end{subfigure}
    \hfill
    \begin{subfigure}[t]{0.48\linewidth}
        \centering
        \includegraphics[width=\linewidth]{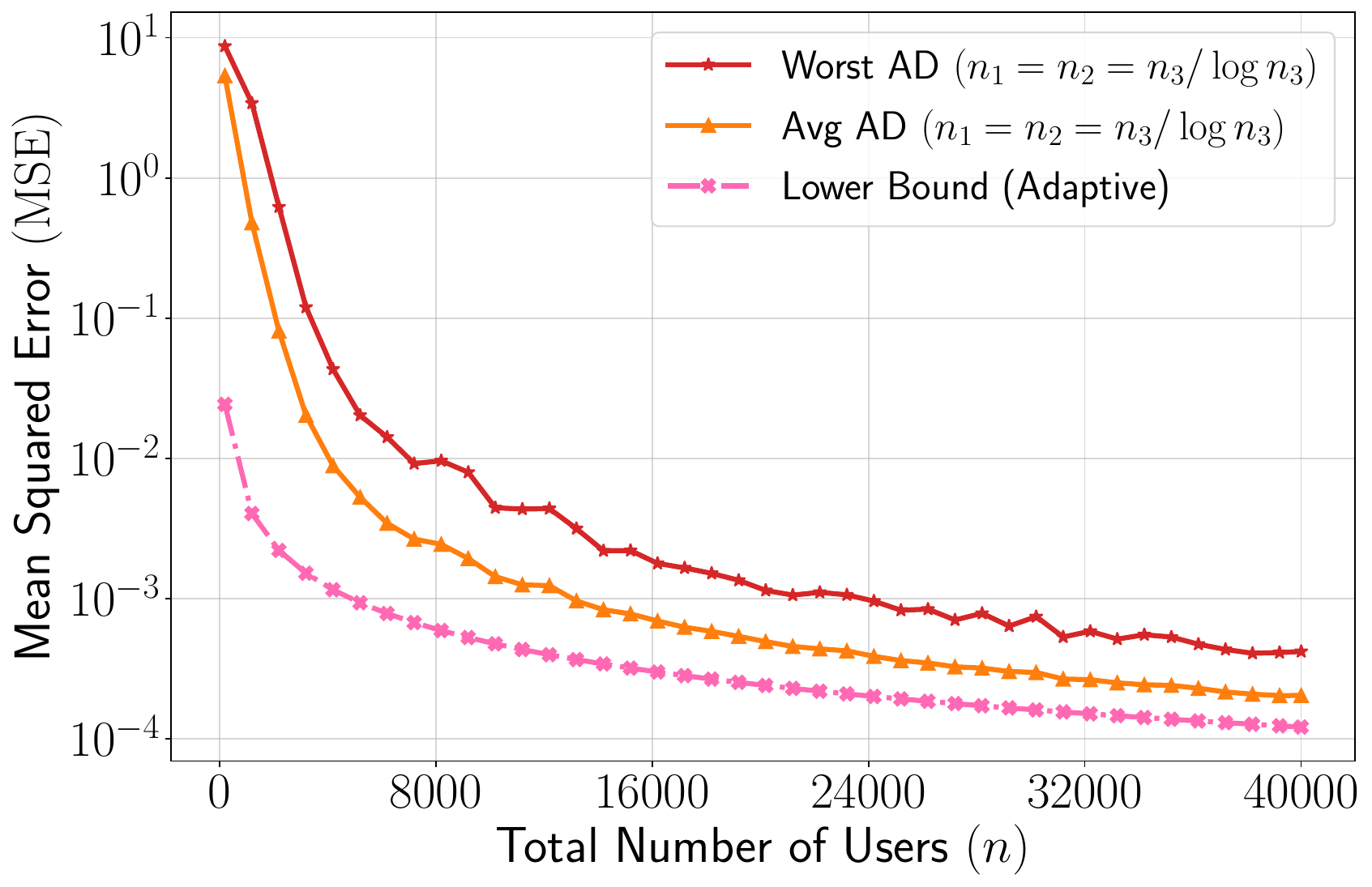}
        \caption{Logistic}
        \label{fig:ad-logistic_spl}
    \end{subfigure}

    \begin{subfigure}[t]{0.48\linewidth}
        \centering
        \includegraphics[width=\linewidth]{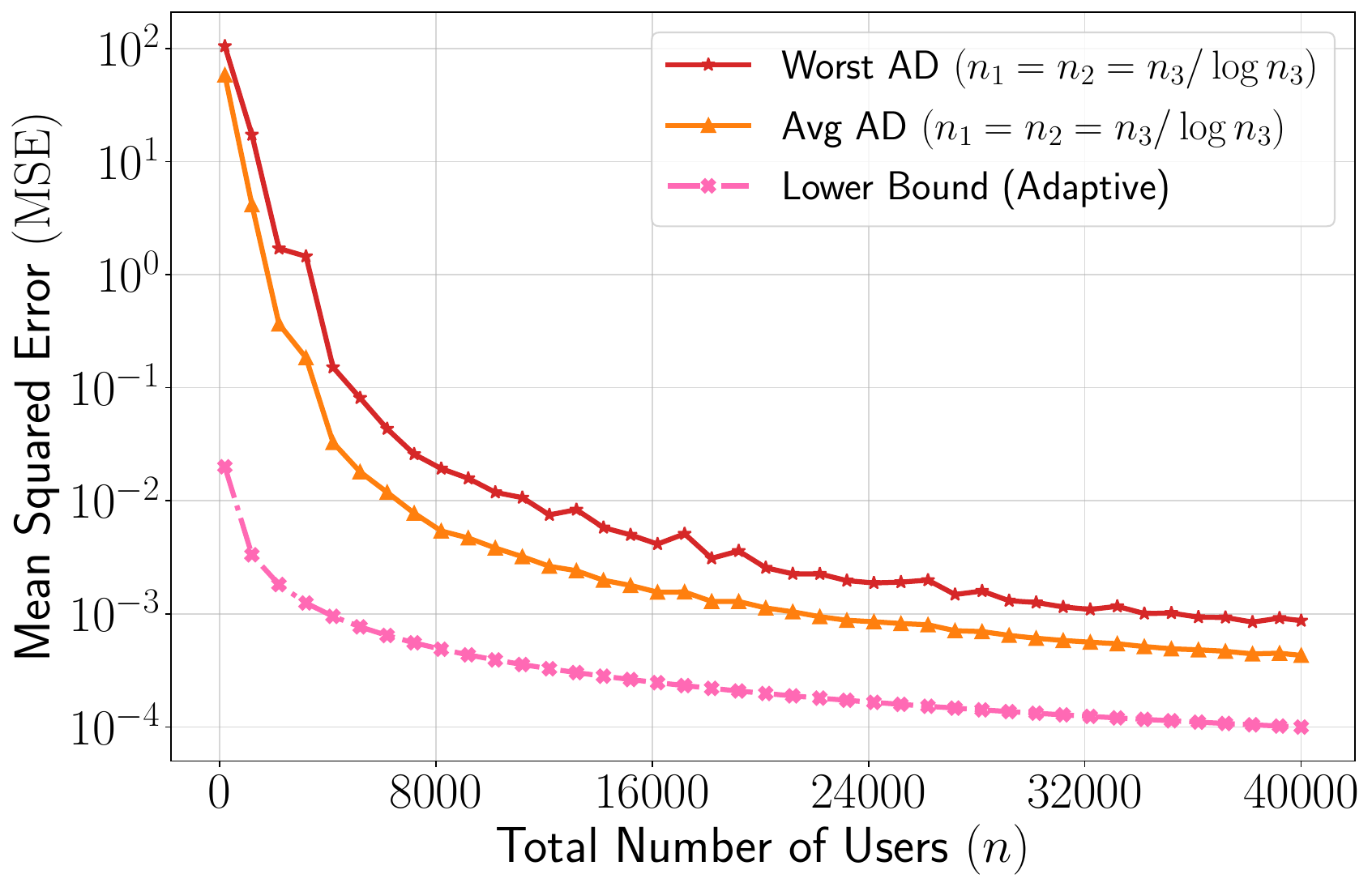}
        \caption{Hypsecant}
        \label{fig:ad-hypsecant_spl}
    \end{subfigure}
    \hfill
    \begin{subfigure}[t]{0.48\linewidth}
        \centering
        \includegraphics[width=\linewidth]{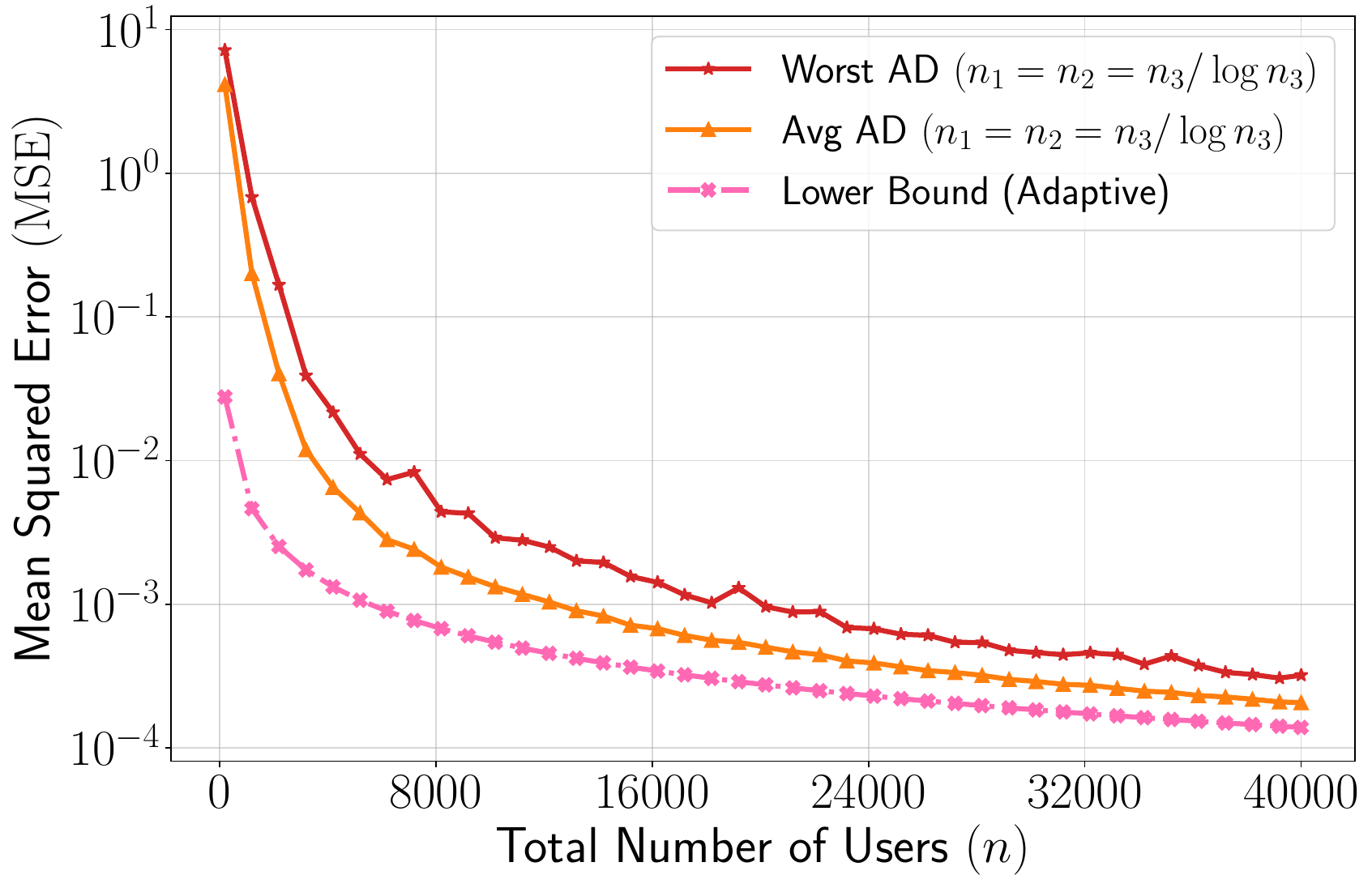}
        \caption{Sin2}
        \label{fig:ad-sin2_spl}
    \end{subfigure}

    \caption{Adaptive MSE across four symmetric, strictly log-concave distributions with the special split $n_1=n_2=n_3/\log n_3$ used in Theorem~\ref{thm:mse_adaptive_estimator}. Both worst-case and average MSE approach the benchmark $\tfrac{\sigma^2}{4 f_X(0)^2 n}$ asymptotically.}
    \label{fig:combined-4plots_adaptive_special}
\end{figure*}

\noindent
\begin{figure}[htb!]
\centering
\begin{minipage}[t]{0.48\textwidth}
  \centering
  \includegraphics[width=\linewidth]{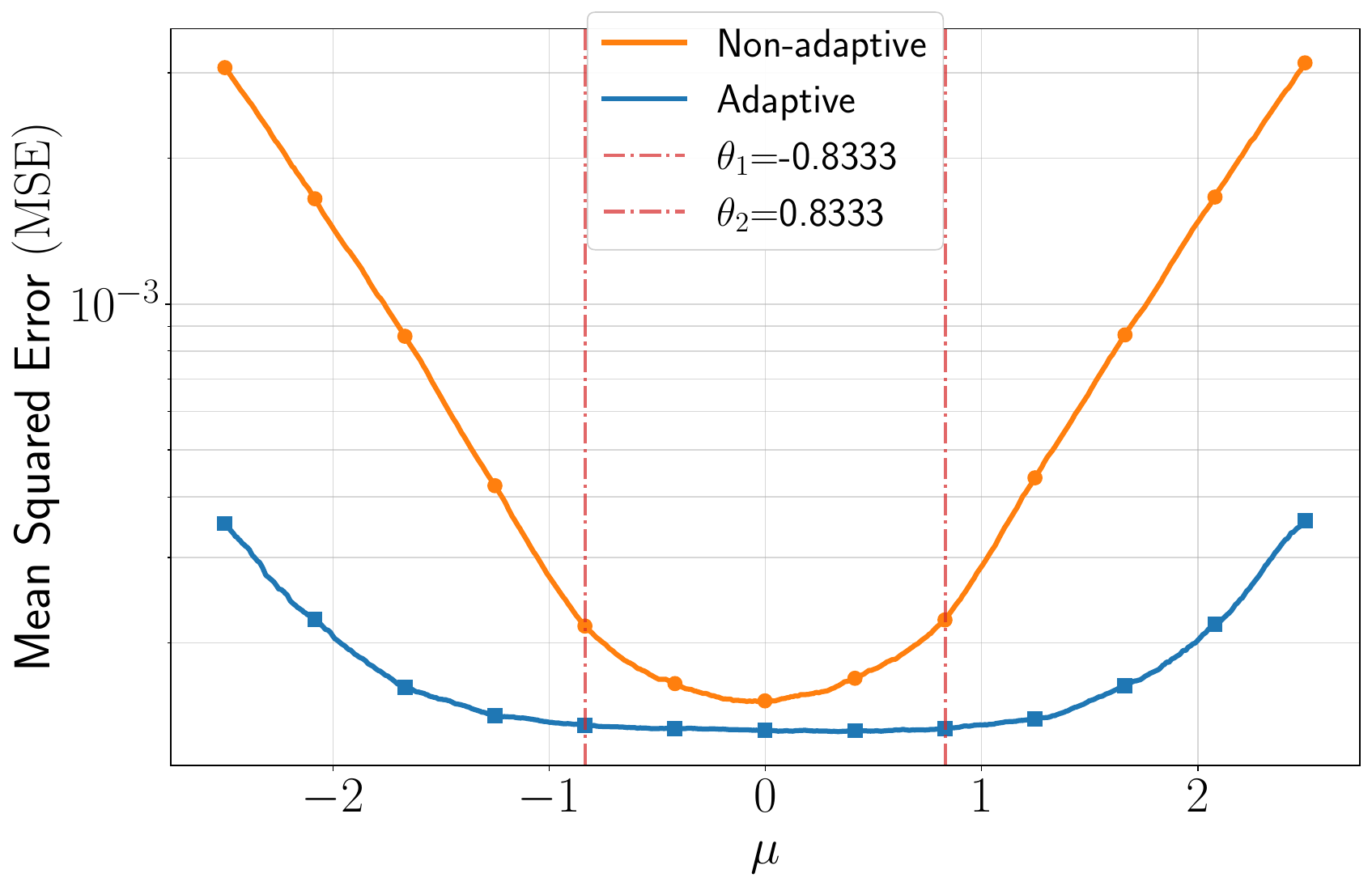}
  \caption{
  MSE versus the mean $\mu$ for the generalized Gaussian source with shape parameter $\beta=1.50$ and unit variance, using $n=40000$ users. Thresholds are fixed at $\theta_1=-0.8333$ and $\theta_2=0.8333$, and $\mu$ is varied over $[-2.5,\,2.5]$.}
  \label{fig:worst-mse-mu-beta1.5}
\end{minipage}\hfill
\begin{minipage}[t]{0.48\textwidth}
  \centering
  \includegraphics[width=\linewidth]{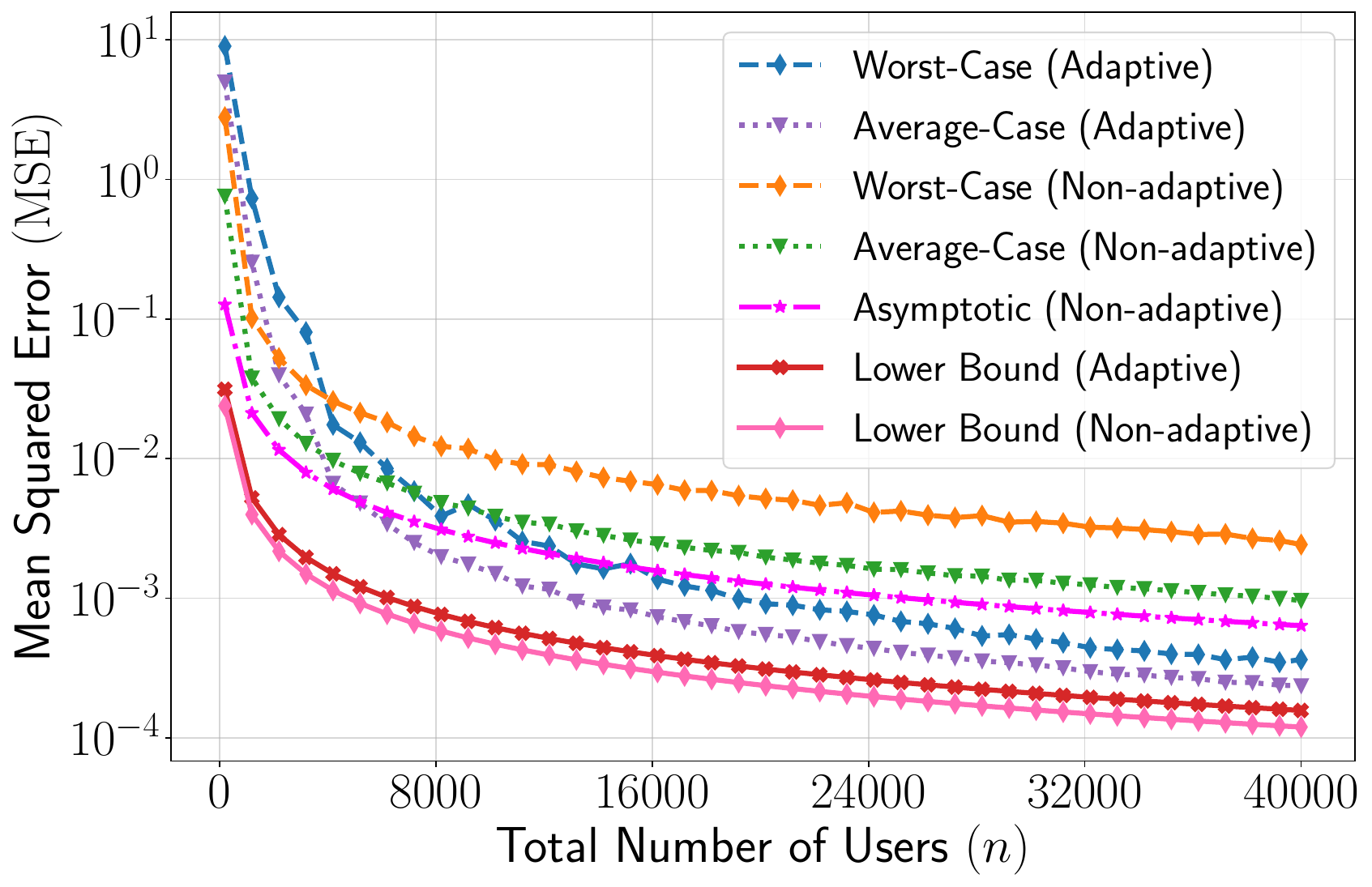}
\caption{
Worst-case and average MSE for the standard normal (GGD with $\beta = 2$) under one-bit protocols. The \emph{Asymptotic (Non-adaptive)} curve is the asymptotic $\mse$ predicted by Theorem~\ref{thm:mean_varinace_estimation}. Although the lower bound for non-adaptive protocols is lower than that for adaptive protocols, we conjecture that the lower bound is in fact loose and can be improved.}
  \label{fig:gaussian_plot}
\end{minipage}
\end{figure}

We validate our theoretical bounds by empirically evaluating the worst-case and average (over $\mu$) MSE for four representative symmetric, strictly log-concave source distributions: generalized Gaussian with $\beta = 1.5$, Logistic, Hyperbolic secant, and the custom density in~\eqref{eq: custom_dist_definition}. Each experiment averages over $N_{\mathrm{trials}} = 2000$ trials. The unknown mean $\mu$ is varied over the interval $[-2.5, 2.5]$ in increments of $0.5$, while the standard deviation is fixed at $\sigma = 2$.
\par Quantization thresholds \(\theta_1\) and \(\theta_2\) are precomputed from the known range \((\mu_{\min}, \mu_{\max})\) to ensure consistency across trials. Specifically, we adopt the equal-thirds rule:
\[
\theta_1 = \mu_{\min} + \frac{1}{3}(\mu_{\max} - \mu_{\min}), \qquad \theta_2 = \mu_{\min} + \frac{2}{3}(\mu_{\max} - \mu_{\min}).
\]

Among all possible choices of two thresholds, this rule minimizes the maximum distance from any $\mu \in [\mu_{\min}, \mu_{\max}]$ to its nearest threshold, as expressed by
\[
\min_{\theta_1,\theta_2} \max_{\mu \in [\mu_{\min}, \mu_{\max}]} \min_{i \in \{1,2\}} |\mu - \theta_i|.
\]
In addition, for symmetric strictly log-concave scale-location densities, placing thresholds symmetrically around the mid-range \((\mu_{\min} + \mu_{\max})/2\) prevents saturation of the empirical c.d.f. estimates \(F_1, F_2\) and keeps the Bernoulli variances \(F_j(1 - F_j)\) bounded away from zero. This yields stable estimation of both \(\mu\) and \(\sigma\), while remaining distribution-agnostic. Finally, precomputing thresholds from the range guarantees fairness across trials and enforces strict non-adaptivity, which is essential when contrasting against adaptive protocols.  Our implementation code is publicly available on \textcolor{blue}{GitHub} and can be accessed at~\cite{github-code}

The simulation curves in Fig.~\ref{fig:wc-4grid} demonstrate the superiority of adaptive protocols over non-adaptive protocols. 
Although the lower bounds that we derive are asymptotic and are not actually valid for small $n$, we nevertheless plot these and can be interpreted as  the first-order approximations of the lower bounds. 
We also observe that in the case of the generalized Gaussian, the plot for empirical worst-case MSE intersects the lower bound for the asymptotic MSE for non-adaptive estimators. It should be noted that the lower bound for non-adaptive estimators could potentially be loose, which illustrates that our simple adaptive estimator beats the best non-adaptive estimator even for finite $n$.

\par In the non-adaptive protocol, the total number of users $n$ is split into two disjoint groups $n_1 = K_1 n$ and $n_2 = K_2 n$ with $K_1+K_2=1$. 
Since the protocol remains consistent for any fixed $(K_1,K_2)$, we explored a range of allocations to study the effect of the split on finite-sample performance. 
In particular, we considered $(K_1,K_2) \in \{(0.10,0.90), (0.20,0.80), (0.30,0.70), (0.40,0.60), (0.50,0.50)\}$. 
The resulting curves show that different splits slightly shift the MSE at moderate sample sizes, but the overall asymptotic behavior is unchanged, and in all cases the non-adaptive MSE remains strictly above the adaptive benchmark. 
This confirms that the suboptimality of non-adaptive protocols is inherent and not due to a particular choice of partition.

In the adaptive protocol, users are divided into three groups of sizes $n_1, n_2, n_3$ with $n_1+n_2+n_3=n$. 
The first two groups provide coarse estimates $(\hat{\mu}_c,\hat{\sigma}_c)$, which are then refined using the remaining $n_3$ users. 
To check robustness, we considered small fixed first-stage allocations with $(K_1,K_2) \in \{(0.05,0.05), (0.10,0.10), (0.15,0.15)\}$, showing that even a small initial budget is sufficient for effective refinement. 
We also evaluated the theoretically motivated split $n_1=n_2=n_3/\log n_3$ from Theorem~\ref{thm:mse_adaptive_estimator}, which achieves the asymptotic MSE. 
Together, these experiments suggest that the adaptive scheme is both practical (performing well with small first-stage budgets) and information-theoretically optimal.

We also compute asymptotic constants $C_{non}$ and $C_{adapt}$ for the four families, and are tabulated in Table~\ref{tab:const_gap}.

Finally, Figs.~\ref{fig:nonadaptive-4plots} and~\ref{fig:combined-4plots_adaptive} analyze the effect of varying $(n_1,n_2)$. The eight combined plots show  worst-case  MSE on a log scale across all four distributions. In the non-adaptive case, we fix $(\theta_1,\theta_2)$ as described earlier but vary $K_1=n_1/n$ and $K_2=n_2/n$.

Fig.~\ref{fig:combined-4plots_adaptive_special} illustrates the performance of the allocation strategy from Theorem~\ref{thm:mse_adaptive_estimator}, where $n_1=n_2=n_3/\log n_3$. Here, both worst-case and average MSE curves gradually approach the lower bound. This confirms that the theorem-based allocation achieves optimality, and that even with alternative user splits, the asymptotic behavior of the adaptive scheme remains essentially unchanged.

\par For the generalized Gaussian source with $\beta=1.50$ and unit variance, Fig.~\ref{fig:worst-mse-mu-beta1.5} illustrates  the worst-case  MSE as $\mu \in [-2.5,2.5]$ with $n=40000$. Thresholds are fixed once using the equal-thirds rule ($\theta_1=-0.8333$, $\theta_2=0.8333$).  The MSE achieved by the adaptive protocol is lower the non-adaptive lower bound  for large enough $n$, providing a clear empirical validation of our results.

We also examine the standard normal, which corresponds to the special case of the generalized Gaussian distribution with $\beta = 2$. As shown in Fig.~\ref{fig:gaussian_plot}, using the same experimental setup as Fig.~\ref{fig:wc-4grid}, the non-adaptive lower bound is observed to lie below the adaptive bound. However, we conjecture that the lower bound for non-adaptive protocols can be improved.

\begin{remark}
These experiments also show how the asymptotic behavior appears in the finite-user setting. Our results identify the leading term
\[
\mathrm{MSE}(n)=\frac{C}{n}+o(1/n),
\]
with $C$ determined by the choice of protocol and the underlying source distribution. Although this expression does not capture higher-order corrections, the simulations indicate that the $C/n$ approximation remains accurate for moderately large values of $n$.  
Across all distributions considered, the empirical curves follow the  $1/n$ behavior, indicating that the asymptotic constants closely describe the observed finite-sample performance.
\end{remark}

\noindent

\section{Conclusion and Future Work}
In this work, we derived the asymptotic mean squared error (MSE) for estimating the location parameter of distributions from the scale-location family under one-bit communication constraints, with a primary focus on the case where the variance is unknown. Our main contribution is the design and analysis of both non-adaptive and adaptive protocols for mean estimation. As a byproduct, we also proposed a simple non-adaptive protocol for the standard deviation and derived its asymptotic MSE. We note, however, that our results for variance estimation apply only to the non-adaptive setting, and we do not claim optimality for this estimator. We also derived a non-parametric 1-bit estimator for the mean, but observed that the MSE scales only as $\omega(1/n)$, which is worse than our simple non-adaptive estimator.

\par For a broad class of symmetric strictly log-concave distributions, we proved that no one-bit adaptive protocol can achieve an asymptotic MSE smaller than presented in Theorem~\ref{thm:lowerbound_adaptive}, and we designed an adaptive scheme that attains this bound, thereby establishing its information-theoretic optimality. In contrast, we showed that one-bit non-adaptive protocols necessarily incur a strictly larger MSE for many such distributions, and we quantified the exact performance gap between optimal non-adaptive and adaptive schemes in terms of a distribution-dependent constant $T(f_X)$. 
\par 
Our numerical comparison with the unquantized Fisher-information bound shows that the adaptive one-bit scheme already operates close to the best performance achievable without communication constraints. This suggests that, within the strictly log-concave family, the improvement in $\mse$ obtained by transmitting more than one bit per sample is limited.

\par We believe that the lower bound for non-adaptive protocols is not tight, and can be improved.
We conjecture that for all symmetric log-concave distributions $f_X(x)=e^{-\phi(x)}$ where $\phi$ is upper bounded by some fixed polynomial, the asymptotic minimax for non-adaptive estimators is strictly higher than that achieved by our adaptive estimator. 

\par While our analysis addresses the scalar mean estimation problem, several extensions remain open. In particular, extending the lower bound techniques to high-dimensional vectors, exploring protocols under multi-bit communication budgets (e.g., $b$ bits per sample), and developing adaptive strategies for optimal variance (or scale) estimation are promising directions for future work. Our techniques could potentially also be used to derive upper and lower bounds for estimation of higher-order moments or more general exponential families. Even in the context of mean estimation, deriving tight bounds on the minimax MSE for general log-concave distributions that hold for finite $n$ remains an open question. 

\bibliography{New}
\bibliographystyle{tmlr}

\appendix

\section{Proof of Theorem \ref{thm:mean_varinace_estimation}}\label{sec:prf_nonadaptive_estimator}
We use the delta method~\citep[Theorem 5.5.24]{casella2002statistical} and standard limit theorems to prove this result. Using the central limit theorem~\cite[Theorem 5.5.14]{casella2002statistical}, we can write:
\begin{equation}
\begin{aligned}
    \sqrt{n_1} \left( F_1 - F_X \left(\frac{\theta_1 - \mu}{\sigma}\right)\right) \xrightarrow{d} \cN \left(0, \sigma_{1}^{2}  \right), \\
  \sqrt{n_2} \left(F_2 - F_X\left(\frac{\theta_2 - \mu}{\sigma}\right)\right) \xrightarrow{d} \cN \left(0, \sigma_{2}^{2}  \right).    \notag
\end{aligned}
\end{equation}
Using the delta method~\cite[Theorem 5.5.24]{casella2002statistical}, we can write:

\begin{equation}\label{eq:alphaandalpha2converse1}
\begin{aligned}
    \sqrt{n_1} \left( \alpha_1 - \left(\frac{\theta_1 - \mu}{\sigma}\right)\right) & \xrightarrow{d} \left( \frac{d F_{X}^{-1}(z)}{dz} \Biggm|_{z = F_{X}\left (\frac{\theta_1 - \mu}{\sigma} \right)} \right)  \omega_1,  \\ 
    \sqrt{n_2} \left( \alpha_1 - \left(\frac{\theta_2 - \mu}{\sigma} \right)\right) & \xrightarrow{d} \left(  \frac{d F_{X}^{-1}(z)}{dz} \Biggm|_{z = F_{X} 
\left( \frac{\theta_2 - \mu}{\sigma}\right) } \right) \omega_2, 
\end{aligned}
\end{equation}

where~\( \omega_i \sim \cN(0, \sigma_{i}^{2}) \) for \(i =1,2 \).    After solving~\eqref{eq:alphaandalpha2converse1}  we get:

\begin{equation}\label{eq:almostsulryrsult1} 
\begin{aligned}
    \sqrt{n_1} \left( \alpha_1 - \left(\frac{\theta_1 - \mu}{\sigma}\right)\right) \xrightarrow{d}  \cN \left( 0, \frac{ \sigma_1^2}{f_{X}^{2}(\frac{\theta_1 - \mu}{\sigma})} \right), \\
       \sqrt{n_2} \left( \alpha_1 - \left(\frac{\theta_2 - \mu}{\sigma}\right)\right) \xrightarrow{d} \cN \left(0, \frac{ \sigma_2^2}{f_{X}^{2}(\frac{\theta_2 - \mu}{\sigma})} \right). 
\end{aligned}
\end{equation}

Let us define a function $\xi$ as:
\begin{align}\label{eq:xi_def}
    \xi(\alpha_1, \alpha_2) \triangleq \frac{\alpha_1\theta_2 - \alpha_2 \theta_1}{\alpha_1 - \alpha_2}
    \end{align}

Using the delta method~\cite[Theorem 5.5.24]{casella2002statistical}, we can write:

 \begin{equation}
 \begin{multlined}
\sqrt{n} \left( \xi(\alpha_1, \alpha_2) - \xi \left( \frac{\theta_1 - \mu}{\sigma}, \frac{\theta_2 - \mu}{\sigma} \right) \right) \xrightarrow{d} \cN\left(0, \nabla \xi^T \Sigma \nabla \xi \right) \notag
\end{multlined}
\end{equation}

where $ n=n_1 + n_2.$ Now we have
 \begin{align}
 \nabla \xi \Bigg |_{ \stackunder{$\alpha_1 = \frac{\theta_1 - \mu}{\sigma},$}{$\alpha_2 = \frac{\theta_2 - \mu}{\sigma}$}} =\left( \frac{\sigma (\theta_2 - \mu)}{\theta_1 - \theta_2}, \frac{\sigma (\theta_1 - \mu)}{\theta_1 - \theta_2} \right)^T, \notag
\end{align}
and
\begin{align*}
\Sigma = \begin{pmatrix}
 \frac{K_1\sigma_1^2}{ f_{x}^{2}(\frac{\theta_1 - \mu}{\sigma})} & 0 \\
0 &\frac{K_2\sigma_2^2}{f_{x}^{2}(\frac{\theta_2 - \mu}{\sigma})} 
\end{pmatrix}.
\end{align*}
 Recall that \( n_1 = K_1 n \) and \( n_2 = K_2 n \), where \( K_1 \) and \( K_2 \) are constants independent of \( n \), and \( K_1 + K_2  =1 \).
Now
\begin{align}
 \nabla \xi^T \Sigma \nabla \xi  = \frac{\sigma^2}{(\theta_1 - \theta_2)^2} \left[ (\theta_2 - \mu)^2 \frac{K_1\sigma_1^2}{f_{x}^{2}(\frac{\theta_1 - \mu}{\sigma})}2 + (\theta_1 - \mu)^2 \frac{K_2\sigma_2^2}{f_{x}^{2}(\frac{\theta_2 - \mu}{\sigma})} \right]. \notag
\end{align}

Since we have
\begin{align*}
  \xi \left( \frac{\theta_1 - \mu}{\sigma}, \frac{\theta_2 - \mu}{\sigma} \right) = \mu,
\end{align*}
we can write
\begin{equation}
 \begin{multlined}
\sqrt{n} (\mu - \hat{\mu}_c)  \xrightarrow{d}  \cN \left( 0, \frac{\sigma^2}{(\theta_1 - \theta_2)^2}\left[ (\theta_2 - \mu)^2 \frac{K_1\sigma_1^2}{f_{x}^{2}(\frac{\theta_1 - \mu}{\sigma})}2 + (\theta_1 - \mu)^2 \frac{K_2 \sigma_2^2}{ f_{x}^{2}(\frac{\theta_2 - \mu}{\sigma})} \right] \right)  \notag
 \end{multlined}
\end{equation}
Therefore, scheme will attain an asymptotic Mean Squared Error (MSE) of:
\begin{equation}
\begin{multlined}
\lim_{n \rightarrow \infty}n \cdot \mse(\hat{\mu}_c) =  \frac{\sigma^2}{(\theta_1 - \theta_2)^2} \left[ (\theta_2 - \mu)^2 \cdot \frac{ K_1\sigma_1^2}{f_{x}^{2}(\frac{\theta_1 - \mu}{\sigma})} \right.
\left. + (\theta_1 - \mu)^2 \cdot \frac{K_2 \sigma_2^2}{f_{x}^{2}(\frac{\theta_2 - \mu}{\sigma})} \right] \notag
\end{multlined}
\end{equation}
This concludes the proof of the first part. 

 We now prove the second part. 
Using the continuous mapping theorem \citep{casella2002statistical} in \eqref{eq:almostsulryrsult1}, we can write
\begin{equation}
\begin{multlined}
  \sqrt{n} \left( ( \alpha_1 -\alpha_2)  -  \left(\frac{\theta_1 - \theta_2} {\sigma} \right) \right) 
  \xrightarrow{d} \cN \left(0,    \frac{ K_1\sigma_1^2}{f_{X}^{2}(\frac{\theta_1 - \mu}{\sigma})}+\frac{K_2 \sigma_2^2}{f_{X}^{2}(\frac{\theta_2 - \mu}{\sigma})} \right) . \label{Almostsurlyresult2}
\end{multlined}
\end{equation}

Now we define a function 
\begin{align*}
\psi(x) \triangleq \frac{\theta_1 - \theta_2}{x}.  
\end{align*}
Then, its derivative
\[
\psi'\left( {\frac{\theta_1 - \theta_2}{\sigma}} \right) = \frac{-\sigma^2}{\theta_1 - \theta_2}.
\]

Applying the delta method \citep{casella2002statistical}, we can write

\begin{equation}
\begin{multlined}
  \sqrt{n} \left( \psi( \alpha_1 -\alpha_2)  - \psi \left(\frac{\theta_1 - \theta_2} {\sigma} \right) \right)  \xrightarrow{d}  \cN \left( 0,    \left( \frac{\sigma^4}{(\theta_1 - \theta_2)^2} \right). \left( \frac{ K_1\sigma_1^2}{f_{X}^{2}(\frac{\theta_1 - \mu}{\sigma})} + \frac{K_2 \sigma_2^2}{f_{X}^{2}(\frac{\theta_2 - \mu}{\sigma})} \right) \right). \label{alphaonealpha2result3} 
\end{multlined}
\end{equation}

This  gives us  

\begin{equation}
\begin{multlined}
   \sqrt{n} (\sigma - \hat{\sigma}_c ) \xrightarrow{d} 
   \cN \left( 0,    \left( \frac{\sigma^4}{(\theta_1 - \theta_2)^2} \right)\left( \frac{ K_1\sigma_1^2}{f_{X}^{2}(\frac{\theta_1 - \mu}{\sigma})} + \frac{K_2 \sigma_2^2}{f_{X}^{2}(\frac{\theta_2 - \mu}{\sigma})} \right) \right). \notag
\end{multlined}
\end{equation}

Therefore
\begin{equation}
\begin{multlined}
 \lim_{n \rightarrow \infty}  n \cdot \mse(\hat{\sigma}_c) =
 \left( \frac{\sigma^4}{(\theta_1 - \theta_2)^2} \right)\left( \frac{ K_1\sigma_1^2}{f_{X}^{2}(\frac{\theta_1 - \mu}{\sigma})} + \frac{K_2 \sigma_2^2}{f_{X}^{2}(\frac{\theta_2 - \mu}{\sigma})} \right). \notag
\end{multlined}
\end{equation}
This concludes the proof of the second part of the theorem.
\qed 

\section{Upper Bound on Squared Hellinger Distance}
\begin{theorem}\label{thm:final_hellinger_bound}
    Let $f_X(x)=e^{-\phi(x)}$ be a symmetric log-concave distribution, with $\phi$ being strictly convex, differentiable, and upper bounded by a polynomial\footnote{The exact value of the degree is not particularly important here. We only require the degree to be finite.}. Let $f_{X,\mu_+,\sigma}$ and $f_{X,\mu_-,\sigma}$ be the distributions corresponding to the two hypotheses, where $\mu_+,\mu_-$ are defined in~\eqref{eq:muplus_muminus}. 
    
    Let 
    $\pi_i$ be any randomized function from $\bR$ to $\{0,1\}$ that is chosen independently of $X_i$, and $P_{Y_i}^+,P_{Y_i}^-$ be the distributions corresponding to $Y_i=\pi_i(X_i)$ when $X_i$ is drawn according to $f_{X,\mu_+,\sigma}$ and $f_{X,\mu_-,\sigma}$ respectively.
    
    Then,
    \[
    H^2(P_{Y_i}^+, P_{Y_i}^-) \leq \epsilon^2 \left[ \int_{0}^{h^*} \phi' \left(h^{-1}(t)\right) h^{-1}(t) \, \dx t + o(1) \right],
    \]
    where $o(1)\to 0$ as $\epsilon\to 0$. Here, $h(x)\triangleq 2\phi'(x)f_X(x)$ and $h^*=\max_{x\geq 0} h(x)$.
\end{theorem}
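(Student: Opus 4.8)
The plan is to exploit the fact that $Y_i=\pi_i(X_i)$ is a single bit, so each of $P_{Y_i}^+$ and $P_{Y_i}^-$ is a Bernoulli law determined by a single number. Writing $q(x)=\Pr[\pi_i(x)=1]\in[0,1]$ for the (possibly randomized) quantizer and substituting $u=(x-\mu)/\sigma$, $\tilde q(u):=q(\mu+\sigma u)$, the two success probabilities become $p^{+}=\int \tilde q(u)\,f_X(u-\epsilon)\,du$ and $p^{-}=\int \tilde q(u)\,f_X(u+\epsilon)\,du$. Using the Bhattacharyya form $H^2=1-\sqrt{p^+p^-}-\sqrt{(1-p^+)(1-p^-)}$, the whole problem collapses to analysing a function of the single pair $(p^+,p^-)$, with the base density entering only through these two integrals.

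Next I would Taylor-expand in $\epsilon$. Symmetry of $f_X$ makes $p^+$ and $p^-$ symmetric about the common value $p_0=\int \tilde q\, f_X$, and to second order $p^{\pm}=p_0\pm\tfrac{\delta}{2}+O(\epsilon^2)$ with $\delta:=\epsilon\int \tilde q(u)\,h(u)\,du$, where $h=2\phi'f_X$ appears because $f_X'=-\phi'f_X$. Substituting into the Bhattacharyya expression and expanding the square roots yields the leading behaviour $H^2=\dfrac{\delta^2}{8\,p_0(1-p_0)}+o(\epsilon^2)=\epsilon^2\,\dfrac{\bigl(\int\tilde q\, h\bigr)^2}{8\,p_0(1-p_0)}+o(\epsilon^2)$. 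The assumption that $\phi$ is polynomially bounded is exactly what lets me differentiate under the integral sign and show, via dominated convergence, that the remainder is genuinely $o(\epsilon^2)$ uniformly, so the $o(1)$ in the statement is legitimate.

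The heart of the proof is then a variational problem: maximise the leading coefficient $\bigl(\int \tilde q\, h\bigr)^2\big/\bigl(8 p_0(1-p_0)\bigr)$ over all profiles $\tilde q:\mathbb R\to[0,1]$. For a fixed target mass $p_0$ the numerator is linear in $\tilde q$, so a Neyman--Pearson / bang--bang argument shows the optimiser is an indicator that switches on precisely where the likelihood ratio $h(u)/f_X(u)=2\phi'(u)$ exceeds a threshold; strict convexity of $\phi$ forces $\phi'$ to be strictly increasing, so this region is a one-sided half-line and the optimal quantizers reduce to threshold rules $\tilde q=\mathbf 1\{u>c\}$. It then remains to optimise over $c$ and to recast the resulting one-parameter maximisation as the claimed integral; the natural device is the change of variables $t=h(x)$ together with the level-set (coarea) structure of $h$, which turns integrals of $\tilde q$ against $h$ into integrals over the range $[0,h^*]$ and produces the integrand $\phi'\bigl(h^{-1}(t)\bigr)\,h^{-1}(t)$, i.e.\ exactly $T(f_X)$.

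The step I expect to be the main obstacle is this last recasting, because $h(x)=2\phi'(x)f_X(x)$ is \emph{not} monotone on $[0,\infty)$: it vanishes at $0$, rises to $h^*$ at some $x^\star$, and decays back to $0$ as $x\to\infty$. Hence $h^{-1}$ must be interpreted as a specific branch on $[0,h^*]$, and each super-level set $\{h>t\}$ is an interval whose two endpoints trace the two branches; keeping track of which branch actually contributes, and checking that the optimal threshold rule selects the correct one, is the delicate bookkeeping at the core of the argument. A secondary technical point is to make the $\epsilon\to 0$ expansion uniform across the whole class of quantizers, so that the single $o(1)$ is valid simultaneously for every admissible $\pi_i$; here I would again invoke the polynomial growth bound on $\phi$ to dominate the tails and justify the interchange of limit and integral.
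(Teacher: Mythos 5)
Your reduction to Bernoulli laws and the second-order expansion $H^2=\epsilon^2\bigl(\int\tilde q\,h\bigr)^2/\bigl(8p_0(1-p_0)\bigr)+o(\epsilon^2)$ is correct for any fixed quantizer, and your bang--bang argument does correctly identify half-line profiles as extremal. The genuine gap is the step you yourself call the heart of the proof: the resulting optimum is $\max_{c}\,f_X^2(c)/\bigl(2F_X(c)F_X(-c)\bigr)$, and this quantity is \emph{not} equal to $T(f_X)$ --- no change of variables or level-set bookkeeping can recast one as the other, because they are numerically different. For the standard Gaussian your optimum is $1/\pi\approx 0.318$ (take $\tilde q=\mathbf{1}\{u<0\}$: the exact Bernoulli Hellinger distance at $\epsilon=0.01$ is $3.18\times 10^{-5}=2f_X^2(0)\epsilon^2$), whereas $T(f_X)=\int_0^{h^*}\phi'\bigl(h^{-1}(t)\bigr)h^{-1}(t)\,\dx t$, with $h^{-1}$ the branch forced by the paper's construction ($x_a(\epsilon)=\sup A_{\epsilon}(a)$, i.e.\ the \emph{outer}, decreasing branch of $h$), evaluates to $2\int_1^\infty (x^4-x^2)\varphi(x)\,\dx x\approx 2.09$. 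Your flagged worry about which branch of the non-monotone $h$ contributes is thus real, but it resolves opposite to what your plan needs: the same mid-threshold example shows $T$ cannot be read on the inner branch (it would give $\approx 0.086$ for the Gaussian, already violated by $0.318\epsilon^2$), and on the outer branch your optimized coefficient is strictly smaller than $T(f_X)$, so your route would prove an inequality with a \emph{different} constant. To conclude the stated theorem from your expansion you would need the separate comparison $\max_c f_X^2(c)/\bigl(2F_X(c)F_X(-c)\bigr)\le T(f_X)$, which you neither state nor prove, and which your proposal mistakenly replaces by a claimed equality.

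The paper's proof is structurally different and sidesteps the variational problem entirely. Following \citep{cai2022adaptive}, it decomposes the shifted densities as mixtures, $f_X(x\pm\epsilon)=\int_0^{a_\epsilon^*}g_{\mp}(x,a)\,\dx a$, bounds the Hellinger distance slice-by-slice via Lemma~\ref{lem:hellinger_bound_general} by $\frac{\sqrt{R_a}-1}{\sqrt{R_a}+1}\,x_a(\epsilon)$, where the quantizer is discarded through the crude bound $\mathrm{TV}\le\frac12|A_\epsilon(a)|\le x_a(\epsilon)$ --- which is exactly why the paper's $o(1)$ is automatically uniform over all $\pi_i$, including $\epsilon$-dependent ones --- and then splits the $a$-integral at $a=\epsilon^{2.5}$, combining Lemma~\ref{lem:upper_bound_ratio} with the substitution $t=a/(2\epsilon)$ on the main region. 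The polynomial upper bound on $\phi$ is used there to show the small-$a$ region contributes $o(\epsilon^2)$ (via $x_a(\epsilon)\lesssim\log(1/a)$), not for differentiation under the integral as in your plan. Finally, the uniformity you defer as a ``secondary technical point'' is essential, not secondary: in the application (Theorem~\ref{thm:lowerbound_nonadaptive}) one takes $\epsilon\asymp n^{-1/2}$ with arbitrary per-user quantizers, and your remainder depends on $\tilde q$ through $p_0$, so sequences with $p_0\to 0$ or $p_0\to 1$ require a separate argument that your sketch does not supply.
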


To prove this theorem, we first introduce auxiliary functions that will be used in the analysis.  
Our construction partly follows~\citep{cai2022adaptive}.  

We define the maximum pointwise density difference
\begin{equation}
a_{\epsilon}^{*} \triangleq \sup_{x \in \bR} \left| f_X(x + \epsilon) - f_X(x - \epsilon) \right|.
\label{eq:definition_a*}
\end{equation}
For \(0 \leq a < a_{\epsilon}^*\), let
\begin{equation}
A_{\epsilon}(a) \triangleq \left\{ x \in \bR : \left| f_X(x + \epsilon) - f_X(x - \epsilon) \right| \geq a \right\}, 
\label{eq:defintion_set_A(a)}
\end{equation}
and
\begin{equation}
    x_a(\epsilon) \triangleq \sup A_{\epsilon}(a).\label{eq:defn_xa}
\end{equation}

Let
\begin{align}
g_+(x, a) &= \begin{cases} 
\frac{f_X(\epsilon)}{a_{\epsilon}^*}, & x = 0, \\
\frac{f_X(x + \epsilon)}{|f_X(x + \epsilon) - f_X(x - \epsilon)|}, & x \in  A_{\epsilon}(a), \\
0, & \text{otherwise},
\end{cases}  \notag \\
g_-(x, a) &= \begin{cases} 
\frac{f_X(\epsilon)}{a_{\epsilon}^*}, & x = 0, \\
\frac{f_X(x - \epsilon)}{|f_X(x + \epsilon) - f_X(x - \epsilon)|}, & x \in A_{\epsilon}(a), \\
0, & \text{otherwise}.
\end{cases}
\label{eq:definition_g1g2}
\end{align}

A direct calculation shows
\[
\int_0^{a_{\epsilon}^*} g_+(x, a) \,\dx a = f_X(x+\epsilon), 
\quad
\int_0^{a_{\epsilon}^*} g_-(x, a) \,\dx a = f_X(x-\epsilon).
\]
Thus, 
\[
\frac{1}{\sigma} f_X \left( \frac{x - \mu_-}{\sigma} \right) 
= \int_0^{a_{\epsilon}^*} \frac{1}{\sigma} g_+ \left( \frac{x - \mu}{\sigma}, a \right) \,\dx a,
\]
and
\[
\frac{1}{\sigma} f_X \left( \frac{x - \mu_+}{\sigma} \right) 
= \int_0^{a_{\epsilon}^*} \frac{1}{\sigma} g_- \left( \frac{x - \mu}{\sigma}, a \right) \,\dx a.
\]

Define the likelihood ratio
\begin{equation}
R_a \triangleq \sup_{x \in A_{\epsilon}(a)} \frac{f_X(x + \epsilon)}{f_X(x - \epsilon)}.
\label{eq:definition_Ma}
\end{equation}
The next results show how \(R_a\) can be used to bound the squared Hellinger distance.

\begin{lemma}
\label{lem:hellinger_bound_general}
Let $\pi_i:\mathbb{R}\to\{0,1\}$ be any (possibly randomized) non-adaptive encoder, and let $g_+(\cdot,a),g_-(\cdot,a)$ be as in~\eqref{eq:definition_g1g2}.  
Then, for $a \in [0,a_{\epsilon}^{*})$ and any $\mu\in\mathbb{R},\ \sigma>0$, the induced distributions $P_{Y_i}^+$ and $P_{Y_i}^-$ satisfy
\begin{equation}
H^2(P_{Y_i}^+; P_{Y_i}^-) 
\leq \int_0^{a_{\epsilon}^{*}} \frac{\sqrt{R_a} - 1}{\sqrt{R_a} + 1} \, x_a(\epsilon) \ \dx a. \label{eq:H_temp_bound}
\end{equation}
\end{lemma}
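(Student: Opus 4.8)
The plan is to exploit the two mixture representations established above, namely $f_{X,\mu_-,\sigma}(\cdot)=\int_0^{a_\epsilon^*}\tfrac1\sigma g_+(\tfrac{\cdot-\mu}{\sigma},a)\,\dx a$ and $f_{X,\mu_+,\sigma}(\cdot)=\int_0^{a_\epsilon^*}\tfrac1\sigma g_-(\tfrac{\cdot-\mu}{\sigma},a)\,\dx a$, to reduce the Hellinger bound to a per-level-set computation on the sets $A_\epsilon(a)$, where the likelihood ratio is controlled by $R_a$. Writing $q(\cdot)\in[0,1]$ for the probability that $\pi_i$ emits $1$, I set $\beta_a\triangleq\int q\,\tfrac1\sigma g_+(\tfrac{\cdot-\mu}{\sigma},a)$, $\gamma_a\triangleq\int q\,\tfrac1\sigma g_-(\tfrac{\cdot-\mu}{\sigma},a)$, and $m(a)\triangleq\int_{A_\epsilon(a)}g_+(u,a)\,\dx u$. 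Since $A_\epsilon(a)$ is symmetric and $g_+-g_-=\sign\!\big(f_X(u+\epsilon)-f_X(u-\epsilon)\big)$ on it, an odd-symmetry argument gives $\int g_+=\int g_-=m(a)$; combined with the identity $\int_0^{a_\epsilon^*}g_+(u,a)\,\dx a=f_X(u+\epsilon)$ this yields $\int_0^{a_\epsilon^*}m(a)\,\dx a=1$, so $m(a)\,\dx a$ is a probability measure, and $P_{Y_i}^-(1)=\int\beta_a\,\dx a$, $P_{Y_i}^+(1)=\int\gamma_a\,\dx a$.

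First I would reduce to a per-$a$ bound. The Bhattacharyya affinity of two Bernoulli laws is $\sqrt{p^+p^-}+\sqrt{(1-p^+)(1-p^-)}$, so applying Cauchy--Schwarz to each square-root term (equivalently, joint convexity of $H^2$ against the mixing measure $m(a)\,\dx a$) gives $H^2(P_{Y_i}^+,P_{Y_i}^-)\le\int_0^{a_\epsilon^*}h(a)\,\dx a$, where $h(a)=m(a)-\sqrt{\gamma_a\beta_a}-\sqrt{(m(a)-\gamma_a)(m(a)-\beta_a)}$ is precisely $m(a)$ times the squared Hellinger distance between the Bernoulli laws with parameters $\bar\gamma_a=\gamma_a/m(a)$ and $\bar\beta_a=\beta_a/m(a)$.

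Next, these Bernoulli laws are the images of the normalized components $g_\mp(\cdot,a)/m(a)$ under the common binary channel $q$, so the data-processing inequality for the Hellinger distance removes the arbitrary quantizer: $h(a)\le m(a)\,H^2\!\big(g_-(\cdot,a)/m(a),\,g_+(\cdot,a)/m(a)\big)$, a quantity that no longer depends on $\pi_i$ and hence is uniform over all encoders. I would then evaluate it exactly. Writing $A=f_X(u+\epsilon)$ and $B=f_X(u-\epsilon)$ and using $\int g_+=\int g_-=m(a)$,
\[
m(a)\,H^2\!\Big(\tfrac{g_-(\cdot,a)}{m(a)},\tfrac{g_+(\cdot,a)}{m(a)}\Big)
=\int_{A_\epsilon(a)}\frac{(\sqrt{A}-\sqrt{B})^2}{2\,|A-B|}\,\dx u
=\int_{A_\epsilon(a)}\frac12\,\frac{|\sqrt{A}-\sqrt{B}|}{\sqrt{A}+\sqrt{B}}\,\dx u ,
\]
using $\tfrac{A+B}{2}-\sqrt{AB}=\tfrac12(\sqrt A-\sqrt B)^2$ and $|A-B|=|\sqrt A-\sqrt B|(\sqrt A+\sqrt B)$.

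Finally, the integrand equals $\tfrac12\frac{|\sqrt r-1|}{\sqrt r+1}$ with $r=A/B=f_X(u+\epsilon)/f_X(u-\epsilon)$. On $A_\epsilon(a)$ symmetry forces the ratios to occur in reciprocal pairs, so $r\in[1/R_a,R_a]$, and since $r\mapsto\frac{|\sqrt r-1|}{\sqrt r+1}$ is maximized at the endpoints with common value $\frac{\sqrt{R_a}-1}{\sqrt{R_a}+1}$, the integrand is at most $\tfrac12\frac{\sqrt{R_a}-1}{\sqrt{R_a}+1}$. Bounding $\mathrm{Leb}(A_\epsilon(a))\le 2x_a(\epsilon)$ (as $A_\epsilon(a)\subseteq[-x_a(\epsilon),x_a(\epsilon)]$ by symmetry) gives $h(a)\le\frac{\sqrt{R_a}-1}{\sqrt{R_a}+1}\,x_a(\epsilon)$, and integrating over $a$ yields exactly \eqref{eq:H_temp_bound}. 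I expect the main obstacle to be the bookkeeping that makes the reduction to level sets rigorous---in particular justifying $\int g_+=\int g_-$ and the Cauchy--Schwarz/convexity step with the non-normalized sub-densities---rather than the final calculation, which collapses cleanly once the $(\sqrt A-\sqrt B)^2$ identity is spotted.
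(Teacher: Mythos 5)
Your proof is correct and arrives at exactly the paper's bound via the same level-set decomposition, but it is self-contained where the paper outsources the key middle step. The paper's proof invokes \cite[Lemma 2]{cai2022adaptive} to do two things at once: pass from the mixtures $f_{X,\mu_\pm,\sigma}=\int_0^{a_\epsilon^*}\tfrac1\sigma g_\mp(\tfrac{\cdot-\mu}{\sigma},a)\,\dx a$ to a per-$a$ integral of (generalized) Hellinger distances, and bound each per-$a$ term by $\tfrac{\sqrt{R_a}-1}{\sqrt{R_a}+1}$ times a generalized total variation, which is then evaluated as $\tfrac12\int_{A_\epsilon(a)}|g_+-g_-|\,\dx u=\tfrac12|A_\epsilon(a)|\le x_a(\epsilon)$ using $|g_+-g_-|=1$ on $A_\epsilon(a)$. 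You instead (i) re-derive the mixture step from joint convexity of $H^2$ (Cauchy--Schwarz applied to each term of the Bhattacharyya affinity), after first verifying that $\int g_+(\cdot,a)=\int g_-(\cdot,a)=m(a)$ --- which indeed follows from the symmetry $g_+(-u,a)=g_-(u,a)$, a consequence of $f_X$ being even --- and that $\int_0^{a_\epsilon^*} m(a)\,\dx a=1$, so that $m(a)\,\dx a$ is a genuine mixing probability measure and the convexity step with non-normalized sub-densities is legitimate; and (ii) replace the Hellinger-versus-TV inequality by the data-processing inequality for $H^2$ (removing the arbitrary randomized quantizer $\pi_i$) followed by the exact identity $m(a)\,H^2\bigl(g_-/m(a),\,g_+/m(a)\bigr)=\tfrac12\int_{A_\epsilon(a)}\tfrac{|\sqrt r-1|}{\sqrt r+1}\,\dx u$ with $r=f_X(u+\epsilon)/f_X(u-\epsilon)$, bounding the integrand by its common endpoint value at $r\in\{R_a,1/R_a\}$; the reciprocal-pair structure you use is valid here because $f_X=e^{-\phi}$ with $\phi$ even and strictly convex is strictly decreasing in $|u|$, and it matches the endpoint characterization $R_a=e^{|\delta(x_a(\epsilon),\epsilon)|}$ that the paper proves separately in Lemma~\ref{lem:upper_bound_ratio}. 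What your route buys is independence from the external lemma and an exact per-component evaluation that makes visible where the slack lies (only in the endpoint ratio bound and in $|A_\epsilon(a)|\le 2x_a(\epsilon)$); what the paper's route buys is brevity. One cosmetic remark: the special values assigned to $g_\pm$ at $u=0$ in~\eqref{eq:definition_g1g2} live on a Lebesgue-null set and affect none of your integrals, so your bookkeeping concern is fully resolved by the normalization you set up.
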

The proof can be found in Appendix~\ref{prf:thm:lemmas_proofs_thm_Tx}. 

We split the integral above into two parts, and then 
 further upper bound this using the following lemma.
\begin{lemma}\label{lem:upper_bound_ratio}
Let \( f_X(x) = e^{-\phi(x)} \) where \( \phi  \) is even, differentiable, and strictly convex.  
For \( \epsilon > 0 \), 
\begin{align}
\frac{\sqrt{R_a} - 1}{\sqrt{R_a} + 1} \,
    \leq \frac{\delta(x_a(\epsilon),\epsilon)}{4}\leq \frac{\epsilon \, \phi'(x_a(\epsilon) + \epsilon)}{2},\label{eq:frac_Ma_bound}
\end{align}
where 
\[
\delta(x_a(\epsilon),\epsilon)\triangleq \phi(x+\epsilon)-\phi(x-\epsilon).
\]
\end{lemma}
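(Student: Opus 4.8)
The plan is to reduce the whole statement to the single quantity $\delta(x,\epsilon)=\phi(x+\epsilon)-\phi(x-\epsilon)$ and to exploit the monotonicity and symmetry that strict convexity and evenness of $\phi$ impose. First I would rewrite the pointwise likelihood ratio as
\[
\frac{f_X(x+\epsilon)}{f_X(x-\epsilon)}=e^{\phi(x-\epsilon)-\phi(x+\epsilon)}=e^{-\delta(x,\epsilon)} .
\]
Since $\phi$ is strictly convex, $\phi'$ is strictly increasing, so $\partial_x\delta(x,\epsilon)=\phi'(x+\epsilon)-\phi'(x-\epsilon)>0$; hence $\delta(\cdot,\epsilon)$ is strictly increasing and the ratio $e^{-\delta(x,\epsilon)}$ is strictly decreasing in $x$. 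Because $\phi$ is even, $\delta(\cdot,\epsilon)$ is odd in $x$ and the set $A_\epsilon(a)$ is symmetric about the origin, so that $\inf A_\epsilon(a)=-\sup A_\epsilon(a)=-x_a(\epsilon)$.

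Next I would identify $R_a$. Since the ratio is strictly decreasing in $x$, its supremum over the closed symmetric set $A_\epsilon(a)$ is attained at the leftmost point $-x_a(\epsilon)$; using evenness of $f_X$,
\[
R_a=\frac{f_X(-x_a(\epsilon)+\epsilon)}{f_X(-x_a(\epsilon)-\epsilon)}=\frac{f_X(x_a(\epsilon)-\epsilon)}{f_X(x_a(\epsilon)+\epsilon)}=e^{\delta(x_a(\epsilon),\epsilon)} .
\]
Then I would compute the Hellinger-affinity factor directly: writing $\sqrt{R_a}=e^{\delta(x_a(\epsilon),\epsilon)/2}$ and multiplying numerator and denominator by $e^{-\delta(x_a(\epsilon),\epsilon)/4}$ gives
\[
\frac{\sqrt{R_a}-1}{\sqrt{R_a}+1}=\tanh\!\Big(\tfrac{\delta(x_a(\epsilon),\epsilon)}{4}\Big).
\]
Because $x_a(\epsilon)\ge 0$ and $\delta(0,\epsilon)=0$ with $\delta$ increasing, the argument is nonnegative, so the elementary inequality $\tanh(u)\le u$ for $u\ge 0$ yields the first claimed bound $\tfrac{\sqrt{R_a}-1}{\sqrt{R_a}+1}\le \tfrac{\delta(x_a(\epsilon),\epsilon)}{4}$.

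For the second inequality I would invoke the tangent-line (supporting-line) bound for the convex function $\phi$ based at $x_a(\epsilon)+\epsilon$ and evaluated at $x_a(\epsilon)-\epsilon$:
\[
\phi(x_a(\epsilon)-\epsilon)\ \ge\ \phi(x_a(\epsilon)+\epsilon)-2\epsilon\,\phi'(x_a(\epsilon)+\epsilon),
\]
which rearranges to $\delta(x_a(\epsilon),\epsilon)\le 2\epsilon\,\phi'(x_a(\epsilon)+\epsilon)$, and dividing by $4$ completes the chain.

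The main obstacle I anticipate lies in the identification step: pinning down exactly where the supremum defining $R_a$ is attained. This requires simultaneously using that $A_\epsilon(a)$ is symmetric and closed, that $|f_X(\cdot+\epsilon)-f_X(\cdot-\epsilon)|$ is continuous so that the infimum $-x_a(\epsilon)$ genuinely belongs to $A_\epsilon(a)$ (the $\ge a$ condition is closed), and that the ratio is monotone so the supremum sits at the edge of the set rather than at an interior point where $A_\epsilon(a)$ may have a gap around the origin. Once $R_a=e^{\delta(x_a(\epsilon),\epsilon)}$ is established, the remaining manipulations — the $\tanh$ rewriting, the bound $\tanh(u)\le u$, and the convexity tangent estimate — are routine.
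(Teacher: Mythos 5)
Your proposal is correct and follows essentially the same route as the paper's proof: both identify $R_a = e^{\delta(x_a(\epsilon),\epsilon)}$ by noting that $\delta(\cdot,\epsilon)$ is odd and strictly increasing while $A_\epsilon(a)$ is symmetric (so the infimum of $\delta$ sits at the left endpoint $-x_a(\epsilon)$), then rewrite $\frac{\sqrt{R_a}-1}{\sqrt{R_a}+1}=\tanh\bigl(\delta(x_a(\epsilon),\epsilon)/4\bigr)$ and use $\tanh(u)\le u$, and finally apply the first-order convexity (supporting-line) inequality $\delta(x_a(\epsilon),\epsilon)\le 2\epsilon\,\phi'(x_a(\epsilon)+\epsilon)$. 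The attainment subtlety you flag is handled in the paper by the same symmetry-plus-monotonicity reasoning you give, so no gap remains.
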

See Appendix~\ref{prf:thm:lemmas_proofs_thm_Tx} for the details~(\ref{prf:lem:upper_bound_ratio}).

Splitting~\eqref{eq:H_temp_bound} into two regions and then applying the two bounds from  Lemma~\ref{lem:upper_bound_ratio}, we have,
\begin{align}
H^2(P_{Y_i}^+, P_{Y_i}^-) 
&\leq \frac{1}{4} \int_0^{\epsilon^{2.5}} \delta(x_a(\epsilon), \epsilon) \, x_a(\epsilon) \, \dx a
+ \frac{\epsilon}{2} \int_{\epsilon^{2.5}}^{a_{\epsilon}^*} \phi'(x_a(\epsilon) + \epsilon) \, x_a(\epsilon) \, \dx a. \notag \\
&= I_1 + I_2
\end{align}

\begin{itemize}
    \item \textbf{First region} (\(0 \leq a \leq \epsilon^{2.5}\)):  
    Here the integrand is small; a Taylor expansion near $a=0$ shows the contribution is $o(\epsilon^2)$, negligible in the asymptotics.
    
    \item \textbf{Second region} (\(\epsilon^{2.5} \leq a \leq a_{\epsilon}^*\)):  
    This term is $\Theta(\epsilon^2)$, and we compute the underlying constant, which eventually yields the upper bound on MSE.
\end{itemize}

\subsubsection*{First Region: \(0 \le a \le \epsilon^{2.5}\)}\label{part1:region1}
Define
\[
I_1 := \frac{1}{4} \int_0^{\epsilon^{2.5}} \delta(x_a(\epsilon), \epsilon)\, x_a(\epsilon) \, \dx a.
\]

Note that
\[
a\leq f_X(x_a(\epsilon)-\epsilon) - f_X(x_a(\epsilon)+\epsilon) \leq f_X(x_a(\epsilon)-\epsilon)
\]
and since $f_X(x)=e^{-\phi(x)}$, we have
\[
x_a(\epsilon)\leq \phi^{-1}\left( \log\frac{1}{a} \right) +\epsilon,
\]
where $\phi^{-1}:\bR_{\geq 0}\to \bR$ is guaranteed to exist as $\phi$ restricted to $\bR_{\geq 0}$ is strictly convex, smooth and increasing; hence invertible. Under these assumptions on $\phi$, the inverse $\phi^{-1}$ is concave and smooth; see, e.g.,~\cite[Prop.~1.1.7]{niculescu2006convex}. 
Using first-order property of concave functions,
\begin{equation}\label{eq:xa_bound_loga}
x_a(\epsilon)\leq \phi^{-1}(0) + \bar{\phi}(0)\log\frac{1}{a},
\end{equation}
where $\bar{\phi}(0)$ is the derivative of $\phi^{-1}$ evaluated at $0$. Hence, $x_a(\epsilon)$ behaves as $O(\log(1/a))$ in this regime.

Let us now bound the integrand. Suppose that $\phi(x)$ is upper bounded by a polynomial of degree $k$, for some fixed $k$ (as assumed in the theorem). We have,
\begin{align}
\delta(x_a(\epsilon),\epsilon)x_a(\epsilon) &= (\phi(x_a(\epsilon)+\epsilon)-\phi(x_a(\epsilon)-\epsilon))x_a(\epsilon)  &\notag\\
&\leq \max\Big\{\phi(x_a(\epsilon)+\epsilon),\phi(x_a(\epsilon)-\epsilon)\Big\}x_a(\epsilon)  &\notag\\
&= \phi(x_a(\epsilon)+\epsilon)x_a(\epsilon)  &\notag\\
&\leq c \log^k\frac{1}{a}
\end{align}
for some universal constant $c$, as long as $a\leq \epsilon^{2.5}$. In the last step, we made use of~\eqref{eq:xa_bound_loga} and the fact that $a\leq \epsilon^{2.5}$ and that $\phi$ is upper bounded by a polynomial of degree $k$. In this regime, $\log\frac{1}{a}=O(\log(1/\epsilon))\to\infty$ as $\epsilon\to 0$, which is the dominant term. 

Therefore, for some universal constants $c',c''$,
\begin{align}
I_1&\leq c'\int_{0}^{\epsilon^{2.5}}\log^k(a)\dx a &\notag\\
&\leq c'' \epsilon^{2.5}\mathrm{poly}\left(\log \frac{1}{\epsilon}\right)&\notag\\
&\leq o(\epsilon^2)\label{eq:bound_I1}
\end{align}
where in the penultimate step, we have used the fact that the integral of $\log^k x$ is $x^2\mathrm{poly}(\log x)$. 

\subsubsection*{Second Region: \( \epsilon^{2.5} \leq a \leq a_{\epsilon}^* \)}\label{part2:region1}
Now
\[
I_2 := \frac{\epsilon}{2} \int_{\epsilon^{2.5}}^{a_{\epsilon}^*} \phi'(x_a(\epsilon) + \epsilon) \cdot x_a(\epsilon) \, \dx a.
\]

From Taylors approximation, we have
\[
f_X(x-\epsilon)-f_X(x+\epsilon) = 2\epsilon f_X(x)\phi'(x) +O(\epsilon^3).
\]
Therefore,
\[
a_{\epsilon}^* \leq 2 \epsilon \sup_{x \geq 0} f_X(x)\,\phi'(x) + O(\epsilon^3).
\]
Define
\[
h(x) \triangleq 2\phi'(x) f_X(x),
\]
and
\[ 
h^* \triangleq \sup_{x \geq 0} h(x).
\]
Then, for \(a \in [\epsilon^{2.5}, a_{\epsilon}^*]\),
\[
x_a(\epsilon) = h^{-1} \left(\tfrac{a}{2\epsilon} + O(\epsilon^{2})\right)
\]

Using this representation, we can bound \(I_2\) as
\begin{align*}
I_2 &\leq \frac{\epsilon}{2} \int_{\epsilon^{2.5}}^{2\epsilon h^*} 
\phi' \left(h^{-1} \left(\tfrac{a}{2\epsilon} + O(\epsilon^2)\right) + \epsilon\right)
\, h^{-1} \left(\tfrac{a}{2\epsilon} + O(\epsilon^2)\right) \, \dx a.
\end{align*}
Applying the change of variable \(t = \tfrac{a}{2\epsilon}\) gives
\[
I_2 \leq \epsilon^2 \int_{\epsilon^{1.5}/2}^{h^*} 
\phi' \left(h^{-1}(t + O(\epsilon^2)) + \epsilon\right)
\, h^{-1}(t + O(\epsilon^2)) \, \dx t.
\]
As \(\epsilon \to 0\),
\begin{equation}\label{eq:bound_I2}
I_2 \leq \epsilon^2 \left[ \int_{\epsilon^{1.5}/2}^{h^*} \phi' \left(h^{-1}(t)\right) h^{-1}(t) \, \dx t + o(1) \right].
\end{equation}

\subsubsection*{Combined Bound}
Combining \eqref{eq:bound_I1} and \eqref{eq:bound_I2},
\[
H^2(P_{Y_i}^+, P_{Y_i}^-) \leq o(\epsilon^2) + \epsilon^2 \left[ \int_{\epsilon^{1.5}/2}^{h^*} \phi' \left(h^{-1}(t)\right) h^{-1}(t) \, \dx t + o(1) \right].
\]
and hence,
\begin{equation}
H^2(P_{Y_i}^+, P_{Y_i}^-) \leq \epsilon^2 \left[ \int_{0}^{h^*} \phi' \left(h^{-1}(t)\right) h^{-1}(t) \, \dx t + o(1) \right],
\end{equation}
where \(o(1)\to 0\) as \(\epsilon \to 0\). This completes the proof.
\qed

\section{Proof of Theorem \ref{thm:mse_adaptive_estimator}}\label{sec:prf_adaptive_estimator}

In ~\eqref{expression_mu_f}, we observe that \( \hat{\mu}_f \) depends on \( \hat{\mu}_c \) and \( \hat{\sigma}_c \). As \( \hat{\mu}_c \) and \( \hat{\sigma}_c \) are not constants but random variables, the procedure used to derive the results for the non-adaptive scheme cannot be directly applied here.  
\par Let  us define:
\begin{align}
T_{n_3}  \triangleq \sqrt{n_3}\frac{1}{\sqrt{{\mathrm{Var}} \left [  Y_{n_1+n_2 +1} | \hat{\mu}_c \right ]}} \left [ \frac{1}{n_3} \sum_{i =n_1+n_2+1}^{n} (Y_{i}  - A_{n_3})\right], \label{t3_def}
\end{align}
where
\begin{equation*}
A_{n_3} \triangleq \bE\left[ Y_{n_1+n_2 +1} | \hat{\mu}_c \right] = \bP \left[ X_{n_1+n_2+1} \leq  \hat{\mu}_c  \right] = F_X \left(\frac{\hat{\mu}_c - \mu}{\sigma} \right). 
\end{equation*}
The conditional variance is
\begin{align*}
\rvar \left [ Y_{n_1+n_2 +1} | \hat{\mu}_c \right ] = A_{n_3} (1 - A_{n_3}) \intertext{and}
\bE\left[  | Y_{n_1+n_2 +1}  - A_{n_3}|^3 \mid \hat{\mu}_c \right] \leq (1 -A_{n_3})A_{n_3}.
 \end{align*}
From the Berry-Esseen theorem \cite[Theorem 2.1.3]{vershynin2018high} there exists a universal constant $C$ such that:
 \begin{align}
 \sup_{x \in \mathbb{R}}  \left|  F_{T_{n_3} |\hat{\mu}_c}(x) - \Phi(x) \right | &\leq  \frac{C}{\sqrt{n_3} \sqrt{(1 - A_{n_3})A_{n_3}}}. \label{beery_result}
 \end{align}
where $F_{T_{n_3} |\hat{\mu}_c}$ is the conditional cdf of $T_{n_3}$ given $\hat{\mu}_c$, and \(\Phi(x)\) is the standard normal cumulative distribution function. 

Now we can  rewrite~\eqref{t3_def} as:
\begin{align}
T_{n_3}  &= \sqrt{n_3}\frac{1}{\sqrt{A_{n_3}(1-A_{n_3})}} \left [ \left( F_X(\alpha_3)  - A_{n_3} \right) \right] \label{eq:Tn3_alpha3}
\end{align}
Rearranging the terms,  
\begin{equation}
\begin{multlined}[0.8\linewidth]
\alpha_3 - \frac{\hat{\mu}_c - \mu}{\sigma} =  F^{-1}_X \left( \frac{T_{n_3} \left(\sqrt{A_{n_3}(1-A_{n_3})} \right)}{\sqrt{n_3}} \right) +A_{n_3} - \left(\frac{\hat{\mu}_c - \mu}{\sigma} \right).
\end{multlined}
\end{equation}

Let us define
\begin{align}
H \triangleq \sqrt{n_3} \left( \alpha_3 - \frac{\hat{\mu}_c - \mu}{\sigma} \right) \label{H_def}
\end{align}

For any $h\in\bR$,
\begin{equation*}
  H\leq h\implies\sqrt{n_3} \left(\alpha_3 - \frac{\hat{\mu}_c - \mu}{\sigma} \right) \leq h \implies \alpha_3 \leq \frac{h}{\sqrt{n_3}} + \frac{\hat{\mu}_c - \mu}{\sigma}. 
\end{equation*}

Using~\eqref{eq:Tn3_alpha3},
\begin{equation}
     \left\{   H \leq h \right \} \equiv \left \{    T_{n_3} \leq    \frac{\sqrt{n_3}\left [F_{X} \left( \frac{h}{\sqrt{n_3}} + \frac{\hat{\mu}_c - \mu}{\sigma} \right)  - F_{X}\left(  \frac{\hat{\mu}_c - \mu}{\sigma} \right)\right] }{\sqrt{A_{n_3}(1-A_{n_3})}}   \right \} \label{H_event}.
\end{equation}

Using the Taylor series, we can write,

\begin{equation*}
F_{X} \left( \frac{h}{\sqrt{n_3}} + \frac{\hat{\mu}_c - \mu}{\sigma} \right) - F_{X} \left( \frac{\hat{\mu}_c - \mu}{\sigma} \right) =   f_{X} \left( \frac{\hat{\mu}_c - \mu}{\sigma} \right)\left( \frac{h}{\sqrt{n_3}} \right) +\mathcal{O} \left(\frac{h^2}{n_3}\right).
\end{equation*}

 Hence,~\eqref{H_event} becomes
\begin{align}
       \left\{   H \leq h \right \} \equiv  \left\{   T_{n_3} \leq    \frac{\sqrt{n_3}  \left[ f_{X} \left( \frac{\hat{\mu}_c - \mu}{\sigma} \right)\left( \frac{h}{\sqrt{n_3}} \right) + \mathcal{O} \left(\frac{h^2}{n_3}\right) \right ]   }{\sqrt{A_{n_3}(1-A_{n_3})}}  \right \}.
\end{align} 

Therefore
\begin{align*}
  F_{H |\hat{\mu}_c, \hat{\sigma}_c}(h) =  F_{T_{n_3}| \hat{\mu}_c, \hat{\sigma}_c}    \left [  \frac{\sqrt{n_3}  \left[ f_{X} \left( \frac{\hat{\mu}_c - \mu}{\sigma} \right)\left( \frac{h}{\sqrt{n_3}} \right) + \mathcal{O} \left(\frac{h^2}{n_3}\right) \right ]   }{\sqrt{A_{n_3}(1-A_{n_3})}}    \right].
\end{align*}

Now from \eqref{beery_result}
\begin{align}
 \sup_{h \in \mathbb{R}}  \left|  F_{H |\hat{\mu}_c, \hat{\sigma}_c}(h) - \Phi\left(  \frac{\sqrt{n_3}  \left[ f_{X} \left( \frac{\hat{\mu}_c - \mu}{\sigma} \right) \left( \frac{h}{\sqrt{n_3}} \right) + \mathcal{O} \left(\frac{h^2}{n_3}\right) \right ]   }{\sqrt{A_{n_3}(1-A_{n_3})}}\right) \right |   \leq \frac{C}{\sqrt{n_3} \sqrt{(1 - A_{n_3})A_{n_3}}}. \label{H_cdf}
\end{align}

 Now, recall the expression for  \( \hat{\mu}_f\) from  ~\eqref{expression_mu_f}
\begin{align*}
\hat{\mu}_f =  \hat{\mu} -  {F^{-1}_X (F_3)} \cdot \hat{\sigma}_c.
\end{align*}

Since \( \alpha_3 = {F^{-1}_X (F_3)}  = \frac{H}{\sqrt{n_3}} +  \left( \frac{\hat{\mu}_c - \mu}{\sigma}  \right)  \), we can write

\begin{align}
 \hat{\mu}_f  = \hat{\mu} - \frac{H}{\sqrt{n_3}}\hat{\sigma}_c -  \left( \frac{\hat{\mu}_c - \mu}{\sigma}  \right)  \hat{\sigma}_c \notag
 \end{align}
 and
 \begin{align}
M_f\triangleq \sqrt{n_3} ( \mu - \hat{\mu}_f  ) = H \hat{\sigma}_c + \sqrt{n_3} \left[  \frac{\left(  \sigma - \hat{\sigma}_c \right) \left(  \mu - \hat{\mu}_c \right) }{\sigma} \right]. \label{termmu_f}
\end{align}

Let    \( \beta \triangleq \sqrt{n_3} \left[  \frac{\left(  \sigma - \hat{\sigma}_c \right) \left(  \mu - \hat{\mu}_c \right) }{\sigma} \right] \) and \(M_f \triangleq  \sqrt{n_3} ( \mu - \hat{\mu}_f  ) \), so that
\begin{align*}
    M_f = H \hat{\sigma}_c + \beta.
\end{align*}

The event
\begin{align}
    \left \{ M_f \leq m \right \}  \equiv \left \{  H \leq \frac{m-\beta}{\hat{\sigma}_c}\right \}.
\end{align}

Using this and~\eqref{H_cdf}, we obtain
\begin{equation}
\begin{multlined}
\sup_{m \in \mathbb{R}}  \left|  F_{M_f |\hat{\mu}_c, \hat{\sigma}_c}(m) - \Phi\left(  \frac{ f_X \left(  \frac{\hat{\mu}_c - \mu}{\sigma} \right) \left( \frac{m - \beta}{\hat{\sigma}_c} \right) }{\sqrt{A_{n_3}(1-A_{n_3})}} +  \mathcal{O} \left(\frac{ (m - \beta)^2}{n_3 \hat{\sigma}_c}\right)  \right) \right |    \leq \frac{C}{\sqrt{n_3} \sqrt{(1 - A_{n_3})A_{n_3}}}. \label{M_pdf_bound}
\end{multlined}
\end{equation}
Now, we  define \( \zeta \triangleq \frac{  f_X \left(\frac{\hat{\mu}_c - \mu}{\sigma} \right) \beta }{  \hat{\sigma}_c \sqrt{A_{n_3}(1-A_{n_3})}} - \mathcal{O} \left(\frac{ (m - \beta)^2}{n_3 {\hat{\sigma}_{c}}^2}\right) \).  For any fixed \( m \in \bR \),~\eqref{M_pdf_bound} can be written as:

\begin{align}
    \left|  F_{M_f |\hat{\mu}_c, \hat{\sigma}_c}(m) - \Phi\left(  \frac{ f_X \left( \frac{\hat{\mu}_c - \mu}{\sigma} \right) m}{ \hat{\sigma}_c \sqrt{A_{n_3}(1-A_{n_3})}} - \zeta \right) \right |   \leq \frac{C}{\sqrt{n_3} \sqrt{(1 - A_{n_3})A_{n_3}}}. \label{Mf_cdf_zeta}
\end{align}

Using the triangle inequality
\begin{equation}
\begin{aligned}
  \left| F_{M_f |\hat{\mu}_c, \hat{\sigma}_c}(m)  - \Phi\left( \frac{ f_X(0) m}{ \sigma / 2 } \right) \right| 
& \leq  \left| F_{M_f |\hat{\mu}_c, \hat{\sigma}_c}(m) - \Phi\left( \frac{ f_X \left( \frac{\hat{\mu}_c - \mu}{\sigma} \right)  m}{ \hat{\sigma}_c \sqrt{A_{n_3}(1-A_{n_3})}} 
 - \zeta \right) \right|      \\  &\qquad \qquad + \left| \Phi\left(  \frac{ f_X \left( \frac{\hat{\mu}_c - \mu}{\sigma} \right) m}{ \hat{\sigma}_c \sqrt{A_{n_3}(1-A_{n_3})}} - \zeta \right) -  \Phi\left( \frac{ f_X(0) m}{ \sigma / 2 } \right)   \right|. \label{tringle_result}
\end{aligned}   
\end{equation}

Since \( \Phi(x)\) is Lipschitz continuous, we can write
\begin{align}
\left| \Phi\left( \frac{ f_X \left (\frac{\hat{\mu}_c - \mu }{\sigma}\right) m}{ \hat{\sigma}_c \sqrt{A_{n_3}(1-A_{n_3})}} - \zeta \right) - \Phi\left( \frac{ f_X(0) m}{ \sigma / 2 } \right) \right| 
\leq \frac{1}{\sqrt{2\pi}}  \left| \left(  \frac{ f_X \left (\frac{\hat{\mu}_c - \mu }{\sigma}\right) m}{ \hat{\sigma}_c \sqrt{A_{n_3}(1-A_{n_3})}} - \zeta \right) - \frac{ f_X(0) m}{ \sigma / 2 } \right|. \label{lipschtiz_result}
\end{align}

Using~\eqref{lipschtiz_result} and~\eqref{Mf_cdf_zeta} in~\eqref{tringle_result}, we can write

\begin{equation}
  \begin{aligned}
  \left| F_{M_f |\hat{\mu}_c, \hat{\sigma}_c}(m)  - \Phi\left( \frac{ f_X(0) m}{ \sigma / 2 } \right) \right|   \leq     \frac{C}{\sqrt{n_3} \sqrt{(1 - A_{n_3})A_{n_3}}} +  \frac{1}{\sqrt{2\pi}}  \left| \left(  \frac{ f_X \left (\frac{\hat{\mu}_c - \mu }{\sigma}\right) m}{ \hat{\sigma}_c \sqrt{A_{n_3}(1-A_{n_3})}} - \zeta \right) - \frac{ f_X(0) m}{ \sigma / 2 } \right|
  \end{aligned}  
  \label{tri_lip_result}
\end{equation}

where
\begin{equation*}
\zeta =  \frac{ f_X \left(\frac{\hat{\mu}_c - \mu}{\sigma} \right)  \sqrt{n_3}\left( \left(  \sigma - \hat{\sigma}_c \right) \left(  \mu - \hat{\mu}_c \right) \right) }{ \hat{\sigma}_c  \cdot \sigma \sqrt{A_{n_3}(1-A_{n_3})}} - \mathcal{O} \left(\frac{ (m - \beta)^2}{n_3 {\hat{\sigma}_c}^2}\right).
\end{equation*}

We apply the Jensen's inequality:
\begin{align*}
 \stackunder{$\bE$}{$\hat{\mu}_c, \hat{\sigma}_c$}  \left( \left| F_{M_f |\hat{\mu}_c, \hat{\sigma}_c}(m) - \Phi \left(  \frac{ f_X(0) m}{ \sigma/2 }\right) \right| \right) \geq  \left|  \stackunder{$\bE$}{$\hat{\mu}_c, \hat{\sigma}_c$} \left( F_{M_f |\hat{\mu}_c, \hat{\sigma}_c}(m) - \Phi \left(  \frac{ f_X(0) m}{ \sigma/2 }\right)  \right) \right|.
\end{align*}
Therefore
\begin{align}
\stackunder{$\bE$}{$\hat{\mu}_c, \hat{\sigma}_c$} \left| F_{M_f |\hat{\mu}_c, \hat{\sigma}_c}(m) - \Phi \left(  \frac{ f_X(0) m}{ \sigma/2 }\right) \right| \geq  \left| F_{M_f}(m) - \Phi \left(  \frac{ f_X(0) m}{ \sigma/2 }\right) \right|. \notag
\end{align}

Now,~\eqref{tri_lip_result} becomes
\begin{equation}
\begin{aligned}
 \left| F_{M_f}(m) - \Phi  \left(  \frac{ f_X(0) m}{\sigma/2}\right) \right|  \leq   \frac{C}{\sqrt{n_3} \sqrt{(1 - A_{n_3})A_{n_3}}} + \stackunder{$\bE$}{$\hat{\mu}_c, \hat{\sigma}_c$}\left( \frac{1}{\sqrt{2\pi}} \cdot \left| \left(  \frac{ f_X \left (\frac{\hat{\mu} - \mu }{\sigma}\right) m}{ \hat{\sigma}_c \sqrt{A_{n_3}(1-A_{n_3})}} - \zeta \right) - \frac{ f_X(0) m}{ \sigma / 2 } \right| \right). \label{final_mf_bound}
\end{aligned}
\end{equation}

Now our goal is to show that the R.H.S. of \eqref{final_mf_bound} converges to zero as \(n \to \infty\). To achieve this, we first establish the convergence of \( \sqrt{n_3} \left( \left( \mu - \hat{\mu}_c \right) \left( \sigma - \hat{\sigma}_c \right) \right) \). The following lemma provides this convergence result.

\begin{lemma}\label{lem:sigmamuconvergence}
Let \(\hat{\mu}_c\) and \(\hat{\sigma}_c\) be estimates of \(\mu\) and \(\sigma\) using the non-adaptive protocol described in Sec.~\ref{sec:nonadaptivescheme}. Then
\begin{equation}
    \begin{multlined}
   P \left[  \sqrt{n_3}\left| \sigma - \hat{\sigma}_c \right| \left| \mu - \hat{\mu}_c \right| \geq   \frac{8 |\mu \sigma| (\log n_3)^2/\sqrt{n_3}}{\left( 1 - 2 (\log n_3)/\sqrt{n_3} \right)^2}   \right ]   \leq 4 \exp \left( \frac{ -\Tilde{C}_1\log n_3}{3} \right) \left[ 1+ \exp \left(  \frac{\eta \log n_3 }{3}\right)  \right ].\notag
    \end{multlined}
\end{equation}
Where \( \Tilde{C}_1\) is a constant that depends on $F_X$, \( \theta_i \), \( \sigma \), and \( \mu \), but independent of $n$. 
As  \(n\to \infty \),
\begin{align*}
\sqrt{n_3} \left( \mu - \hat{\mu}_c \right) \left( \sigma - \hat{\sigma}_c \right) \xrightarrow{\text{p}} 0.
\end{align*}
\end{lemma}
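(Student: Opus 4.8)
The plan is to decompose both first-round errors $\hat\mu_c-\mu$ and $\hat\sigma_c-\sigma$ into explicit rational functions of the quantile fluctuations, control those fluctuations by a multiplicative Chernoff bound, and then combine the two error estimates multiplicatively. First I would fix notation: write $\alpha_i^\star \triangleq (\theta_i-\mu)/\sigma$ and $p_i \triangleq F_X(\alpha_i^\star)=\bE[F_i]$ for $i=1,2$, and let $\Delta_i \triangleq \alpha_i-\alpha_i^\star = F_X^{-1}(F_i)-\alpha_i^\star$ be the fluctuation of the $i$-th quantile estimate (each $F_i$ is an average of $n_i=n_3/\log n_3$ i.i.d.\ Bernoulli$(p_i)$ variables). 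Substituting $\alpha_i=\alpha_i^\star+\Delta_i$ into~\eqref{eq:mu_sigma_converg} and using $\sigma=\tfrac{\theta_1-\theta_2}{\alpha_1^\star-\alpha_2^\star}$, $\mu=\tfrac{\alpha_1^\star\theta_2-\alpha_2^\star\theta_1}{\alpha_1^\star-\alpha_2^\star}$, a direct computation yields the exact identities
\begin{align*}
\hat\sigma_c-\sigma &= \frac{-\sigma^2(\Delta_1-\Delta_2)}{(\theta_1-\theta_2)+\sigma(\Delta_1-\Delta_2)}, \\
\hat\mu_c-\mu &= \frac{\sigma\bigl[\Delta_1(\theta_2-\mu)-\Delta_2(\theta_1-\mu)\bigr]}{(\theta_1-\theta_2)+\sigma(\Delta_1-\Delta_2)}.
\end{align*}
Both errors are thus $O(\max_i|\Delta_i|)$, provided the common denominator stays bounded away from zero.

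Next I would introduce the good event $\mathcal{G}\triangleq\{|\Delta_1|\le \log n_3/\sqrt{n_3},\ |\Delta_2|\le \log n_3/\sqrt{n_3}\}$. On $\mathcal{G}$ the common denominator is bounded below in magnitude by $|\theta_1-\theta_2|\bigl(1-2\log n_3/\sqrt{n_3}\bigr)$ (after calibrating the deviation level in $\mathcal{G}$ to the normalized units $(\theta_1-\theta_2)/\sigma$), so forming the product of the two identities gives
\[
|\hat\mu_c-\mu|\,|\hat\sigma_c-\sigma|
\ \le\ \frac{C(\mu,\sigma,\theta_1,\theta_2)\,(\log n_3)^2/n_3}{\bigl(1-2\log n_3/\sqrt{n_3}\bigr)^2},
\]
where the product of fluctuations is $\Theta(\delta^2)=\Theta((\log n_3)^2/n_3)$. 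Multiplying by $\sqrt{n_3}$ turns the numerator into $(\log n_3)^2/\sqrt{n_3}$, and collecting the $\mu,\sigma,\theta_i$-dependent constant into $8|\mu\sigma|$ gives exactly the deterministic bound $\sqrt{n_3}|\sigma-\hat\sigma_c||\mu-\hat\mu_c|\le \tfrac{8|\mu\sigma|(\log n_3)^2/\sqrt{n_3}}{(1-2\log n_3/\sqrt{n_3})^2}$ on $\mathcal{G}$. Consequently the tail event in the lemma is contained in $\mathcal{G}^c$.

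Then I would bound $P[\mathcal{G}^c]$ by a union bound together with concentration. Because $F_X$ is strictly increasing, the event $\{|\Delta_i|\ge\delta\}$ pulls back \emph{exactly} to $\{F_i\ge F_X(\alpha_i^\star+\delta)\}\cup\{F_i\le F_X(\alpha_i^\star-\delta)\}$; this pull-back (rather than a global Lipschitz bound on $F_X^{-1}$) is what keeps the step rigorous without assuming $f_X$ is bounded below globally. Each one-sided event is a Bernoulli tail for an average of $n_i=n_3/\log n_3$ terms, so the multiplicative Chernoff bound applies and produces the factor $1/3$ in the exponent. With $\delta=\log n_3/\sqrt{n_3}$, the gaps $|F_X(\alpha_i^\star\pm\delta)-p_i|$ are $\Theta(\delta)$, giving exponents proportional to $n_i\delta^2=\Theta(\log n_3)$; the two thresholds $p_1,p_2$ yield two distinct exponents, and summing the four one-sided tails collects into the claimed form $4\exp(-\tilde C_1\log n_3/3)\bigl[1+\exp(\eta\log n_3/3)\bigr]$, with $\tilde C_1,\eta$ depending only on $F_X,\theta_i,\mu,\sigma$.

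Finally, for the convergence statement I note that the threshold $\tfrac{8|\mu\sigma|(\log n_3)^2/\sqrt{n_3}}{(1-2\log n_3/\sqrt{n_3})^2}\to 0$, while $P[\mathcal{G}^c]\to 0$ since (calibrating $\delta$ so that $\tilde C_1>\eta$) the dominant exponent $(\eta-\tilde C_1)\log n_3/3\to-\infty$; combining these gives $\sqrt{n_3}(\mu-\hat\mu_c)(\sigma-\hat\sigma_c)\xrightarrow{\text{p}}0$. The main obstacle will be the constant bookkeeping in the third and fourth steps: transferring the Bernoulli concentration cleanly through the monotone map $F_X^{-1}$ and the rational error formulas while simultaneously guaranteeing the common denominator does not vanish and matching the exact prefactor $8|\mu\sigma|$, the clean denominator $(1-2\log n_3/\sqrt{n_3})^2$, and the Chernoff exponents $\tilde C_1,\eta$. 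The conceptual content, however, is entirely contained in the two exact identities above and the $\Theta(\delta^2)$ scaling of their product.
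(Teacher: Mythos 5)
Your proposal is correct and shares the paper's overall skeleton---Chernoff concentration for the quantile estimates $\alpha_1,\alpha_2$, algebraic transfer to the errors of $\hat{\mu}_c,\hat{\sigma}_c$, a union bound reducing the product event to the two marginal deviation events, and the observation that both the threshold and the tail probability vanish---but your execution differs in two genuine ways. First, the paper never writes your exact rational identities: it instead proves separate marginal bounds (Lemmas~\ref{lem:mu_bound} and~\ref{lem:sigma_bound}) by sandwiching $\hat{\mu}_c-\mu$ and $\hat{\sigma}_c-\sigma$ between $\pm 4\epsilon_1\mu/(1\mp2\epsilon_1)$ and $\pm 2\epsilon_1\sigma/(1\mp2\epsilon_1)$ on the event that each $\alpha_i$ deviates from $(\theta_i-\mu)/\sigma$ by at most a \emph{relative} factor $\epsilon_1=\log n_3/\sqrt{n_3}$; your closed-form expressions for $\hat{\sigma}_c-\sigma$ and $\hat{\mu}_c-\mu$ (which are algebraically exact) make the same transfer more transparent and localize the only danger, the vanishing denominator, in a single place. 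Second, the paper's concentration step (Lemma~\ref{lem:alpha_G_bound}) Taylor-expands $F_X^{-1}$ around $\bE[F_i]$ and disposes of the $\mathcal{O}(G_i^2)$ remainder with an informal ``for sufficiently large $n_i$'' argument, whereas your exact pull-back of $\{|\Delta_i|\ge\delta\}$ through the monotone $F_X$ to Bernoulli tail events avoids that remainder control entirely---a cleaner and arguably more rigorous route to the same $\exp(-\Theta(n_i\delta^2))=\exp(-\Theta(\log n_3))$ tails. One caveat on your constant bookkeeping: the prefactor $8|\mu\sigma|$ is an artifact of the paper's relative calibration (deviations proportional to $|(\theta_i-\mu)/\sigma|$), which is precisely what makes $|\mu|$ and $|\sigma|$ appear in the marginal thresholds whose product, times $\sqrt{n_3}$, gives $8|\mu\sigma|(\log n_3)^2/\sqrt{n_3}$ over $(1-2\log n_3/\sqrt{n_3})^2$; with your absolute calibration $|\Delta_i|\le\log n_3/\sqrt{n_3}$ the natural constant emerging from your identities involves $|\theta_1-\mu|$, $|\theta_2-\mu|$, and $(\theta_1-\theta_2)$ instead, so to land the exact stated constant you must adopt the relative deviation levels---your parenthetical about recalibrating to normalized units gestures at exactly this, and once that switch is made the two arguments coincide in substance.
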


Please note that as \(n \to \infty\), we have \( \sqrt{A_{n_3}(1-A_{n_3})} \xrightarrow{\text{\as}} 1/2 \), \( \hat{\mu}_c \xrightarrow{\as} \mu \), \( \hat{\sigma}_c \xrightarrow{\as} \sigma \), and from Lemma \ref{lem:sigmamuconvergence}, \( \sqrt{n_3} \left( \left( \mu - \hat{\mu}_c \right) \left( \sigma - \hat{\sigma}_c \right) \right) \xrightarrow{\text{p}} 0 \). Consequently, 
\[
\zeta \xrightarrow{p} 0,
\]  and 
\[
\dfrac{f_X\left(\frac{\hat{\mu}-\mu}{\sigma}\right)}{\left(\hat{\sigma}_c \sqrt{A_{n_3}(1-A_{n_3})} \right)} \xrightarrow{\as} \frac{f_X(0)m}{\sigma/2}.
\] 
 Thus, the R.H.S.\ of \eqref{final_mf_bound} converges to zero as \(n \to \infty\), and we can write:
\[
M_f = \sqrt{n} \left( \mu_f - \mu \right) \xrightarrow{d} \mathcal{N} \left(0, \frac{\sigma^2}{4 f_X(0)^2} \right).
\]
Finally, this gives us
\[
\lim_{n \to \infty} n \cdot \text{MSE}( \hat{\mu}_f ) = \frac{\sigma^2}{4 f_X(0)^2}.
\]
This concludes the proof
\qed


\section{Proof of Theorem \ref{thm:lowerbound_adaptive}}\label{sec:prf_lowerbound_adaptive}
To establish a lower bound on the mean-squared error (MSE) of our estimator, we employ the Van Trees inequality \citep{bj/1186078362, van2004detection}, 
which states that for any estimator,
\begin{align}
\bE \left[ \left( \hat{\mu} - \mu \right)^2 \right]\geq \frac{1}{\, \bE_{g} \left[\bfI_{\mu}\right] + I_{0} } \label{lower_bound_final-temp1}
\end{align}
where $g(\cdot)$ is any prior density over the parameter \(\mu\) that satisfies the regularity conditions. The expectation on the left hand side is  with respect to the joint density $g(\mu)f_{X,\mu,\sigma}(x)$ on $(\mu,x)$, while $\bE_g[\cdot]$ denotes expectation with respect to $g(\mu)$. Also, \(I_{0}\) and \(\bfI_{\mu}\)  are defined as:
\begin{align*}
I_0 =
\bE_{g}\left[\left(\frac{g'(\mu)}{g(\mu)}\right)^2\right] \quad \text{and} \quad \bfI_{\mu} = \bE \left[ \left( \frac{\partial}{\partial \mu} \log P(\underline{Y} \mid \mu) \right)^2 \right]
\end{align*}
respectively.
Here \(g'(\mu)\) is the the derivative of \(g(\mu)\) with respect to \(\mu\) and \(\underline{Y} = [Y_1, Y_2, \dots, Y_n]\), where each $Y_i$ is the $1$-bit output of user $i$. Moreover, $P(\underline{Y} \mid \mu)$ denotes the joint distribution of $Y_1,\ldots,Y_n$ given $\mu$. To keep notation simple, we have not explicitly mentioned the dependence on $\sigma$.

Using the chain rule of Fisher information~\citep{zamir1998proof},
\begin{equation}\label{eq:chainrule_fisher}
    \bfI_{\mu}=\sum_{i=1}^n I_{\mu,i},
\end{equation}
where the Fisher information
\[
I_{\mu,i} =\bE \Bigl[ \Bigl( \frac{\partial}{\partial \mu}\,\log P\bigl(Y_i \mid Y_1, \ldots, Y_{i-1}, \mu\bigr) \Bigr)^{2} \Bigr].
\]
By definition of an adaptive protocol, $Y_i$ can depend only on $X_i$ and $Y_1,\ldots,Y_{i-1}$. Moreover, $X_i$ is independent of $Y_1,\ldots, Y_{i-1}$. Therefore, we can write 
\[
Y_i=\begin{cases}
    1, &\text{ if } X_i\in\cA_i\\
    0, &\text{ otherwise.}
\end{cases}
\]
where each $\cA_i$ is a Borel measurable set that can depend only on $Y_1,\ldots,Y_{i-1}$ and therefore independent of $X_i$.

\begin{lemma}\label{lem:Gamma_bound}
Let \( \cS \) be a Borel measurable set independent of $X_i$, and define \( Y_i \) as:
    \[
    Y_i =
    \begin{cases} 
        1 & \text{if } X_i \in \cS, \\
        0 & \text{otherwise}.
    \end{cases}
    \]
    Then, for any $\delta>0$, the Fisher information \(I_{\mu,i}\) of \(Y_{i}\) with respect to \(\mu\) is upper bounded by: 
    \begin{equation}\label{eq:I_mu_i_bound}
        I_{\mu,i} \leq  \left(\frac{2 f_X(0)}{\sigma}\right)^{2+\delta}. \notag
    \end{equation}
\end{lemma}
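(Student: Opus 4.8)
The plan is to reduce the statement to the Fisher information of a single Bernoulli random variable and then solve an extremal problem over the set $\cS$. Because $\cS$ is independent of $X_i$, I may treat it as fixed and set $p(\mu)\triangleq\bP[X_i\in\cS]=\int_{\cS}\tfrac1\sigma f_X\!\bigl(\tfrac{x-\mu}{\sigma}\bigr)\,\dx x$, so that $Y_i\sim\mathrm{Bernoulli}\bigl(p(\mu)\bigr)$. The Fisher information of a Bernoulli family then takes the standard form
\[
I_{\mu,i}=\frac{\bigl(p'(\mu)\bigr)^2}{p(\mu)\bigl(1-p(\mu)\bigr)},
\]
and the whole lemma reduces to bounding this ratio uniformly over all admissible $\cS$.

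First I would compute $p'(\mu)$ by differentiating under the integral sign (justified by the regularity of $\phi$) and substituting $u=(x-\mu)/\sigma$. Writing $T$ for the standardized set $\{u:\mu+\sigma u\in\cS\}$ and using $f_X=e^{-\phi}$, hence $f_X'=-\phi' f_X$, this gives
\[
p(\mu)=\int_T f_X(u)\,\dx u,\qquad p'(\mu)=\frac1\sigma\int_T \phi'(u)f_X(u)\,\dx u=-\frac1\sigma\int_T f_X'(u)\,\dx u,
\]
so that $I_{\mu,i}=\tfrac1{\sigma^2}\bigl(\int_T\phi' f_X\bigr)^2\big/\bigl[(\int_T f_X)(1-\int_T f_X)\bigr]$. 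It remains to maximize the right-hand side over all measurable $T$.

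The main obstacle is this extremal step, which I would resolve with a Neyman--Pearson (bathtub) argument. Fixing the mass $p=\int_T f_X$, maximizing $\bigl|\int_T\phi' f_X\bigr|$ forces $T$ to be a superlevel set of $\phi'$; since $\phi$ is even and strictly convex, $\phi'$ is odd and strictly increasing, so each such set is a half-line $[a,\infty)$ (or, by symmetry, $(-\infty,-a]$, giving the same value). For $T=[a,\infty)$ the integrals telescope: $\int_T f_X=1-F_X(a)=F_X(-a)$ and $\int_T\phi' f_X=-\int_a^\infty f_X'=f_X(a)$. Substituting back yields
\[
I_{\mu,i}\le \frac1{\sigma^2}\,\frac{f_X^2(a)}{F_X(a)F_X(-a)}=\frac{\eta(a)}{\sigma^2},
\]
with $\eta$ exactly the quantity in~\eqref{eq:eta_condn}.

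Finally I would invoke the hypothesis of Theorem~\ref{thm:lowerbound_adaptive} that $\eta$ is non-increasing in $|x|$ and uniquely maximized at $x=0$; since $F_X(0)=\tfrac12$ by symmetry, $\sup_a\eta(a)=\eta(0)=4f_X^2(0)$, and therefore $I_{\mu,i}\le 4f_X^2(0)/\sigma^2=\bigl(2f_X(0)/\sigma\bigr)^2$, which is the asserted bound. The points that need the most care are the rigorous justification of the Neyman--Pearson reduction to half-lines and the degenerate cases $p\in\{0,1\}$ (where $p'=0$ and $I_{\mu,i}=0$ trivially); differentiation under the integral and all the substitutions are routine given the assumed smoothness and polynomial control on $\phi$.
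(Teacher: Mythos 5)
Your proposal is correct in substance but takes a genuinely different route from the paper's. The paper approximates the Borel set $\cS$ by a finite union of intervals, writes the binary Fisher information as a ratio involving sums of CDF and density increments across those intervals, and then imports a corollary of Lemma~11 of Kipnis--Duchi to bound a $(2+\delta)$-power version of that ratio by an (unevaluated) constant $\Gamma_\delta(\mu)$; the reduction of the extremal problem to a threshold at the true mean is left implicit in that cited lemma. You instead solve the extremal problem directly and self-containedly: the Bernoulli identity $I_{\mu,i}=(p'(\mu))^2/\bigl(p(\mu)(1-p(\mu))\bigr)$, the computation $p'(\mu)=\tfrac{1}{\sigma}\int_T\phi'(u)f_X(u)\,\dx u$ (correct, provided one differentiates the original integral over $\cS$ before the change of variables, since $T$ itself depends on $\mu$), and the Neyman--Pearson step: at fixed mass $p$, the maximizer of $\bigl|\int_T\phi' f_X\bigr|$ is a superlevel (or sublevel) set of the odd, strictly increasing $\phi'$, hence a half-line, where the ratio telescopes to exactly $\eta(a)/\sigma^2$ with $\eta$ from \eqref{eq:eta_condn}; the hypothesis of Theorem~\ref{thm:lowerbound_adaptive} then gives $\sup_a\eta(a)=\eta(0)=4f_X^2(0)$. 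This buys three things over the paper's proof: no interval-approximation bookkeeping (your argument handles arbitrary measurable $T$ directly), no reliance on an external lemma whose constant is never tied to $2f_X(0)/\sigma$ in the paper's own write-up, and sharpness --- equality holds for the half-line thresholded at the true mean, which is exactly the encoder the adaptive protocol emulates.

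Two caveats. First, the exponent: you prove $I_{\mu,i}\le\bigl(2f_X(0)/\sigma\bigr)^2$, while the lemma asserts $\bigl(2f_X(0)/\sigma\bigr)^{2+\delta}$ for every $\delta>0$. When $2f_X(0)\ge\sigma$ your bound implies the stated one, but when $2f_X(0)<\sigma$ it does not --- and in that regime the stated bound is itself false, since the equality case $\cS=(-\infty,\mu]$ gives $I_{\mu,i}=4f_X^2(0)/\sigma^2>\bigl(2f_X(0)/\sigma\bigr)^{2+\delta}$. The $2+\delta$ is an artifact of the imported Kipnis--Duchi machinery, and the paper only ever uses the $\delta\to 0$ limit (namely $\bfI_\mu\le n\bigl(2f_X(0)/\sigma\bigr)^2$ in the proof of Theorem~\ref{thm:lowerbound_adaptive}), which is precisely what you establish; so you have proved the correct and in fact needed form of the lemma, but you should state this discrepancy rather than claim the literal $(2+\delta)$ bound. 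Second, smoothness: you assume $\phi$ differentiable and strictly convex, whereas Theorem~\ref{thm:lowerbound_adaptive} assumes only symmetric log-concavity plus the $\eta$ condition (the paper even invokes it for the Laplace limit, where $\phi'$ is a step function). Your Neyman--Pearson step survives under the weaker hypothesis with one-sided derivatives: superlevel sets of the non-decreasing $\phi'$ are half-lines up to pieces on which $\phi'$ is constant, and on such flat pieces the objective depends only on the mass, so ties can be broken into a half-line without changing the value. A sentence to this effect, together with the justification of differentiation under the integral via $\int_{\bR}|f_X'|=2f_X(0)<\infty$ and dominated convergence, would make the argument complete at the stated level of generality.
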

Using the fact that~\eqref{eq:I_mu_i_bound} holds
for every $\delta>0$, and~\eqref{eq:chainrule_fisher}, 
\[
\bfI_{\mu} \leq n\left(\frac{2 f_X(0)}{\sigma}\right)^{2}.
\]

Substituting this upper bound into the Van Trees inequality \eqref{lower_bound_final-temp1} provides the final lower bound on the \(\mse\):
\begin{align}
    \bE \left[ (\hat{\mu} - \mu)^2 \right] 
     \geq   
    \frac{1}{\, n \left(\frac{2 f_X(0)}{\sigma}\right)^{2} + I_{0} }
     =  
    \frac{1}{\, n \left[\left(\frac{2 f_X(0)}{\sigma}\right)^{2} + \frac{I_{0}}{n} \right]}. \notag
\end{align}

As \(n \to \infty\), the term \(\frac{I_{0}}{n}\) approaches zero (since \(I_{0}\) is assumed to be bounded). Consequently, we obtain:
\begin{align}
   \lim_{n \rightarrow \infty} n \cdot \,\bE \left[ (\hat{\mu} - \mu)^2 \right] 
     \geq   
    \frac{\sigma^2}{4\,f_X(0)^2}.
\end{align}

This concludes the proof.
\qed

\section{Proof of Theorem~\ref{thm:multi-threshold-bias-var-full}}
\label{sec:prf_multi_threshold}

We analyze bias and variance separately.
\subsubsection{Bias}
Clearly,
\[
\bE[\hat{\mu}_{\text{MT}}] = \hat{\mu}_{\Theta}.
\]
We now bound the term
\(
|\bE[X] - \bE[\hat{\mu}_{\text{MT}}]|.
\)

Recall that
\[
\theta_j = \mu + j\Delta\sigma,
\qquad
\bar{\theta}_j = \frac{\theta_j + \theta_{j+1}}{2},
\qquad
\Delta = \frac{2h}{m}.
\]
Using the representation of $\hat{\mu}_{\text{MT}}$ and separating the negative
and positive intervals, we write
\begin{align}
|\bE[X] - \bE[\hat{\mu}_{\text{MT}}]|
&= \Bigg| -\sum_{j=-m}^{-1} \Bigg(\int_{\theta_j}^{\theta_{j+1}} F_{X}(z)\,dz
- F_{X}(\bar{\theta}_j)\Delta\Bigg) \notag\\
&\qquad\qquad+ \sum_{j=1}^{m} \Bigg(\int_{\theta_{j-1}}^{\theta_j} (1-F_{X}(z))\,dz
- (1-F_{X}(\bar{\theta}_j))\Delta\Bigg)\Bigg| \notag\\
&= \Bigg| -\sum_{j=-m}^{-1} \int_{\theta_j}^{\theta_{j+1}} \left( F_{X}(z)
- F_{X}(\bar{\theta}_j) \right) dz  \notag\\
&\qquad\qquad + \sum_{j=1}^{m}\int_{\theta_{j-1}}^{\theta_j} \left( (1-F_{X}(z))
- (1-F_{X}(\bar{\theta}_j)) \right) dz \Bigg| \\
&\leq \sum_{j=-m}^{-1} \int_{\theta_j}^{\theta_{j+1}} |F_{X}(z)
- F_{X}(\bar{\theta}_j)|\,dz 
+ \sum_{j=1}^{m}\int_{\theta_{j-1}}^{\theta_j} |F_{X}(\bar{\theta}_j) - F_{X}(z)|\,dz.
\end{align}

Since $f_X(t)\le L$ for all $t$, the CDF $F_X$ is $L$-Lipschitz in its
argument:
\[
|F_X(u)-F_X(v)| \le L|u-v|.
\]
In our estimator, the argument of $F_X$ is the normalized variable
\(
(z-\mu)/\sigma.
\)
Thus,
\[
\left|
F_X\!\left(\frac{z-\mu}{\sigma}\right)
-
F_X\!\left(\frac{\bar{\theta}_j-\mu}{\sigma}\right)
\right|
\le
L\left|\frac{z-\bar{\theta}_j}{\sigma}\right|.
\]

For $z\in[\theta_j,\theta_{j+1}]$, we have
\[
|z - \bar{\theta}_j|
\le
\frac{\theta_{j+1}-\theta_j}{2}
=
\frac{\Delta\sigma}{2}.
\]
Substituting into the Lipschitz bound gives
\[
\left|
F_X\!\left(\frac{z-\mu}{\sigma}\right)
-
F_X\!\left(\frac{\bar{\theta}_j-\mu}{\sigma}\right)
\right|
\le
\frac{L\Delta}{2}.
\]

Therefore,
\[
\int_{\theta_j}^{\theta_{j+1}}
|F_X(z)-F_X(\bar{\theta}_j)|\,dz
\le
\frac{L\Delta^2}{2},
\]
and the same bound holds for the positive intervals. Summing over
all $2m$ bins yields
\[
|\bE[X] - \bE[\hat{\mu}_{\text{MT}}]|
\le
Lm\Delta^2.
\]

\subsubsection{Variance}

For every $j\in\{-m,\ldots,-1,1,\ldots,m\}$,
\[
\rvar(\bar{I}_j)
=
\frac{F_{X}(\bar{\theta}_j)\bigl(1-F_{X}(\bar{\theta}_j)\bigr)}{K}.
\]
Therefore,
\begin{align}
\rvar(\hat{\mu}_{MT})
 = \sum_{\substack{j=-m \\ j\neq 0}}^{m}
\Delta^2\,\rvar(\bar{I}_j) =
\frac{\Delta^2}{K}
\sum_{\substack{j=-m \\ j\neq 0}}^{m}
F_{X}(\bar{\theta}_j)\bigl(1-F_{X}(\bar{\theta}_j)\bigr).
\end{align}

Since $\theta_j=\mu+j\Delta\sigma$, the term
\[
F_{X}(\bar{\theta}_j)\bigl(1-F_{X}(\bar{\theta}_j)\bigr)
\]
is evaluated at points forming a uniform grid of spacing $\Delta\sigma$.
Hence
\[
\sum_{\substack{j=-m \\ j\neq 0}}^{m}
F_{X}(\bar{\theta}_j)\bigl(1-F_{X}(\bar{\theta}_j)\bigr)\Delta
\]
is a Riemann sum for the integral
\[
\int_{-\infty}^{\infty}
F_{X}(x)\bigl(1-F_{X}(x)\bigr)\,dx,
\]
If we choose the parameters such that  $\Delta\to 0$ and $\theta_m=-\theta_{-m}\to\infty$, then
\[
\rvar(\hat{\mu}_{MT})
=
\frac{\Delta}{K}
\left(
\int_{-\infty}^{\infty}
F_{X}(x)\bigl(1-F_{X}(x)\bigr)\,dx
+
o(1)
\right).
\]
This completes the proof.
\qed

\section{Bounding the Total Variation Distance using the Squared Hellinger Distance}
\label{sec:prf_lemma_bound_sqhellinger}

\begin{lemma}\label{lem:bound_tv_sqhellinger}
For any pair of discrete distributions\footnote{An identical approach works for continuous distributions as well.} $P_{Y}^+,P_{Y}^-$ defined on a common alphabet,
\[
\mathrm{TV}(P_Y^-,P_Y^+)
\ \le\ \sqrt{1 - \Bigl(1 - H^2(P_Y^-,P_Y^+)\Bigr)^2}.
\]
\end{lemma}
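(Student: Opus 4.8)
The plan is to reduce everything to the Bhattacharyya coefficient (affinity) $\rho(P_Y^-,P_Y^+) \triangleq \sum_y \sqrt{P_Y^-(y)\,P_Y^+(y)}$ and then finish with a single Cauchy--Schwarz step. First I would record the algebraic identity linking the squared Hellinger distance, in the paper's normalization with the factor $1/2$, to the affinity:
\[
H^2(P_Y^-,P_Y^+) = \frac{1}{2}\sum_y \left(\sqrt{P_Y^-(y)} - \sqrt{P_Y^+(y)}\right)^2 = 1 - \rho(P_Y^-,P_Y^+),
\]
where the second equality uses that both distributions sum to one. Hence $\rho = 1 - H^2$, and the target inequality is equivalent to the classical bound $\mathrm{TV} \le \sqrt{1-\rho^2}$.

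Next I would expand the total variation distance as $\mathrm{TV}(P_Y^-,P_Y^+) = \tfrac{1}{2}\sum_y |P_Y^-(y) - P_Y^+(y)|$ and factor each summand via the difference-of-squares identity
\[
|P_Y^-(y) - P_Y^+(y)| = \left|\sqrt{P_Y^-(y)} - \sqrt{P_Y^+(y)}\right| \cdot \left(\sqrt{P_Y^-(y)} + \sqrt{P_Y^+(y)}\right).
\]
Applying Cauchy--Schwarz to the two factors gives
\[
\sum_y |P_Y^-(y) - P_Y^+(y)| \le \sqrt{\sum_y \left(\sqrt{P_Y^-(y)} - \sqrt{P_Y^+(y)}\right)^2}\ \cdot\ \sqrt{\sum_y \left(\sqrt{P_Y^-(y)} + \sqrt{P_Y^+(y)}\right)^2}.
\]

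The two sums under the radicals evaluate to $2(1-\rho)$ and $2(1+\rho)$ respectively, by expanding the squares and using normalization of each distribution. Their product is $4(1-\rho^2)$, so the right-hand side equals $2\sqrt{1-\rho^2}$; dividing by $2$ yields $\mathrm{TV} \le \sqrt{1-\rho^2}$, and substituting $\rho = 1-H^2$ delivers the claimed bound $\sqrt{1-(1-H^2)^2}$.

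There is essentially no hard step here: the statement is a routine consequence of Cauchy--Schwarz. The only point requiring genuine care is bookkeeping around the normalization convention --- because the paper's $H^2$ carries the factor $1/2$, I must confirm that $\rho = 1 - H^2$ exactly (and not, say, $1-\tfrac12 H^2$) before the final substitution. For the continuous case flagged in the footnote, the identical chain of identities carries over with sums replaced by integrals against a common dominating measure and Cauchy--Schwarz applied in $L^2$.
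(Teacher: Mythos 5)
Your proof is correct and follows essentially the same route as the paper's: the identity $H^2(P_Y^-,P_Y^+) = 1 - \rho(P_Y^-,P_Y^+)$ for the Bhattacharyya coefficient, the difference-of-squares factorization of $|P_Y^-(y)-P_Y^+(y)|$, and a single Cauchy--Schwarz application whose two factors evaluate to $2(1-\rho)$ and $2(1+\rho)$. Your explicit check that the paper's $1/2$-normalized Hellinger distance gives $\rho = 1 - H^2$ exactly is precisely the bookkeeping point the paper's derivation relies on, so the two arguments match step for step.
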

\begin{proof}
The following relationship between the squared Hellinger distance and the Bhattacharyya coefficient is standard, but we nevertheless provide a proof for completeness. For discrete distributions $P_Y^+,P_{Y}^-$, we have
\begin{align}
H^2 \left(P_Y^-,P_Y^+\right)
&= \tfrac{1}{2}\sum_{y} \left(\sqrt{p_Y^-(y)} - \sqrt{p_Y^+(y)}\right)^2   \notag\\
&= 1 - \rho(P_Y^-,P_Y^+), \label{eq:hellinger-to-rho}
\end{align}
where
\[
\rho(P_Y^-,P_Y^+) \triangleq \sum_y\sqrt{p_Y^-(y)\,p_Y^+(y)}
\]
is the Bhattacharyya coefficient.

Let \(u(y) := \sqrt{p_Y^-(y)}\) and \(v(y) := \sqrt{p_Y^+(y)}\).  
Then
\[
\bigl|p_Y^-(y) - p_Y^+(y)\bigr| = |(u(y) - v(y))(u(y) + v(y))|,
\]
and by the Cauchy--Schwarz inequality,
\[
\sum_y \bigl|p_Y^-(y) - p_Y^+(y)\bigr| \, 
 \le\ \Bigl(\sum_y (u(y) - v(y))^2 \Bigr)^{1/2}
       \Bigl(\sum_y (u(y) + v(y))^2 \Bigr)^{1/2}.
\]
Simplifying,
\begin{align}
    \sum_y \bigl|p_Y^-(y) - p_Y^+(y)\bigr|  &\le \left( 2\bigl(1 - \rho(P_Y^-,P_Y^+)\bigr) \right)^{1/2} \left(  2\bigl(1 + \rho(P_Y^-,P_Y^+)\bigr)\right)^{1/2} &\notag\\
    &=2\sqrt{\,1 - \rho(P_Y^-,P_Y^+)^2\,}.
\end{align}
Therefore,
\[
\mathrm{TV}(P_Y^-,P_Y^+)
\ \le\ \sqrt{\,1 - \rho(P_Y^-,P_Y^+)^2\,}.
\]
Finally, using \eqref{eq:hellinger-to-rho} yields
\[
\mathrm{TV}(P_Y^-,P_Y^+)
\ \le\ \sqrt{1 - \Bigl(1 - H^2(P_Y^-,P_Y^+)\Bigr)^2},
\]
which gives us the lemma.  
\end{proof}

\section{Proofs of Lemmas Appearing in the Proof of Theorem~\ref{thm:final_hellinger_bound}}\label{prf:thm:lemmas_proofs_thm_Tx}



\subsection{Proof of lemma~\ref{lem:hellinger_bound_general}} \label{prf:lem:hellinger_bound_general}
We now define a generalized squared Hellinger distance.  
Let $\cZ$ be a finite set, $\pi_i:\bR \to \cZ$ a (possibly randomized) function, and $f_1,f_2$ be non-negative functions. Then
\[
H^2(\pi_i(X); f_1(x), f_2(x))
\triangleq \tfrac{1}{2}\sum_{z\in\cZ} \left( 
    \sqrt{\int_{-\infty}^\infty f_1(x)\Pr[\pi_i(X)=z \mid X=x]\dx x} -
    \sqrt{\int_{-\infty}^\infty f_2(x)\Pr[\pi_i(X)=z \mid X=x]\dx x}
\right)^2,
\]
which reduces to the usual squared Hellinger distance when $f_1,f_2$ are valid densities.  

Similarly, if $\pi_i$ is a  function with output alphabet $\cZ=\{0,1\}$, the generalized total variation distance is
\begin{align*}
\mathrm{TV}(\pi_i(X); f_1(x), f_2(x))
&\triangleq \tfrac{1}{2}\sum_{z=0}^1 \left|
    \int_{-\infty}^\infty f_1(x)\Pr[\pi_i(X)=z \mid X=x]\dx x -
    \int_{-\infty}^\infty f_2(x)\Pr[\pi_i(X)=z \mid X=x]\dx x
\right|.
\end{align*}

Now we recall our definitions:
\begin{align*}
f_X(x+\epsilon ) = \int_0^{a^{*}_{\epsilon}} g_{+}(x, a) \, \dx a \quad \text{and} \quad 
f_X(x-\epsilon )= \int_0^{a^{*}_{\epsilon}} g_{-}(x, a) \, \dx a.
\end{align*}
Therefore,
\begin{align}
\frac{1}{\sigma} f_X\left(\frac{x - \mu_-}{\sigma}\right) =\frac{1}{\sigma} f_X\left(\frac{x - (\mu-\epsilon\sigma)}{\sigma}\right) =
\frac{1}{\sigma} f _{X}\left(\frac{x - \mu}{\sigma}+\epsilon\right)   
= \int_0^{a^{*}_{\epsilon}} \frac{1}{\sigma} g_+\left(\frac{x - \mu}{\sigma}, a\right) \, \dx a \label{eq:f_g_rel_1}
\end{align}
and likewise,
\begin{align}
\frac{1}{\sigma} f_X\left(\frac{x - \mu_+}{\sigma}\right) = \frac{1}{\sigma} f_X\left(\frac{x - (\mu + \epsilon \sigma)}{\sigma}\right) = \int_0^{a^{*}_{\epsilon}} \frac{1}{\sigma} g_-\left(\frac{x - \mu}{\sigma}, a\right) \, \dx a.
\label{eq:f_g_rel_2}
\end{align}
Using~\eqref{eq:f_g_rel_1}, \eqref{eq:f_g_rel_2}, and~\cite[Lemma 2]{cai2022adaptive},

\begin{align}
H^2(P_{Y_i}^+;P_{Y_i}^-)&= H^2\left( \pi_{i}(x) ; \frac{1}{\sigma} f_X\left(\frac{x- \mu_+}{\sigma}\right), \frac{1}{\sigma}f_X\left(\frac{x -  \mu_- }{\sigma}\right)\right) \notag\\
&\leq    \int_{0}^{a^{*}_{\epsilon}} H^2\left(\pi_i(x); \frac{1}{\sigma} g_+\left(\frac{x- \epsilon \sigma}{\sigma}, a\right),  \frac{1}{\sigma} g_-\left(\frac{x- \epsilon \sigma}{\sigma}, a\right)\right) \dx a \notag\\ 
&  \leq    \int_{0}^{a^{*}_{\epsilon}}  \left(\frac{\sqrt{R_a} - 1}{\sqrt{R_a} + 1} \right) \TV\left(\pi_i(x);  \frac{1}{\sigma}g_+\left(\frac{x-\mu}{\sigma}, a\right), \frac{1}{\sigma}g_-\left(\frac{x-\mu}{\sigma}, a\right)\right) \, \dx a. 
\label{eq:bound_H2_tv}
\end{align}
Now for any $R_a \geq 1$ that satisfies
\[
 \frac{1}{R_a} \leq \frac{g_+(x)}{g_-(x)} \leq R_a \quad \text{ for all } x \text{ such that } g_-(x) > 0 .
\]
We can write,
\begin{align*}
&\TV\left(\pi_i(x);  \frac{1}{\sigma}g_+\left(\frac{x-\mu}{\sigma}, a\right), \frac{1}{\sigma}g_-\left(\frac{x-\mu}{\sigma}, a\right)\right) \,\\
&\qquad \leq \frac{1}{2} \int_{-\infty}^\infty\sum_{z=0}^1 \Pr[\pi_i(x) = z|X=x] 
\left| \frac{1}{\sigma} g_+\left(\frac{x - \mu}{\sigma}, a\right) - \frac{1}{\sigma} g_-\left(\frac{x - \mu}{\sigma}, a\right) \right| \boldsymbol{1}_{\{\frac{x-\mu}{\sigma}\in A_\epsilon(a)\}} \, \dx x \\
&\qquad= \frac{1}{2} \int_{-\infty}^\infty 
\left| \frac{1}{\sigma} g_+\left(\frac{x - \mu}{\sigma}, a\right) - \frac{1}{\sigma} g_-\left(\frac{x - \mu}{\sigma}, a\right) \right|\boldsymbol{1}_{\{\frac{x-\mu}{\sigma}\in A_\epsilon(a)\}} \, \dx x \\
&\qquad = \frac{1}{2} \int_{t \in A_\epsilon(a)} 
\left|  g_+\left(t, a\right) -  g_-\left(t, a\right) \right| \, \dx t \\
&\qquad = \frac{1}{2}  |A_\epsilon(a)| 
\leq \frac{1}{2} (2 x_a) 
=  x_a.
\end{align*}
Where the third step follows from a change of variable, and the penultimate step uses the definition of $x_a$.
Now using this bound on \(\TV\) in ~\eqref{eq:bound_H2_tv}, we have,

\begin{align}
H^2(P_{Y_i}^+;P_{Y_i}^-) \leq    \int_{0}^{a^{*}_{\epsilon}}  \left(\frac{\sqrt{R_a} - 1}{\sqrt{R_a} + 1} \right) x_a, \dx a. 
\end{align}
This completes the proof.
\qed


\subsection{Proof of Lemma~\ref{lem:upper_bound_ratio}} \label{prf:lem:upper_bound_ratio}

Since \( f_X(x) = e^{-\phi(x)} \), we have
\[
\frac{f_X(x+\epsilon)}{f_X(x-\epsilon)} 
= e^{-\phi(x+\epsilon) + \phi(x-\epsilon)} 
= e^{-\delta(x,\epsilon)},
\]
where
\[
\delta(x,\epsilon)\triangleq \phi(x+\epsilon)-\phi(x-\epsilon).
\]
Therefore,
\[
R_a 
= \sup_{x \in A_{\epsilon}(a)} e^{-\delta(x,\epsilon)} 
= e^{-\inf_{x \in A_{\epsilon}(a)} \delta(x,\epsilon)}.
\]

From the strict convexity of \(\phi\) and the definition of \(\delta\):
\begin{itemize}
    \item \(\delta(x,\epsilon)\) is strictly increasing on \([0,\infty)\),
    \item Evenness of \(\phi\) implies \(\delta(x,\epsilon)\) is odd: \(\delta(-x,\epsilon) = -\delta(x,\epsilon)\).
\end{itemize}

Thus \(\delta(x,\epsilon)\) changes sign exactly once at \(x=0\), being negative for \(x<0\) and positive for \(x>0\).  
Since \(A_{\epsilon}(a)\) is symmetric, the smallest value of \(\delta(x,\epsilon)\) over \(A_{\epsilon}(a)\) occurs at the left endpoint, which equals \(-|\delta(x_a(\epsilon),\epsilon)|\), where \(x_a(\epsilon) = \sup A_{\epsilon}(a)\).

Consequently,
\[
\inf_{x \in A_{\epsilon}(a)} \delta(x,\epsilon) = -|\delta(x_a(\epsilon),\epsilon)|,
\]
and
\[
R_a = e^{-\inf_{x \in A_{\epsilon}(a)} \delta(x,\epsilon)} 
= e^{|\delta(x_a(\epsilon),\epsilon)|}.
\]

\( R_a = e^{\delta(x_a(\epsilon), \epsilon)} \), so
\[
\frac{\sqrt{R_a} - 1}{\sqrt{R_a} + 1} = \tanh\left( \frac{\delta(x_a(\epsilon), \epsilon)}{4} \right).
\]
Since \( \tanh(t) \leq t \) for all \( t \geq 0 \),
\begin{align}
\frac{\sqrt{R_a} - 1}{\sqrt{R_a} + 1} \, x_a(\epsilon) 
\leq \frac{\delta(x_a(\epsilon), \epsilon)}{4} \, x_a(\epsilon). \label{eq:Ra_ratio_expression}
\end{align}

Since $\phi$ is convex and differentiable, we can use the first-order condition for convexity to get 
\[
2\epsilon \cdot \phi'(x - \epsilon) \leq \phi(x + \epsilon) - \phi(x - \epsilon) \leq 2\epsilon \cdot \phi'(x + \epsilon).
\]
Therefore,
\[
\delta(x_a(\epsilon), \epsilon) \leq 2\epsilon \, \phi'(x_a(\epsilon) + \epsilon),
\]
which yields
\[
\frac{\delta(x_a(\epsilon), \epsilon)}{4} \, x_a(\epsilon)
\leq \frac{\epsilon \, \phi'(x_a(\epsilon) + \epsilon) \, x_a(\epsilon)}{2}.
\]
Therefore,~\eqref{eq:Ra_ratio_expression} can  be written as,

\[
\frac{\sqrt{R_a} - 1}{\sqrt{R_a} + 1} \, x_a(\epsilon) \leq \frac{\epsilon}{2} \, \phi'(x_a(\epsilon) + \epsilon) \, x_a(\epsilon).
\]
Substituting this pointwise bound into the integral directly yields the result.
\qed


\section{Proofs of Lemmas \ref{lem:sigmamuconvergence} and \ref{lem:Gamma_bound}}\label{prf:sigmamuconvergence}
\subsection{Proof of Lemma \ref{lem:sigmamuconvergence}}
To prove the convergence of 
\(\sqrt{n_3}(\mu - \hat{\mu}_c)(\sigma - \hat{\sigma}_c)\), 
we first state two key Lemmas that show the convergence of 
\((\hat{\mu}_c - \mu)\) and \((\hat{\sigma}_c - \sigma)\), respectively.

\begin{lemma} \label{lem:mu_bound}
Let $\hat{\mu}_c$ be the estimate of $\mu$ using the non-adaptive scheme described in 
Sec.~\ref{sec:nonadaptivescheme}. Then,
\begin{align}
\Pr\!\left(
\left| \hat{\mu}_c - \mu \right|
\geq 
|\mu|
\left(
\frac{4\log n_3}{\sqrt{n_3} - 2\log n_3}
\right)
\right)
\le 
2 \exp\!\left( \frac{-\tilde C_1 \log n_3}{3} \right)
\Big[
1 + \exp\!\left( \frac{-\eta \log n_3}{3} \right)
\Big],
\end{align}

Here, $\tilde{C_1}$ and $\eta$ depend only on the underlying distribution class and 
are independent of  $n_1,n_2,n_3$.
\end{lemma}

\begin{lemma}\label{lem:sigma_bound}
Let $\hat{\sigma}_c$ be the estimate of $\sigma$ using the non-adaptive scheme described 
in Sec.~\ref{sec:nonadaptivescheme}. Then,
\begin{equation}
\begin{aligned}
\Pr \left[
\left| \sigma - \hat{\sigma}_c \right|
\geq 
|\sigma|
\left(
\frac{2 \log n_3}{\sqrt{n_3} - 2 \log n_3}
\right)
\right]
\le 
2 \exp\!\left( \frac{-\tilde C_1 \log n_3}{3} \right)
\left[
1 + \exp\!\left( \frac{-\eta \log n_3}{3} \right)
\right].
\end{aligned}
\end{equation}
As before, the constants $\tilde{C_1}$ and $\eta$ depend only on the distribution  and not on $n_1,n_2,n_3$.
\end{lemma}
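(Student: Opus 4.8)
The plan is to reduce the bound on $|\sigma - \hat\sigma_c|$ to a concentration statement about the two empirical CDFs $F_1,F_2$ and then transfer it through the quantile map $F_X^{-1}$. Write $\beta_i^\star \triangleq (\theta_i-\mu)/\sigma$ for the almost-sure limits of $\alpha_i$, so that $\beta_1^\star-\beta_2^\star=(\theta_1-\theta_2)/\sigma$. From $\hat\sigma_c=(\theta_1-\theta_2)/(\alpha_1-\alpha_2)$ and $\sigma=(\theta_1-\theta_2)/(\beta_1^\star-\beta_2^\star)$ I first record the exact identity
\[
\frac{\hat\sigma_c-\sigma}{\sigma}=\frac{(\beta_1^\star-\beta_2^\star)-(\alpha_1-\alpha_2)}{\alpha_1-\alpha_2},
\]
so that controlling the relative error of $\hat\sigma_c$ amounts to bounding the numerator $|(\alpha_1-\alpha_2)-(\beta_1^\star-\beta_2^\star)|$ while keeping the denominator $|\alpha_1-\alpha_2|$ away from $0$.

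Next I would establish concentration of $F_1,F_2$. Each $F_i$ is the average of $n_i=n_3/\log n_3$ \iid\ $\mathrm{Bernoulli}(p_i)$ variables with $p_i=F_X(\beta_i^\star)$, so the multiplicative Chernoff bound gives $P[\,|F_i-p_i|\ge\gamma p_i\,]\le 2\exp(-n_i p_i\gamma^2/3)$. The essential calibration is to take $\gamma\asymp(\log n_3)/\sqrt{n_3}$: since $n_i=n_3/\log n_3$, the exponent $n_i p_i\gamma^2$ is then $\Theta(\log n_3)$, producing a tail of exactly the form $\exp(-\tilde C_1\log n_3/3)$ with $\tilde C_1$ collecting the constants $p_1,p_2$ and $f_X,\theta_i,\mu,\sigma$. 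The prefactor $2\exp(\cdot)$ together with the bracket $[1+\exp(-\eta\log n_3/3)]$ then arises from a union bound over the two bad events for $F_1$ and $F_2$, whose exponents differ by the constant $\eta$.

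Third, I would transfer from CDF-space to quantile-space. Because $\alpha_i=F_X^{-1}(F_i)$, the mean value theorem gives $\alpha_i-\beta_i^\star=(F_i-p_i)/f_X(\xi_i)$ for some $\xi_i$ between $\alpha_i$ and $\beta_i^\star$; on the good event $F_i$ is confined to a fixed neighborhood of $p_i\in(0,1)$, so $f_X(\xi_i)$ is bounded below and $|\alpha_i-\beta_i^\star|\lesssim|F_i-p_i|$. Summing the two coordinates with the triangle inequality bounds the numerator by $\tfrac{2\log n_3}{\sqrt{n_3}}\,|\beta_1^\star-\beta_2^\star|$, and the same event forces $|\alpha_1-\alpha_2|\ge|\beta_1^\star-\beta_2^\star|(1-\tfrac{2\log n_3}{\sqrt{n_3}})$. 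Substituting both into the displayed identity and using $\tfrac{\gamma}{1-\gamma}$ with $\gamma=\tfrac{2\log n_3}{\sqrt{n_3}}$ yields precisely the stated threshold $\tfrac{2\log n_3}{\sqrt{n_3}-2\log n_3}$.

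I expect the main obstacle to be this last transfer through $F_X^{-1}$, which is only locally Lipschitz: where $f_X\to0$ its derivative blows up, so the argument must first confine $F_i$ to a neighborhood of $p_i$ on which $f_X$ is bounded below — exactly what the good event supplies — before invoking the mean value bound. The ratio structure of $\hat\sigma_c$ compounds this, since the denominator $\alpha_1-\alpha_2$ must simultaneously be kept away from $0$; tracking both requirements together is what pins down the constant $2$ and the $\sqrt{n_3}-2\log n_3$ denominator. The companion bound for $\hat\mu_c$ in Lemma~\ref{lem:mu_bound} follows the same template, with the factor $4$ in place of $2$ reflecting the different algebraic dependence of $\hat\mu_c$ on $(\alpha_1,\alpha_2)$.
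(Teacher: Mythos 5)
Your proposal matches the paper's proof essentially step for step: the exact identity $(\hat\sigma_c-\sigma)/\sigma = \bigl((\beta_1^\star-\beta_2^\star)-(\alpha_1-\alpha_2)\bigr)/(\alpha_1-\alpha_2)$, multiplicative Chernoff concentration of $F_1,F_2$ transferred through $F_X^{-1}$ (the paper's Lemma~\ref{lem:alpha_G_bound}), the calibration $\epsilon_1=\log n_3/\sqrt{n_3}$ with $n_1=n_2=n_3/\log n_3$ giving exponent $\Theta(\log n_3)$, and the union bound over the two $\alpha_i$ bad events producing the bracket $\bigl[1+\exp(-\eta\log n_3/3)\bigr]$ and the threshold $2\log n_3/(\sqrt{n_3}-2\log n_3)$. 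Your only deviation is cosmetic: you invoke the mean value theorem for $F_X^{-1}$ where the paper uses a first-order Taylor expansion and absorbs the $\mathcal{O}(G_i^2)$ remainder by a factor of $2$, which is a slightly cleaner handling of the same step.
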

Since the right-hand sides in Lemmas~\ref{lem:mu_bound} and~\ref{lem:sigma_bound} 
converge to zero as $n_3 \to \infty$, both 
$\hat{\mu}_c$ and $\hat{\sigma}_c$ are consistent \emph{in probability} under the non-adaptive scheme.
The proof of these two lemmas is given in Appendix~\ref{prf:lem:mu_sigma_bound}.

\bigskip

Now we begin with some definitions. Let us define the events: 
\begin{align*}
E_1 & \triangleq 
\left\{ 
\left| \sigma - \hat{\sigma}_c \right|
\geq 
|\sigma| 
\left( 
\frac{2 \log n_3}{\sqrt{n_3} - 2 \log n_3} 
\right)
\right\}, \\
E_2 & \triangleq 
\left\{ 
\left| \mu - \hat{\mu}_c \right|
\geq 
|\mu| 
\left( 
\frac{4 \log n_3}{\sqrt{n_3} - 2 \log n_3} 
\right)
\right\}.
\end{align*}

If \(A\), \(B\), \(C\), and \(D\) are random variables with \(A < B\) and \(C < D\), then 
\(AC < BD\). Thus,
\[
\{AC \geq BD\} \subset \{A \geq B\} \cup \{C \geq D\},
\]
and by the union bound,
\[
P(AC \geq BD) \leq P(A \geq B) + P(C \geq D).
\]

Using this, we obtain
\begin{equation*}
\Pr \left[
\sqrt{n_3}
\left| \sigma - \hat{\sigma}_c \right|
\left| \mu - \hat{\mu}_c \right|
\geq 
\frac{8 |\mu \sigma| (\log n_3)^2 / \sqrt{n_3}}
{\left( 1 - 2 (\log n_3)/\sqrt{n_3} \right)^2}
\right]
\leq 
P[E_1] + P[E_2].
\end{equation*}

Using the bounds for \(P(E_1)\) and \(P(E_2)\) from 
Lemmas~\ref{lem:mu_bound} and~\ref{lem:sigma_bound}, we obtain:
\begin{equation*}
\Pr \left[
\sqrt{n_3}
\left| \sigma - \hat{\sigma}_c \right|
\left| \mu - \hat{\mu}_c \right|
\geq 
\frac{8 |\mu \sigma| (\log n_3)^2 / \sqrt{n_3}}
{\left( 1 - 2 (\log n_3)/\sqrt{n_3} \right)^2}
\right]
\leq 
4 \exp \left( \frac{ -\tilde{C}_1\log n_3}{3} \right)
\left[
1 + \exp \left( \frac{ -\eta \log n_3}{3} \right)
\right].
\end{equation*}

As \(n_3 \to \infty\), this probability approaches zero, implying convergence 
\emph{in probability} for the non-adaptive scheme.
\qed
\subsection{Proof of Lemma \ref{lem:Gamma_bound}}
Let \( \cS \) be a Borel measurable set, and define \( Y_i \) as:
    \[
    Y_i =
    \begin{cases} 
        1 & \text{if } X_i \in \cS , \\
        0 & \text{otherwise}.
    \end{cases}
    \]

\par Now the Fisher information for a parameter \( \mu \) is defined as:
\[
I_\mu = \bE\left[\left(\frac{\partial}{\partial \mu} \log P(Y_i | \mu)\right)^2\right].
\]

The probability that \( Y_i = 1 \) is:
\[
P(Y_i = 1 | \mu) = \int_\cS f_{X, \mu, \sigma}(x) dx.
\]
By  regularity of the Lebesgue measure,  
we can approximate any Borel  measurable set  $\cS$ using a finite number of disjoint intervals:
\[
\cC =\bigcup_{j=1}^k [a_j^-, a_j^+],
\]
with 
\[
-\infty  \leq  a_1^-<a_1^+<   a_2^-<a_2^+  < \cdots  <  a_k^+ \leq \infty,
\]
such that
\[
\int_{ \cS \Delta \cC} \dx x \leq \epsilon', 
\]
where $\Delta$ denotes symmetric difference.

Therefore,
\[
P(Y_i = 1 | \mu) = \sum_{j=1}^k \int_{a_j^-}^{a_j^+} \frac{1}{\sigma} f_X\left(\frac{x - \mu}{\sigma}\right) dx +g_k(\epsilon'),
\]
where $g_k(\epsilon')$ is a quantity that can be made arbitrarily close to $0$ by choosing $k$ large enough. 

Using the substitution \( u = \frac{x - \mu}{\sigma} \), \( x = \sigma u + \mu \), and \( dx = \sigma du \), the integral becomes
\[
P(Y_i = 1 | \mu) = \sum_{j=1}^k \int_{\frac{a_j^- - \mu}{\sigma}}^{\frac{a_j^+ - \mu}{\sigma}} f_X(u) du +g_k(\epsilon').
\]
Recall that \( F_X \) as the CDF corresponding to \( f_X \). Then
\[
P(Y_i = 1 | \mu) = \sum_{j=1}^k \left[F_X\left(\frac{a_j^+ - \mu}{\sigma}\right) - F_X\left(\frac{a_j^- - \mu}{\sigma}\right)\right] +g_k(\epsilon').
\]
Similarly,
\[
P(Y_i = 0 | \mu) = 1 - P(Y_i = 1 | \mu).
\]
Taking the derivative of \( P(Y_i = 1 | \mu) \) w.r.t.\ \( \mu \), we obtain
\[
\frac{\partial}{\partial \mu} P(Y_i = 1 | \mu) 
= \sum_{j=1}^k \frac{\partial}{\partial \mu} \biggl[F_X\Bigl(\frac{a_j^+ - \mu}{\sigma}\Bigr) - F_X\Bigl(\frac{a_j^- - \mu}{\sigma}\Bigr)\biggr] +g_k'(\epsilon').
\]
Using the chain rule,
\[
\frac{\partial}{\partial \mu} F_X\Bigl(\frac{a_j^\pm - \mu}{\sigma}\Bigr) 
= -\frac{1}{\sigma} f_X\Bigl(\frac{a_j^\pm - \mu}{\sigma}\Bigr).
\]

Thus:
\[
\frac{\partial}{\partial \mu} P(Y_i = 1 | \mu) 
= -\frac{1}{\sigma} \sum_{j=1}^k \biggl[f_X\Bigl(\frac{a_j^+ - \mu}{\sigma}\Bigr) - f_X\Bigl(\frac{a_j^- - \mu}{\sigma}\Bigr)\biggr]+g_k'(\epsilon').
\]

Since \( Y_i \) is a binary random variable, the Fisher information expands as:
\[
I_{\mu,i} 
= P(Y_i = 1 | \mu) \left(\frac{\partial}{\partial \mu} \log P(Y_i = 1 | \mu)\right)^2 
+ P(Y_i = 0 | \mu) \left(\frac{\partial}{\partial \mu} \log P(Y_i = 0 | \mu)\right)^2 +h_k(\epsilon').
\]
for some $h_k(\epsilon')$ that can be made arbitrarily close to $0$ by choosing $k$ large enough.

By substituting the probabilities and derivatives, and then assuming:
\[
\Delta F_j(\mu) 
= F_X\Bigl(\frac{a_j^+ - \mu}{\sigma}\Bigr) - F_X\Bigl(\frac{a_j^- - \mu}{\sigma}\Bigr),
\]
\[
\Delta f_j(\mu) 
=  \frac{1}{\sigma}\left[ f_X\Bigl(\frac{a_j^+ - \mu}{\sigma}\Bigr) - f_X\Bigl(\frac{a_j^- - \mu}{\sigma}\Bigr)\right],
\]
we write
\[
P(Y_i = 1 | \mu) = \sum_{j=1}^k \Delta F_j(\mu) +g_k(\epsilon'),
\quad
P(Y_i = 0 | \mu) = 1 - \sum_{j=1}^k \Delta F_j(\mu)+g_k'(\epsilon').
\]
Therefore
\[
\frac{\partial}{\partial \mu} \log P(Y_i = 1 | \mu) 
= \frac{-\frac{1}{\sigma} \sum_{j=1}^k \Delta f_j(\mu)}{\sum_{j=1}^k \Delta F_j(\mu)}+g_k'(\epsilon')
\]
and
\[
\frac{\partial}{\partial \mu} \log P(Y_i = 0 | \mu) 
= \frac{\frac{1}{\sigma} \sum_{j=1}^k \Delta f_j(\mu)}{1 - \sum_{j=1}^k \Delta F_j(\mu)}+g_k'(\epsilon').
\]

Substituting these into \( I_{\mu,i} \), we obtain:
\begin{equation}
I_{\mu,i}
= \frac{\Bigl(\frac{1}{\sigma} \sum_{j=1}^k \Delta f_j(\mu)\Bigr)^2}{
\Bigl(\sum_{j=1}^k \Delta F_j(\mu)\Bigr)\Bigl(1 - \sum_{j=1}^k \Delta F_j(\mu)\Bigr)}+h_k(\epsilon').
\label{mu_bound_fisher}
\end{equation}
We now make use of the following result, which is a corollary of~\cite[Lemma 11]{kipnis2022mean}:
\begin{lemma}
Assume the hypotheses of Theorem~\ref{thm:lowerbound_adaptive}. For every $\delta>0$,
\[
\frac{\Bigl(\frac{1}{\sigma} \sum_{j=1}^k \Delta f_j(\mu)\Bigr)^{2+\delta}}{
\Bigl(\sum_{j=1}^k \Delta F_j(\mu)\Bigr)^{1+\delta}\Bigl(1 - \sum_{j=1}^k \Delta F_j(\mu)\Bigr)^{1+\delta}} \leq \Gamma_\delta(\mu).
\]
\end{lemma}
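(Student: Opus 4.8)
The plan is to pass to the standardized coordinate $v=(x-\mu)/\sigma$, writing $v_j^{\pm}:=(a_j^{\pm}-\mu)/\sigma$, so that the approximating set $\cC=\bigcup_{j=1}^{k}[a_j^-,a_j^+]$ becomes $S=\bigcup_{j=1}^{k}(v_j^-,v_j^+)$. In these coordinates $\sum_{j}\Delta F_j(\mu)=\int_S f_X=:p$, while (up to the $1/\sigma$ bookkeeping) $\sum_{j}\Delta f_j(\mu)$ equals the signed endpoint sum $\sum_j\bigl[f_X(v_j^+)-f_X(v_j^-)\bigr]$. The first step is to convert this endpoint sum into an integral over $S$: since $f_X=e^{-\phi}$ gives $f_X'=-\phi'f_X$, the fundamental theorem of calculus yields
\[
\sum_j\bigl[f_X(v_j^+)-f_X(v_j^-)\bigr]=\int_S f_X'(t)\,\dx t=-\int_S \phi'(t)\,f_X(t)\,\dx t ,
\]
so that both the numerator and the denominator of the target ratio depend on the quantizer only through the set $S$ and the weight $f_X$.

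Next I would reduce an arbitrary $S$ of fixed mass $p$ to a single half-line by a Neyman--Pearson (bathtub) argument, which is the content invoked from \cite[Lemma 11]{kipnis2022mean}. To maximize $\bigl|\int_S \phi' f_X\bigr|$ subject to $\int_S f_X=p$, the extremal acceptance region is a super-level set of the ratio $(-f_X')/f_X=\phi'$; because $f_X$ is log-concave, $\phi=-\log f_X$ is convex and $\phi'$ is nondecreasing, so every such super-level set is a half-line $[\tau,\infty)$ (the opposite sign giving $(-\infty,-\tau]$). Evaluating on $[\tau,\infty)$ with $p=F_X(-\tau)$ gives $\int_\tau^\infty(-f_X')=f_X(\tau)$, whence, using $1-p=F_X(\tau)$ and the symmetry of $f_X$,
\[
\frac{\bigl(\sum_j[f_X(v_j^+)-f_X(v_j^-)]\bigr)^2}{p\,(1-p)}\ \le\ \frac{f_X^2(\tau)}{F_X(\tau)\,F_X(-\tau)}=\eta(\tau).
\]

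The final step is to invoke the hypothesis of Theorem~\ref{thm:lowerbound_adaptive}: since $\eta$ in \eqref{eq:eta_condn} is non-increasing in $|x|$ and uniquely maximized at $x=0$, we have $\eta(\tau)\le\eta(0)=4f_X^2(0)$ for every admissible threshold $\tau$. Reinstating the $1/\sigma$ factors turns this into $4f_X^2(0)/\sigma^2$, and carrying the identical extremal computation for the $(2+\delta)$-homogeneous functional---the form in which \cite[Lemma 11]{kipnis2022mean} is stated---produces the claimed bound with $\Gamma_\delta(\mu)=(2f_X(0)/\sigma)^{2+\delta}$, the exponent $\delta>0$ serving only as a regularization that is sent to $0$ in the proof of Lemma~\ref{lem:Gamma_bound}. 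I expect the main obstacle to be the rigorous justification of the extremal reduction for a finite union of intervals: verifying that replacing $S$ by the half-line of equal $f_X$-mass can only enlarge the ratio, and confirming that the monotonicity-and-unique-maximizer condition on $\eta$ is precisely what forces the worst case over thresholds to occur at $\tau=0$. Once this extremal structure is in place, the $\sigma$-scaling and the passage between the $\delta=0$ and $(2+\delta)$ forms are routine.
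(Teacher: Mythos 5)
Your core reduction is sound and, in fact, more informative than what the paper offers: the paper proves nothing here, dispatching the statement with a bare citation to \cite[Lemma 11]{kipnis2022mean} and never even defining $\Gamma_\delta(\mu)$. Your chain --- fundamental theorem of calculus to write $\sum_j\bigl[f_X(v_j^+)-f_X(v_j^-)\bigr]=-\int_S\phi' f_X$, a Neyman--Pearson reduction of an arbitrary finite union of intervals of fixed $f_X$-mass $p$ to a half-line $[\tau,\infty)$ (legitimate because strict convexity makes $\phi'$ increasing, so super-level sets are half-lines, and the sup over Borel sets dominates the sup over interval unions), the evaluation $\int_\tau^\infty \phi' f_X = f_X(\tau)$, and the symmetry identity $p(1-p)=F_X(\tau)F_X(-\tau)$ --- correctly yields $D^2/(p(1-p)) \le \eta(\tau)\le\eta(0)=4f_X^2(0)$, which is precisely the $\delta=0$ content of the cited lemma and the mechanism the citation hides.

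The genuine gap is the step you dismiss as routine: the passage to exponent $2+\delta$. The functional $|D|^{2+\delta}/\bigl(p(1-p)\bigr)^{1+\delta}$ is \emph{not} a power of the Fisher-information ratio $D^2/(p(1-p))$ (the denominator exponent is $1+\delta$, not $1+\delta/2$), so the ``identical extremal computation'' does not return $(2f_X(0))^{2+\delta}$. Concretely, take $S=[\mu,\infty)$, i.e.\ $\tau=0$: then $|D|=f_X(0)$, $p=1/2$, and the ratio equals $4^{1+\delta}f_X(0)^{2+\delta}=(2f_X(0))^{2+\delta}\cdot 2^{\delta}$, strictly exceeding your claimed $\Gamma_\delta$. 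Worse, after the half-line reduction the quantity to be maximized over thresholds is $\eta(\tau)^{1+\delta}f_X(\tau)^{-\delta}$, and since $f_X(\tau)^{-\delta}$ is \emph{increasing} in $|\tau|$, the hypothesis that $\eta$ is non-increasing in $|x|$ no longer forces the maximum to $\tau=0$; you need a separate tail argument (convexity of $\phi$ plus the polynomial upper bound on $\phi$ gives polynomial growth of $\phi'$, while $1-F_X(\tau)\gtrsim f_X(\tau)/\phi'(\tau)$, so $\eta(\tau)^{1+\delta}f_X(\tau)^{-\delta}\lesssim f_X(\tau)\,\phi'(\tau)^{1+\delta}\to 0$) merely to obtain a \emph{finite} $\Gamma_\delta(\mu)$, together with the check that $\Gamma_\delta\to 4f_X^2(0)$ as $\delta\to 0$ --- which is all the downstream argument in Lemma~\ref{lem:Gamma_bound} actually uses, since the paper sends $\delta\to 0$ there. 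So your proof is repairable: define $\Gamma_\delta(\mu)$ as the finite supremum $\sup_\tau f_X(\tau)^{2+\delta}/\bigl(F_X(\tau)F_X(-\tau)\bigr)^{1+\delta}$ with the appropriate $\sigma$-scaling (the paper's own $\sigma$ bookkeeping in $\Delta f_j$ is itself inconsistent, so do not try to match it literally). But the closed form $(2f_X(0)/\sigma)^{2+\delta}$ is false, and the $\delta>0$ case is exactly the non-routine part.
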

Using this in \eqref{mu_bound_fisher} completes the proof of Lemma~\ref{lem:Gamma_bound} \qed

\section{Proof of  Lemmas \ref{lem:mu_bound} and \ref{lem:sigma_bound}.}\label{prf:lem:mu_sigma_bound}

The proofs of Lemmas \ref{lem:mu_bound} and \ref{lem:sigma_bound} rely on the following lemma, which we prove first.
\begin{lemma}\label{lem:alpha_G_bound}
 For \( i = 1, 2 \), with \( \theta_i \in \mathbb{R},~ \epsilon_i > 0 \), and \( \alpha_i \) as defined in~\eqref{eq:alpha1alpha2}, the following bound holds:
\begin{align}
P \Bigg[ \left| \alpha_i -  \frac{\theta_i - \mu}{\sigma} \right| \geq  \epsilon_i\left(  \frac{\theta_i - \mu }{\sigma}\right)  \Bigg] \leq 2 \exp \left( \frac{- \epsilon_i ^2 \Tilde{C_i} n_i}{3} \right). \notag
\end{align}   
Here, \( \Tilde{C}_i,\) is a constant that depends only on $F_X$, \( \theta_i \), \( \sigma \), and \( \mu \), but is independent of $n_1,n_2,n_3$.
\end{lemma}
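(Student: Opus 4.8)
The plan is to reduce the deviation of $\alpha_i = F_X^{-1}(F_i)$ to a deviation of the empirical frequency $F_i$, and then invoke a multiplicative Chernoff bound. Recall that $F_i = \frac{1}{n_i}\sum_j Y_j$ is the average of $n_i$ i.i.d.\ Bernoulli random variables (the samples $X_j$ are i.i.d., and each $Y_j$ is a deterministic thresholding of $X_j$), each equal to $1$ with probability $p_i \triangleq F_X(c_i)$, where $c_i \triangleq \frac{\theta_i-\mu}{\sigma}$; by construction $c_i = F_X^{-1}(p_i)$. The multiplicative Chernoff bound then gives, for every $0<\delta<1$,
\[
P\bigl[\,|F_i - p_i| \geq \delta p_i\,\bigr] \leq 2\exp\!\left(-\frac{\delta^2 n_i p_i}{3}\right).
\]

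First I would transfer the target event to an event about $F_i$. Since $F_X$ is continuous and strictly increasing (so $F_X^{-1}$ is well-defined and monotone), the event $\{|\alpha_i - c_i| \geq \epsilon_i |c_i|\}$ is contained in $\{F_i \geq F_X(c_i + \epsilon_i|c_i|)\}\cup\{F_i \leq F_X(c_i - \epsilon_i|c_i|)\}$. Applying the mean value theorem to $F_X$ on each side, there is a point $\xi$ in the interval $[\,c_i-\epsilon_i|c_i|,\,c_i+\epsilon_i|c_i|\,]$ with
\[
\bigl|F_X(c_i \pm \epsilon_i|c_i|) - p_i\bigr| = f_X(\xi)\,\epsilon_i|c_i| \;\geq\; m_i\,\epsilon_i|c_i|,
\]
where $m_i \triangleq \inf\{f_X(x): |x-c_i|\le \epsilon_i|c_i|\} > 0$, positivity following from the continuity and non-vanishing of $f_X$. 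Hence $\{|\alpha_i - c_i| \geq \epsilon_i |c_i|\} \subseteq \{|F_i - p_i| \geq m_i \epsilon_i |c_i|\}$.

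Next I would choose $\delta = \delta_i \triangleq m_i \epsilon_i |c_i| / p_i$, so that $\delta_i p_i = m_i\epsilon_i|c_i|$, and substitute into the Chernoff bound:
\[
P\bigl[\,|\alpha_i - c_i| \geq \epsilon_i|c_i|\,\bigr]
\le 2\exp\!\left(-\frac{m_i^2\,\epsilon_i^2\,c_i^2\,n_i}{3\,p_i}\right)
= 2\exp\!\left(-\frac{\epsilon_i^2\,\tilde{C}_i\,n_i}{3}\right),
\]
with $\tilde{C}_i \triangleq m_i^2 c_i^2 / p_i$. This constant depends only on $F_X$, $\theta_i$, $\sigma$, and $\mu$ (through $c_i$, $p_i$, and the local infimum $m_i$), and is independent of $n_1,n_2,n_3$, as claimed.

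I expect the main subtlety to be the passage through the nonlinear map $F_X^{-1}$: one must lower-bound $f_X$ by a strictly positive constant on the relevant interval and ensure $\delta_i<1$ so that the Chernoff bound applies in its stated form. Both are harmless in the regime of interest, since in the application $\epsilon_i\to 0$ (Lemmas~\ref{lem:mu_bound} and~\ref{lem:sigma_bound} take $\epsilon_i\sim \log n_3/\sqrt{n_3}$): the interval shrinks to $\{c_i\}$, $m_i$ may be taken as any fixed lower bound on $f_X$ near $c_i$, and $\delta_i$ is eventually smaller than $1$. The degenerate case $c_i=0$ renders both sides of the claimed inequality vacuous and needs no separate treatment.
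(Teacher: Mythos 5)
Your proof is correct, and it reaches the Chernoff endgame by a genuinely different reduction than the paper. The paper writes $\alpha_i = F_X^{-1}(\bE[F_i] + G_i)$ with $G_i = F_i - \bE[F_i]$, Taylor-expands $F_X^{-1}$ around $\bE[F_i]$, and then asserts that ``for sufficiently large $n_i$'' the random remainder $\cO(G_i^2)$ is dominated by the linear term $G_i/f_X(F_X^{-1}(\bE[F_i]))$ --- a step that is not fully rigorous as stated, since $G_i$ is random and the domination only holds on a high-probability event that the paper does not isolate. You instead transfer the event exactly: by monotonicity of $F_X^{-1}$, the bad event for $\alpha_i$ is contained in $\{F_i \geq F_X(c_i + \epsilon_i|c_i|)\} \cup \{F_i \leq F_X(c_i - \epsilon_i|c_i|)\}$, and the mean value theorem with the local infimum $m_i$ of $f_X$ converts this into $\{|F_i - p_i| \geq m_i\,\epsilon_i|c_i|\}$ with no remainder to control. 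Both proofs then invoke the same multiplicative Chernoff bound with $\delta_i \propto \epsilon_i$, so they require the same regime ($\delta_i \leq 1$, guaranteed in the application since $\epsilon_i \sim \log n_3/\sqrt{n_3}$), and both yield a constant of the same shape: your $\tilde{C}_i = m_i^2 c_i^2/p_i$ versus the paper's $\tilde{C}_i = f_X^2(c_i)\,c_i^2/(4 p_i)$, which differ only in replacing a local infimum of $f_X$ by its value at $c_i$ (with a factor $1/4$ absorbing the paper's remainder-halving). Your route buys rigor at the reduction step at essentially no cost; the one point to make explicit is that $m_i$ a priori depends on $\epsilon_i$ through the interval, so to keep $\tilde{C}_i$ independent of $n$ you should fix the interval once (e.g., take $m_i = \inf\{f_X(x) : |x - c_i| \leq |c_i|\}$, valid for all $\epsilon_i \leq 1$, positive since $f_X$ is continuous and non-vanishing), as you indicate; you also handle the sign and degenerate case $c_i = 0$ more carefully than the paper, which writes the threshold without absolute values.
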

\begin{proof}
For  $i =1,2$, we want to show  that
\begin{align}
P \Bigg [ \left| \alpha_i -  \frac{\theta_i - \mu}{\sigma} \right| \geq  \epsilon_i\left(  \frac{\theta_i - \mu }{\sigma}\right)  \Bigg] \leq 2 \exp \left( \frac{- \epsilon_i ^2 C_i n_i}{3} \right). \notag
\end{align}
We recall the definition of \( \alpha_i\) from~\eqref{eq:alpha1alpha2} and write
\begin{equation*}
 \alpha_i = F_{X}^{-1}(\bE(F_i) + G_i ),
\end{equation*} 
where \( G_i \triangleq  F_i - \bE (F_i)  \)  is a random variable.
\par Note that
 \begin{align*}
F_{X}^{-1}(\bE(F_i)) =  F_{X}^{-1}\left[ \bE \left( \frac{1}{n_i}\sum_{j=n_{i-1}+1}^{n_i+n_{i-1}} Y_{j}\right) \right]  = \left( \frac{\theta_i - \mu}{\sigma} \right),
\end{align*}
where we have assumed $n_0=0$. Hence we can write
\begin{equation}
    \begin{aligned}
     P \Bigg [ \left| \alpha_i -  \frac{\theta_i - \mu}{\sigma} \right| \geq  \epsilon_i\left(  \frac{\theta_i - \mu }{\sigma}\right)  \Bigg ] = P \left [ \left| F_{X}^{-1} \left(\bE F_i +  G_i \right)   -  F_{X}^{-1}(\bE F_i)  \right| \geq  \epsilon_i\left(  \frac{\theta_i - \mu }{\sigma}\right)  \right] . \notag
    \end{aligned}
    \label{G_temp}
\end{equation}

Using the Taylor series expansion
\begin{align}
F_{X}^{-1}(\bE(F_i) + G_i ) = F_{X}^{-1}(\bE(F_i)) +  \frac{1}{f_{X}(F_{X}^{-1}(\bE(F_i)))}. G_i + \cO(G_i^2).  \label{delta_term}
\end{align}
Thus~\eqref{G_temp} becomes
\begin{align}
P \Bigg [ \left| \alpha_i -  \frac{\theta_i - \mu}{\sigma} \right| \geq  \epsilon_i\left(  \frac{\theta_i - \mu }{\sigma}\right)  \Bigg ] =   P \left [ \left| \frac{G_i}{f_X(F_X^{-1}(\bE F_i))}  + \cO (G_{i}^{2})  \right| \geq  \epsilon_i \left(  \frac{\theta_i - \mu }{\sigma}\right)  \right].\label{eq:G_temp_alpha}
\end{align}

For sufficiently large \(n_i\), \(G_i\) is small, and the term \(\mathcal{O}(G_i^2)\) is at most \(\frac{G_i}{f_X(F_X^{-1}(\bE[F_i]))}\) and hence~\eqref{eq:G_temp_alpha} becomes

\begin{equation}
    \begin{multlined}
  P \left [ \left| \frac{G_i}{f_X(F_X^{-1}(\bE F_i))}  + \cO (G_{i}^{2})  \right| \geq  \epsilon_i \left(  \frac{\theta_i - \mu }{\sigma}\right)  \right] \leq P \left[ \left| 
 G_i \right| \geq  \frac{\epsilon_i}{2} \left(  \frac{\theta_i - \mu }{\sigma}\right)f_X(F_X^{-1}(\bE F_i))   \right]. \label{eq:Taylor_result_in_G_bound}
    \end{multlined}
\end{equation}

\par  Using the Chernoff bound for  Bernoulli random variables form~\cite[Theorem 27.17]{har2011geometric}), for all \(0 \leq \delta \leq 1\), we have:
\begin{align}
P \left[ | F_i - \bE(F_i) | \geq  \delta \bE(F_i) \right] \leq 2 \exp \left( \frac{-\delta^2 n_i \bE[F_i]}{3} \right).
\end{align}
Therefore, applying this bound to~\eqref{eq:Taylor_result_in_G_bound}, we obtain:
\begin{equation}
    \begin{multlined}
     P \left[ \left|  F_i - \bE(F_i) \right| \geq  \frac{\epsilon_i}{2} \left(  \frac{\theta_i - \mu }{\sigma}\right)f_X(F_X^{-1}(\bE F_i)) \right]  \\  \leq    2 \exp \left [  -\frac{n_i}{3(\bE F_i)}  \left( \frac{\epsilon_{i}^{2}}{4} \right) \left(  \frac{\theta_i - \mu }{\sigma}\right)^2 f_{X}^{2}(F_X^{-1}(\bE F_i)) \right]
     = 2\exp \left( \frac{-n_i \epsilon_{i}^{2} \Tilde{C}_i}{3}  \right), \notag
    \end{multlined}
\end{equation}

where \( \Tilde{C_i} =  \frac{1}{4 \left(\bE F_i \right)} \left( \frac{\theta_i - \mu }{\sigma}\right)^2   f_{X}^{2}(F_X^{-1}(\bE F_i)) =  \frac{1}{4 \left(  F_{X} \left(\frac{\theta_i - \mu }{\sigma} \right) \right)} \left( \frac{\theta_i - \mu }{\sigma}\right)^2   f_{X}^{2}\left( \frac{\theta_i - \mu}{\sigma} \right) \).

Finally we can write:
\begin{equation}
    \begin{multlined}
   P \Bigg [ \left| \alpha_i -  \frac{\theta_i - \mu}{\sigma} \right| \geq  \epsilon_i\left(  \frac{\theta_i - \mu }{\sigma}\right)  \Bigg] \leq 2\exp \left( \frac{-n_i \epsilon_{i}^{2} \Tilde{C}_i}{3}  \right). \notag
    \end{multlined}
\end{equation}  
This completes the proof.
\end{proof}

\subsection{Proof of Lemma \ref{lem:mu_bound}}
From Lemma~\ref{lem:alpha_G_bound}, for \(i = 1, 2\) and \(\epsilon_i > 0\), the following inequality holds:
\begin{align*}
P \Bigg[ \left| \alpha_i - \frac{\theta_i - \mu}{\sigma} \right| \leq \epsilon_i \left( \frac{\theta_i - \mu}{\sigma} \right) \Bigg] \leq 1 - 2\exp \left( \frac{-n_i \epsilon_i^2 \Tilde{C}_i}{3} \right).
\end{align*}
This implies that as \(n_i \to \infty\), \(\alpha_i\) becomes arbitrarily close to \(\frac{\theta_i - \mu}{\sigma}\) within a margin proportional to \(\epsilon_i\).
\par For some $\epsilon_1>0$, if we have
  \begin{align}
  \left| \alpha_i - \frac{\theta_i - \mu}{\sigma} \right| \leq \epsilon_1 \left( \frac{\theta_i - \mu}{\sigma} \right), &&\label{eq:condn_alpha1minusalpha2}
 \end{align}
for $i=1,2$, then
\[
     \alpha_1 - \alpha_2 \in \left(   \left(  \frac{\theta_1 - \mu}{\sigma} \right) - \left(  \frac{\theta_2 - \mu}{\sigma} \right) \right) (1 \pm 2\epsilon_1),
\]
and
\begin{align}
 \frac{1}{\left (  \left(  \frac{\theta_1 - \mu}{\sigma} \right) - \left(  \frac{\theta_2 - \mu}{\sigma} \right) \right ) } \frac{1}{\left(  1 + 2\epsilon_1 \right)} \leq  \frac{1}{\left( \alpha_1 - \alpha_2 \right)} \leq \frac{1}{  \left( \left(  \frac{\theta_1 - \mu}{\sigma} \right) - \left(  \frac{\theta_2 - \mu}{\sigma} \right)  \right) } \frac{1}{\left(  1 - 2\epsilon_1 \right)}, \label{inversealpha1-alpha2bound}
\end{align}
or equivalently,
\begin{equation}
   \frac{ \left( \left( \frac{\theta_1 - \mu}{\sigma} \right) \theta_2 - \left( \frac{\theta_2 - \mu}{\sigma} \right) \theta_1 \right) (1 - 2 \epsilon_1) }{ \left( \left( \frac{\theta_1 - \mu}{\sigma} \right) - \left( \frac{\theta_2 - \mu}{\sigma} \right) \right) (1 + 2 \epsilon_1)} - \mu  \leq  \left( \frac{\alpha_1\theta_2 - \alpha_2 \theta_1}{ \alpha_1 - \alpha_2} \right) - \mu  \leq \frac{ \left( \left( \frac{\theta_1 - \mu}{\sigma} \right) \theta_2 - \left( \frac{\theta_2 - \mu}{\sigma} \right) \theta_1 \right) (1 + 2 \epsilon_1) }{ \left( \left( \frac{\theta_1 - \mu}{\sigma} \right) - \left( \frac{\theta_2 - \mu}{\sigma} \right) \right) (1 - 2 \epsilon_1)}  - \mu.
\label{epsilon_bound}
\end{equation}
Simplifying, we conclude that if~\eqref{eq:condn_alpha1minusalpha2} holds, then
\begin{align}
\frac{-4\epsilon_1}{(1 + 2 \epsilon_1)}  \cdot \mu  \leq \hat{\mu}_c - \mu  \leq   \frac{4 \epsilon_1}{(1 - 2 \epsilon_1)}  \cdot  \mu. 
\end{align}
This implies that for all sufficiently small $\epsilon_1$,
\begin{align*}
P \Bigg[ \left| \mu - \hat{\mu}_c \right| \geq |\mu| \frac{4 \epsilon_1}{(1 - 2 \epsilon_1} ) \Bigg] \leq  P \Bigg[ \left| \alpha_1 - \frac{\theta_1 - \mu}{\sigma} \right| \geq \epsilon_1 \left( \frac{\theta_1 - \mu }{\sigma} \right) \Bigg] +   P \Bigg[ \left| \alpha_2 - \frac{\theta_2 - \mu}{\sigma} \right| \geq \epsilon_1 \left( \frac{\theta_2 - \mu }{\sigma} \right) \Bigg]. \label{mu-muhat-exp}
\end{align*}
Then, applying Lemma~\ref{lem:alpha_G_bound}, we can write:

\begin{equation}
\begin{aligned}
 P \Bigg[ \left| \mu - \hat{\mu}_c \right| \geq |\mu| \frac{4 \epsilon_1}{(1 - 2 \epsilon_1} ) \Bigg] & \leq 2\exp \left( \frac{-n_1 \epsilon_{1}^{2} \Tilde{C}_1}{3}  \right) + 2\exp \left(  \frac{-n_2 \epsilon_{1}^{2} \Tilde{C}_2}{3}  \right) \\
  & =  2 \exp \left( \frac{-n_1 \epsilon_{1}^{2} \Tilde{C}_1}{3}  \right) \left[ 1+ \exp \left ( \frac{-n_1 \epsilon_{1}^{2}\eta } {3}\right)  \right ].
\end{aligned}
\end{equation}
Let \(n_1 = n_2 = \frac{n_3}{\log n_3}\) and \(\epsilon_1= \frac{\log n_3}{\sqrt{n_3}}\). Then, we have:

\begin{equation}
\begin{aligned}
P \left [  \vert    \mu - \hat{\mu}_c   \vert \geq |\mu| \left(  \frac{4 \log n_3}{ \sqrt{n_3} - 2 \log n_3} \right ) \right] \leq   2 \exp \left( \frac{ -\Tilde{C}_1 \log n_3}{3} \right) \left[ 1+ \exp \left(  \frac{ -\eta \log n_3 }{3}\right)  \right ].
\end{aligned}
\label{mu_bound_lemma}
\end{equation}

Now from~\eqref{mu_bound_lemma}, we conclude that \(\mu\) converges in probability to \(\mu\) as \(n_3 \to \infty\). This completes the proof. \\
\qed


\subsection{Proof of Lemma \ref{lem:sigma_bound}}

Now we recall the definition of \(\hat{\sigma}_c\) and write
\begin{align}
\left(   \hat{\sigma}_c  - \sigma  \right)  =  \sigma  \left ( \frac{\theta_1 - \theta_2}{\sigma(\alpha_1 - \alpha_2)} - 1 \right ). \label{sigma - sigmahat} 
\end{align}
The approach is very similar to that of the Lemma \ref{lem:mu_bound} and we omit minor calculations. If we have
\[
\left| \alpha_i - \frac{\theta_i - \mu}{\sigma} \right| \leq \epsilon_1 \left( \frac{\theta_i - \mu}{\sigma} \right),
\]
for $\epsilon_1>0$ and $i=1,2$, then
\begin{align*}
   \frac{- 2\epsilon_1 \sigma }{\left(  1 + 2\epsilon_1 \right)} \leq   \hat{\sigma}_c  - \sigma \leq    \frac{2\epsilon_1 \sigma }{\left(  1 - 2\epsilon_1 \right)}.
\end{align*}

We can apply the same union bound approach used in proving Lemma~\ref{lem:mu_bound}, yielding:

\begin{align*}
P \Bigg [  \vert    \sigma - \hat{\sigma}_c  \vert \geq  \vert \sigma \vert \frac{ 2\epsilon_1 }{\left(  1 - 2\epsilon_1 \right)} \Bigg] \leq  P \Bigg[   \left | \alpha_1 -  \frac{\theta_1 - \mu}{\sigma}    \right |  \geq \epsilon_1 \left(  \frac{\theta_1 - \mu }{\sigma}\right)   \Bigg] +  P \Bigg[    \left |    \alpha_2 -  \frac{\theta_2 - \mu}{\sigma}    \right |   \geq \epsilon_1 \left(  \frac{\theta_2 - \mu }{\sigma}\right)  \Bigg].
\end{align*}
Using the value of \( \epsilon_1 = \frac{\log n_3}{\sqrt{n_3}} \) and following a similar approach as in Lemma~\ref{lem:mu_bound}, we apply Lemma~\ref{lem:alpha_G_bound} to obtain the following:
\begin{align*}
P \Bigg [  \vert \sigma - \hat{\sigma}_c  \vert \geq |\sigma|    \frac{2\log n_3 }{ \sqrt{n_3} - 2\log n_3}  \Bigg]  \leq 2  \exp \left( \frac{ -\Tilde{C}_1 \log n_3}{3} \right)  \left[ 1+ \exp \left(  \frac{ -\eta \log n_3 }{3}\right)  \right ].
\end{align*}
Where \(\Tilde{C}_1\) and \(\eta\) are the constants as defined in Lemma~\ref{lem:mu_bound}. Hence, as \(n \to \infty\), \(\hat{\sigma}_c\) converges in probability to $\sigma$. This concludes the proof of the Lemma.  
\qed

\begin{remark}
In the proofs of Lemmas~\ref{lem:mu_bound} and~\ref{lem:sigma_bound} we only require that
$\hat{\mu}_c$ and $\hat{\sigma}_c$ converge in probability to $\mu$ and $\sigma$,
respectively. This is ensured as soon as the product $n_1 \epsilon_1^2$ tends to infinity, since the
deviation probabilities then go to zero for every fixed threshold. In particular, our choice
\[
n_1 = n_2 = \frac{n_3}{\log n_3}, 
\qquad 
\epsilon_1 = \frac{\log n_3}{\sqrt{n_3}},
\]
yields $n_1 \epsilon_1^2 = \log n_3 \to \infty$ and is therefore sufficient for the
consistency in probability established in Lemmas~\ref{lem:mu_bound} and~\ref{lem:sigma_bound}.
One may also consider the more general allocation
\[
n_1 = n_2 = \frac{n_3}{(\log n_3)^{1-\delta}}, 
\qquad 0 < \delta < 1,
\]
with the same choice $\epsilon_1 = \frac{\log n_3}{\sqrt{n_3}}$.
In this case
\[
n_1 \epsilon_1^2
=
\frac{n_3}{(\log n_3)^{1-\delta}}
\cdot
\frac{(\log n_3)^2}{n_3}
=
(\log n_3)^{1+\delta},
\]
so the deviation probabilities $P(A_{n_3})$ satisfy
\[
P(A_{n_3}) \leq \exp\!\big( - C (\log n_3)^{1+\delta} \big), \qquad C>0.
\]
The series
\[
\sum_{n_3=1}^\infty \exp\!\big( - C (\log n_3)^{1+\delta} \big)
\]
is then finite,
and the Borel--Cantelli lemma (see, e.g., \cite[Theorem~2.3.7]{durrett2019probability})
implies that the events $A_{n_3}$ occur only finitely many times almost surely.
Under this alternative scaling, the deviation probabilities decay fast enough to
yield a stronger conclusion for Lemmas~\ref{lem:mu_bound} and~\ref{lem:sigma_bound},
namely almost sure convergence instead of convergence in probability.
Since none of our results require this stronger mode of convergence, we work with 
the simpler choice 
\(
n_1 = n_2 = n_3 / \log n_3,
\)
which already ensures the consistency guarantees established in 
Lemmas~\ref{lem:mu_bound} and~\ref{lem:sigma_bound}.
\end{remark}


\end{document}